\global\mdfdefinestyle{myframe}{leftmargin=.75in,rightmargin=.75in,linecolor=black,linewidth=1.5pt,innertopmargin=10pt,innerbottommargin=10pt}
\global\mdfdefinestyle{eventframe}{leftmargin=1.05in,rightmargin=1.05in,linecolor=black,linewidth=0.75pt,innertopmargin=5pt,innerbottommargin=5pt}
\newtheorem{theorem}{Theorem}[section]
\newtheorem{lemma}[theorem]{Lemma}
\newtheorem{corollary}[theorem]{Corollary}
\newtheorem{claim}[theorem]{Claim}
\DeclareMathOperator{\polylog}{polylog}
\DeclareMathOperator{\Exp}{{\mathbb E}}
\newcommand{\eqdef}{\stackrel{\text{\tiny\rm def}}{=}}
\newcommand{\algfontsize}{\small}
\newcommand{\R}{\mathbb R}
\newcommand{\eps}{\epsilon}
\newcommand{\CC}{{\mathcal C}}
\newcommand{\DD}{{\mathcal D}}
\newcommand{\EE}{{\mathcal E}}
\newcommand{\oxygen}[1]{\vphantom{\raisebox{3pt}{\mbox{#1}}\raisebox{-2pt}{\mbox{#1}}}#1}
\newcommand{\globalAlgorithm}{\mbox{\tt GlobalAlg}}
\newcommand{\matchHeavy}{\mbox{\tt MatchHeavy}}
\newcommand{\parallelAlgorithm}{\mbox{\tt ParallelAlg}}
\newcommand{\globalPhase}{\mbox{\tt EmulatePhase}}
\newcommand{\localPhase}{\mbox{\tt LocalPhase}}
\newcommand{\parAlg}{\parallelAlgorithm}
\newsavebox{\algDescBox}
\savebox{\algDescBox}{\phantom{\bf Algorithm 5:}}
\newcommand{\algdesc}[1]{\newline \leavevmode\usebox{\algDescBox}\,\,#1}
\newcommand{\istar}{{i_{\star}}}%
\newcommand{\threshold}{\Delta}
\newcommand{\phases}{\tau}
\newcommand{\machines}{m}
\newcommand{\prR}{\mu_R}
\newcommand{\prH}{\mu_H}
\newcommand{\prHmult}{\alpha}
\newcommand{\prF}{\mu_F}
\newcommand{\baseR}{\chi}
\newcommand{\baseH}{\psi}
\newcommand{\undefine}[1]{\let#1\ThisCommandWasUndefined}
\newcommand{\range}[1]{\left\llbracket{#1}\right\rrbracket}
\newcommand{\degEst}[1]{\widehat{d}_{#1}}
\newcommand{\event}[1]{Event~\ref{event:#1}}
\newenvironment{NewEvent}[1]{\vspace*{2mm}\newline\begin{minipage}{\textwidth}\begin{mdframed}[style=eventframe]\begin{center}\refstepcounter{EventsCounter}\label{event:#1}{\bf \quad Event \theEventsCounter}\end{center}
\setlength{\belowdisplayskip}{0pt}\setlength{\belowdisplayshortskip}{0pt}\setlength{\abovedisplayskip}{3pt}\setlength{\abovedisplayshortskip}{3pt}
\vspace{5pt}\noindent}%
{\end{mdframed}\end{minipage}\vspace*{2mm}}
\newcommand{\allEvents}{Events~1--\total{EventsCounter}}
\title{Round Compression for Parallel Matching Algorithms%
\iffalse\thanks{A preliminary version of this paper has been posted at arXiv as CoRR abs/1707.03478, July 2017.}\fi%
}
\author{Artur Czumaj \\ University of Warwick \\a.czumaj@warwick.ac.uk
\and
Jakub Łącki \\ Google Research, New York \\ jlacki@google.com
\and
Aleksander Mądry \\ MIT \\ madry@mit.edu
\and\\
Slobodan Mitrovi\'c \\ EPFL \\ slobodan.mitrovic@epfl.ch
\and\\
Krzysztof Onak \\ IBM Research \\ konak@us.ibm.com
\and\\
Piotr Sankowski \\ University of Warsaw \\ sank@mimuw.edu.pl}
\date{November 2017}
\date{}
\begin{document}

\begin{titlepage}
\maketitle
\thispagestyle{empty}
\begin{abstract}
For over a decade now we have been witnessing the success of \emph{massive parallel computation} (MPC) frameworks, such as MapReduce, Hadoop, Dryad, or Spark. One of the reasons for their success is the fact that these frameworks are able to accurately capture the nature of large-scale computation. In particular, compared to the classic distributed algorithms or PRAM models, these frameworks allow for much more local computation. The fundamental question that arises in this context is though: can we leverage this additional power to obtain even faster parallel algorithms?

A prominent example here is the \emph{maximum matching} problem---one of the most classic graph problems.
It is well known that in the PRAM model one can compute a 2-approximate maximum matching in $O(\log{n})$ rounds. However, the exact complexity of this problem in the MPC framework is still far from understood. Lattanzi et al.~(SPAA 2011) showed that if each machine has $n^{1+\Omega(1)}$ memory, this problem can also be solved $2$-approximately in a constant number of rounds. These techniques, as well as the approaches developed in the follow up work, seem though to get stuck in a fundamental way at roughly $O(\log{n})$ rounds once we enter the (at most) near-linear memory regime. It is thus entirely possible that in this regime, which captures in particular the case of sparse graph computations, the best MPC round complexity matches what one can already get in the PRAM model, without the need to take advantage of the extra local computation power.

In this paper, we finally refute that perplexing possibility. That is, we break the above $O(\log n)$ round complexity bound even in the case of \emph{slightly sublinear} memory per machine. In fact, our improvement here is {\em almost exponential}: we are able to deliver a $(2+\epsilon)$-approximation to maximum matching, for any fixed constant $\epsilon>0$, in $O\left((\log \log n)^2\right)$ rounds.

To establish our result we need to deviate from the previous work in two important ways that are crucial for exploiting the power of the MPC model, as compared to the PRAM model. Firstly, we use \emph{vertex--based} graph partitioning, instead of the edge--based approaches that were utilized so far. Secondly, we develop a technique of \emph{round compression}. This technique enables one to take a (distributed) algorithm that computes an $O(1)$-approximation of maximum matching in $O(\log n)$ \emph{independent} PRAM phases and implement a super-constant number of these phases in only a constant number of MPC rounds.
\end{abstract}

\end{titlepage}

\section{Introduction}

Over the last decade, massive parallelism became a major paradigm in computing, and we have witnessed the deployment of a number of very successful massively parallel computation frameworks, such as MapReduce~\cite{dg04,dg08}, Hadoop~\cite{White:2012}, Dryad~\cite{Isard:2007}, or Spark~\cite{ZahariaCFSS10}. This paradigm and the corresponding models of computation are rather different from classical parallel algorithms models considered widely in literature, such as the PRAM model. In particular, in this paper,  we study the {\em Massive Parallel Computation (MPC)} model (also known as Massively Parallel Communication model) that was abstracted out of capabilities of existing systems, starting with the work of Karloff, Suri, and Vassilvitskii~\cite{KarloffSV10,goodrich2011sorting,BeameKS13,AndoniNOY14,BeameKS14}. The main difference between this model and the PRAM model is that the MPC model allows for much more
(in principle, unbounded) local computation. This enables it to capture a more ``coarse--grained,'' and thus, potentially, more meaningful aspect of parallelism.  It is often possible to simulate one clock step of PRAM in a constant number of rounds on MPC~\cite{KarloffSV10,goodrich2011sorting}. This implies that algorithms for the PRAM model usually give rise to MPC algorithms without incurring any asymptotic blow up in the number of parallel rounds.  As a result, a vast body of work on PRAM algorithms naturally translates to the new model.

It is thus natural to wonder: Are the MPC parallel round bounds ``inherited'' from the PRAM model tight? In particular, which problems can be solved in significantly {\em smaller} number of MPC rounds than what the lower bounds established for the PRAM model suggest?

It is not hard to come up with an example of a problem for which indeed the MPC parallel round number is much smaller than its PRAM round complexity. For instance, computing the parity of $n$ Boolean values takes only $O(1)$ parallel rounds in the MPC model when space per machine is $n^{\Omega(1)}$, while on PRAM it provably requires $\Omega(\log n/\log \log n)$ time~\cite{Beame:1987} (as long as the total number of processors is polynomial). However, the answer is typically less obvious for other problems. This is particularly the case for graph problems, whose study in a variant of the MPC model was initiated already by Karloff et al.~\cite{KarloffSV10}.

In this paper, we focus on one such problem, which is also one of the most central graph problems both in sequential and parallel computations: maximum matching. Maximum matchings have
been  the cornerstone of algorithmic research since 1950s and their study inspired many important ideas, including the complexity class~P \cite{Edmonds65paths}.
In the PRAM model we can compute $(1+\epsilon)$-approximate matching in $O(\log n)$ rounds~\cite{Lotker:2015} using randomization. Deterministically, a $(2+\epsilon)$-approximation can be computed in $O\left(\log^2 n\right)$ rounds~\cite{pram2017}. We note that these results hold in a distributed message passing setting, where
 processors are located at graph nodes and can communicate only with neighbors. In such a distributed setting,  $\Omega\left(\sqrt{\log  n / \log{\log{n}}}\right)$ time lower
 bound is known for computing any constant approximation to maximum matching~\cite{Kuhn:2006}.

So far, in the MPC setting, the prior results are due to Lattanzi, Moseley, Suri, and Vassilvitskii~\cite{LattanziMSV11}, Ahn and Guha~\cite{AhnG15} and Assadi and Khanna~\cite{AssadiK17}. Lattanzi et al.~\cite{LattanziMSV11} put forth  algorithms for several graph problems, such as connected components, minimum spanning tree, and maximum matching problem, that were based on a so-called \emph{filtering technique}. In particular, using this technique, they have obtained an algorithm that can compute a $2$-approximation to maximum matching in $O(1/\delta)$ MPC rounds, provided $S$, the space per machine, is significantly larger than the total number of vertices $n$, that is $S = \Omega\left(n^{1+\delta}\right)$, for any constant $\delta \in (0,1)$. Later on, Ahn and Guha~\cite{AhnG15} provided an improved algorithm that computes a $(1+\eps)$-approximation in $O(1/(\delta \eps))$ rounds, provided $S=\Omega\left(n^{1+\delta}\right)$, for any constant $\delta>0$. Both these results, however,  crucially require that space per machine is significantly superlinear in $n$, the number of vertices.
In fact, if the space $S$ is linear in $n$, which is a very natural setting for massively parallel graph algorithms, the performance of both these algorithms degrades to $O(\log n)$ parallel rounds, which matches what was known for the PRAM model\iffalse and can be achieved in the MPC model for $S=n^{\Omega(1)}$\fi.
Recently, Assadi and Khanna~\cite{AssadiK17} showed how to construct randomized composable coresets of size $\tilde{O}(n)$ that give an $O(1)$-approximation for maximum matching. Their techniques apply to the MPC model only if the space per machine is $\tilde{O}(n \sqrt{n})$.

We also note that the known PRAM maximal independent set and maximal matching algorithms~\cite{Luby86,AlonBI86,II86} can be used to find a maximal matching (i.e., 2-approximation to maximum matching) in $O(\log n)$ MPC rounds as long as space per machine is at least $n^{\Omega(1)}$ (i.e., $S \ge n^{c}$ for some constant $c > 0$). We omit further details here, except mentioning that a more or less direct simulation of those algorithms is possible via an $O(1)$-round sorting subroutine~\cite{goodrich2011sorting}.

The above results give rise to the following fundamental question: Can the maximum matching be (approximately) solved in $o(\log n)$ parallel rounds in $O(n)$ space per machine?
The main result of this paper is an affirmative answer to that question. We show that, for any $S = \Omega(n)$, one can obtain an $O(1)$-approximation to maximum matching using $O\left((\log \log n)^2\right)$ parallel MPC rounds. So, not only do we break the existing $\Omega(\log n)$ barrier, but also provide an exponential improvement over the previous work. Our algorithm can also provide a $(2+\eps)$, instead of $O(1)$-approximation, at the expense of the number of parallel rounds increasing by a factor of $O(\log (1/\eps))$. Finally, our approach can also provide algorithms that have $o(\log n)$ parallel round complexity also in the regime of $S$ being (mildly) sublinear.
For instance, we obtain $O\left((\log \log n)^2\right)$ MPC rounds even if space per machine is $S = n / (\log n)^{O(\log \log n)}$.
The exact comparison of our bounds with previous results is given in Table~\ref{table-results}.

\subsection{The model}
In this work, we adopt a version of the model introduced by Karloff, Suri, and Vassilvitskii~\cite{KarloffSV10} and refined in later works~\cite{goodrich2011sorting,BeameKS13,AndoniNOY14}. We call it \emph{massive parallel computation} (MPC), which is a mutation of the name proposed by Beame et al.~\cite{BeameKS13}.

In the MPC model, we have $\machines$ machines at our disposal and each of them has $S$ words of space.
Initially, each machine receives its share of the input. In our case, the input is a collection $E$ of edges and each machine receives approximately $|E|/\machines$ of them.

The computation proceeds in \emph{rounds}.
During the round, each of the machines processes its local data without communicating with other machines.
At the end of each round, machines exchange messages.
Each message is sent only to a single machine specified by the machine that is sending the message.
All messages sent and received by each machine in each round have to fit into the machine's local memory. Hence, their total length is bounded by $S$.\footnote{This for instance allows a machine to send a single word to $S/100$ machines or $S/100$ words to one machine, but not $S/100$ words to $S/100$ machines if $S = \omega(1)$, even if the messages are identical.}
This in particular implies that the total communication of the MPC model is bounded by $\machines \cdot S$ in each round.
The messages are processed by recipients in the next round.

At the end of the computation, machines collectively output the solution. The data output by each machine has to fit in its local memory. Hence again, each machine can output at most $S$ words.

\paragraph{The range of values for $S$ and $\machines$.} If the input is of size $N$, one usually wants $S$ sublinear in the $N$, and the total space across all the machines to be at least $N$---so the input fits onto the machines---and ideally not much larger. Formally, one usually considers $S \in \Theta\left(N^{1 - \epsilon}\right)$, for some $\epsilon > 0$.

In this paper, the focus is on graph algorithms. If $n$ is the number of vertices in the graph, the input size can be as large as $\Theta\left(n^2\right)$. Our parallel algorithm requires $\Theta(n)$ space per machine (or even slightly less), which is polynomially less than the size of the input for dense graphs.

\paragraph{Sparse graphs.}
Many practical large graphs are believed to have only $O(n)$ edges. One natural example is social networks, in which most participants are likely to have a bounded number of friends. The additional advantage of our approach is that it allows for a small number of processing rounds even if a sparse input graph does not fit onto a single machine. If a small number---say, $f(n)$---of machines is needed even to store the graph, our algorithm still requires only $O\left((\log \log n)^2 + \log f(n)\right)$ rounds for $O(n/f(n))$ space per machine.

\paragraph{Communication vs.\ computation complexity.} The main focus of this work is the number of (communication) rounds required to finish computation. Also, even though we do not make an effort to explicitly bound it, it is apparent from the design of our algorithms that every machines performs $O(S \polylog{S})$ computation steps locally. This in particular implies that the overall work across all the machines is $O(r N \polylog{S})$, where $r$ is the number of rounds and $N$ is the input size (i.e., the number of edges).

The total communication during the computation is $O(r N)$ words. This is at most $O\left(rn^2\right)$ words and it is known that computing a $(1+\eps)$-approximate matching in the message passing model with $\Theta(n)$ edges per player may require $\Omega\left(n^2/(1+\eps)^2\right)$ bits of~communication~\cite{HuangRVZ15}. Since our value of $r$ is $O\left((\log \log n)^2\right)$ when $\Theta(n)$ edges are assigned to each player, we lose a factor of $\tilde\Theta(\log n)$ compared to this lower bound if words (and vertex identifiers) have $\Theta(\log n)$ bits.

\subsection{Our results}
In our work, we focus on computing an $O(1)$-approximate maximum matching in the MPC model. We collect our results and compare to the previous work in Table~\ref{table-results}.
\begin{table}[h!]\algfontsize
\newcommand{\leftPart}{(\log \log n)^2 + \log f(n)}%
\newcommand{\rightPart}{\log(1/\eps)}%
\centering
\begin{tabular}{|c|c|c|c|c|}
 \hline
 \bf \oxygen{Source} & \bf Approx. & \bf Space & \bf Rounds & \bf Remarks \\
 \hline
 \multirow{2}{*}{\cite{LattanziMSV11}} & \multirow{2}{*}{2} & \oxygen{$n^{1+\Omega(1)}$} & \oxygen{$O(1)$} & \multirow{2}{*}{Maximal matching} \\
 \cline{3-4}
 & & \oxygen{$O(n)$} & \oxygen{$O(\log n)$} & \\
 \hline
 \cite{AhnG15} & $1+\eps$ & \oxygen{$O\left(n^{1+1/p}\right)$} & \oxygen{$O(p/\eps)$} & $p > 1$ \\
 \hline
 & \multirow{2}{*}{2} & \multirow{2}{*}{\oxygen{$n^{\Omega(1)}$}} & \multirow{2}{*}{\oxygen{$O(\log n)$}} & Maximal matching \\
 &                    &                                           &                                       & Simulate~\cite{Luby86,AlonBI86,II86} \\
 \hline
 & $O(1)$ & \multirow{2}{*}{\oxygen{$O(n)$}} & \oxygen{$O\left((\log \log n)^2\right)$} &\\
 \cline{2-2}\cline{4-4}
 & $2+\eps$ & & \oxygen{$O\left((\log \log n)^2 \cdot \rightPart\right)$} & \oxygen{$\eps \in (0,1/2)$}\\
 \cline{2-4}
 here & $O(1)$ & \multirow{2}{*}{$O(n)/f(n)$} & \oxygen{$O\left(\leftPart\right)$} & $2 \le f(n) = O\left(n^{1/2}\right)$
 \\
 \cline{2-2}\cline{4-4}
 & $2+\eps$ & & \oxygen{$O\left((\log \log n)^2 + \log f(n)\right)\cdot\rightPart$} & \\
 \hline
\end{tabular}
\caption{Comparison of our results for computing approximate maximum size matchings to the previous results for the MPC model.}
\label{table-results}
\end{table}
The table presents two interesting regimes for our algorithms. On the one hand, when the space per machine is $S=O(n)$, we obtain an algorithm that
requires $O((\log \log n)^2)$ rounds. This is the first known algorithm that, with linear space per machine, breaks
 the $O(\log n)$ round barrier.  On the other hand, in the mildly sublinear regime of space per machine, i.e., when $S=O(n/f(n))$, for some function $f(n)$ that is $n^{o(1)}$, we obtain an algorithm that still requires $o(\log n)$ rounds. This, again is the first such result in this regime. In particular, we prove the following result.
\begin{theorem}\label{theorem-main}
	 There exists an MPC algorithm that constructs an $O(1)$-approximation to maximum matching with constant probability in $O\left((\log \log n)^2 + \max{\left(\log{\tfrac{n}{S}}, 0 \right)} \right)$ rounds, where $S=n^{\Omega(1)}$ is the amount of space on each machine.
\end{theorem}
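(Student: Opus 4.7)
The plan is to start from a known PRAM-style algorithm that produces an $O(1)$-approximate matching in $\Theta(\log n)$ phases---for instance, a randomized peeling scheme in which every phase extracts a large matching from a random subgraph and deletes its endpoints, so that a constant fraction of the currently matchable vertices is resolved each time. In each such phase every vertex needs only local information, and hence a straight MPC simulation already costs one round per phase, recovering only the trivial $O(\log n)$ bound. The whole game is therefore to emulate many PRAM phases inside a single MPC round, which is the \emph{round compression} idea highlighted in the abstract; the $(\log\log n)^2$ figure is consistent with $\Theta(\log\log n)$ levels of compression, each costing $\Theta(\log\log n)$ MPC rounds.

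The mechanism I would use is a random \emph{vertex-based} partition, as opposed to the edge-based filtering of prior work. In each compression step we assign every vertex of the current residual graph to one of $\Theta(n/S)$ machines uniformly at random and ship every edge to the machine holding (say) the smaller-labeled endpoint, so that every machine ends up with a random vertex-induced subgraph on $\approx S$ vertices. Because the base algorithm looks only at the subgraph among the currently active vertices, each machine can now simulate an entire batch of consecutive PRAM phases locally, with no inter-machine communication inside the batch. A concentration argument over the random partition should show that a batch of $\Theta(\log\log n)$ phases can be compressed into $O(1)$ MPC rounds, and iterating the compression recursively on the progressively smaller residual problem gives $O(\log\log n)$ levels of recursion, each of depth $O(\log\log n)$, for the claimed total of $O((\log\log n)^2)$ rounds. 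The extra additive $\max(\log(n/S),0)$ term handles the regime $S<n$ in which a single vertex's neighborhood does not fit on one machine: I would prepend $O(\log(n/S))$ filtering/sparsification rounds in the style of Lattanzi et al.\ that shrink the effective problem until the main compression routine becomes applicable.

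The main obstacle I anticipate is the correctness of the compressed simulation. Two endpoints of the same edge can land on different machines, so the local executions on different machines see disjoint slices of the random graph and of the internal random choices; the composite trajectory obtained by stitching their outputs together need not coincide with any coherent PRAM run. To control this, one must couple the ``true'' PRAM execution on the original graph with the composite execution and show that the residual graph after an emulated batch is distributionally close to the true one, up to a constant-factor slack in the matched mass. The likely route is to design the base algorithm so that each phase depends only on quantities---typically degree estimates and samples of incident edges---that concentrate sharply under a random vertex partition, and then to bound, via a union bound over the $O(\log\log n)$ levels of recursion, the probability that any emulated phase deviates from its PRAM counterpart by more than a small constant factor. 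Ensuring that these small per-phase losses do not compound across the recursion into a super-constant blow-up in the approximation ratio is, in my estimation, where the bulk of the technical work will go.
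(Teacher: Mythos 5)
Your high--level plan matches the paper's in outline (vertex--based random partition, many phases of a peeling algorithm emulated locally per MPC round, a separate treatment of the low--threshold endgame giving the additive $\log(n/S)$ term), but the concrete mechanism you propose has a genuine gap at its core. First, the partition: the paper keeps, on machine $i$, only the \emph{induced} subgraph $G[V_i]$ --- an edge survives the round only if \emph{both} endpoints land on the same machine --- with the number of machines set to roughly $\sqrt{n\threshold/S}$ so that each induced subgraph has $O(S)$ edges. Your rule of shipping every edge to the machine of its smaller-labeled endpoint destroys both properties it needs: a single machine may receive far more than $S$ edges, and, more importantly, the matchings computed on different machines are no longer vertex-disjoint, so their union need not be a matching at all. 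That is exactly why the ``stitching/coupling with a coherent PRAM run'' problem you identify as the main obstacle appears in your scheme; in the paper it never arises, because with induced subgraphs validity is automatic, and the approximation guarantee is obtained not by coupling to a global execution but by a direct charging argument: in every emulated phase the expected size of the matching found is a constant fraction of the number of vertices removed (Lemma~\ref{lem:modified_phase_matching_size}), and summing this over all phases charges every removed vertex (hence every optimal edge) to the output matching, so there is no compounding of per-phase losses to control.

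Second, and this is the heart of the paper that your sketch leaves unaddressed: after even one locally simulated phase, the conditional distribution of the surviving vertices' machine assignments is no longer uniform or independent if the local computation inspects neighborhoods in the natural way (exact local degrees, or a hard degree threshold). ``Degree estimates that concentrate under a random vertex partition'' is not enough; one needs the \emph{conditional} distribution, given everything removed so far, to remain independent and near-uniform so that the next compressed phase can be analyzed at all. The paper achieves independence by estimating degrees only against a small random reference set $R_i$ that is then discarded together with $H_i$ and $F_i$ (Lemma~\ref{lem:independence}), and achieves near-uniformity by replacing the hard threshold with the smooth selection probability $\prH$, so that a surviving vertex would have been (non-)selected with almost the same probability on every machine (Lemma~\ref{lem:preserving_near_uniformity}); the per-phase blow-up of the uniformity error is what limits a batch to $\Theta\left(\log(\threshold/\machines)/\log\log n\right)$ phases and yields $O((\log\log n)^2)$ rounds, with the $\max(\log(n/S),0)$ term coming from directly simulating \globalAlgorithm{} once the threshold drops below about $(n/S)\cdot\mathrm{polylog}(n)$, not from prepended Lattanzi-et-al.-style filtering. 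Without these (or equivalent) ideas, your proposed coupling argument is precisely the step that would fail.
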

As a corollary, we obtain the following result that provides nearly $2$-approximate maximum matching.
\begin{corollary}\label{corollary-main}
         For any $\eps \in (0,\frac{1}{2})$, there exists an MPC algorithm that constructs a $(2+\eps)$-approximation to maximum matching with $99/100$ probability in $O\left((\log \log n)^2 + \max{\left(\log{\tfrac{n}{S}}, 0 \right)} \right) \cdot \log(1/\eps)$ rounds, where $S=n^{\Omega(1)}$ is the amount of space on each machine.
\end{corollary}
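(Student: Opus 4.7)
The plan is to obtain the $(2+\eps)$-approximation by iteratively applying Theorem~\ref{theorem-main} on the residual subgraph of unmatched vertices. Concretely: start with $M=\emptyset$ and, for $i=1,\ldots,k$ with $k=\Theta(\log(1/\eps))$, apply Theorem~\ref{theorem-main} to the subgraph $G_i = G[V\setminus V(M)]$ induced by the currently unmatched vertices to obtain a matching $M_i$, then set $M \leftarrow M \cup M_i$. Filtering out matched vertices is a trivial one-round MPC operation (each machine discards edges whose endpoints have been matched in a previous iteration), so each outer iteration inherits the round complexity $O\bigl((\log\log n)^2 + \max(\log(n/S),0)\bigr)$ of the inner call.

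For the approximation analysis, let $M^*$ be a maximum matching in $G$ and let $\alpha=O(1)$ denote the approximation ratio delivered by Theorem~\ref{theorem-main}. Since $|V(M)|=2|M|$, the standard bound $\MM(G[V\setminus U]) \geq \MM(G) - |U|$ (each removed vertex destroys at most one edge of $M^*$) gives $\MM(G_{i+1}) \geq |M^*| - 2|M|$. A successful invocation therefore produces $|M_{i+1}| \geq (|M^*|-2|M|)/\alpha$. Writing $y_i = 1 - 2|M|/|M^*|$, this rearranges to $y_{i+1} \leq (1 - 2/\alpha)\,y_i$, so after $k$ successful iterations $y_k \leq (1 - 2/\alpha)^k \leq \eps$ for $k = \Theta(\log(1/\eps))$; that is, $|M| \geq (1-\eps)|M^*|/2 \geq |M^*|/(2 + O(\eps))$. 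Rescaling $\eps$ by a constant yields the advertised $(2+\eps)$ factor.

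For the probability, each call to Theorem~\ref{theorem-main} succeeds with some absolute constant probability $p_0$. The key observation is that a failed iteration cannot harm the matching built so far: in the worst case $y_{i+1}\leq y_i$, so only successful iterations drive the geometric decrease. Choosing the hidden constant in $k = \Theta(\log(1/\eps))$ large enough, a Chernoff bound on $k$ independent Bernoulli($p_0$) trials guarantees that with probability at least $99/100$ at least $p_0 k/2 = \Theta(\log(1/\eps))$ invocations succeed, which is still enough to push $y_k$ below $\eps$. The main point to verify is therefore not the concentration (which is straightforward once failures are treated as no-ops) but the ``removed vertices decrease the optimum by at most their count'' inequality that underlies the recursion. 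Multiplying the per-iteration round complexity by $k = O(\log(1/\eps))$ yields the bound stated in the corollary.
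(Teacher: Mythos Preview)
Your proposal is correct and follows essentially the same approach as the paper (see Corollary~\ref{corollary:2_apx}): iterate the $O(1)$-approximation on the residual graph of unmatched vertices, observe that failed iterations are harmless, and argue that $O(\log(1/\eps))$ successful iterations drive the deficit below $\eps$. The only cosmetic differences are that the paper tracks the remaining maximum matching size directly (which contracts by a factor $1-1/c$ per success, using that each matched edge kills at least one optimal edge) rather than your potential $y_i = 1 - 2|M|/|M^*|$, and that the paper appeals to the median of the binomial rather than a Chernoff bound for the success probability; both variants are standard and equivalent in strength.
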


Assadi et al.~\cite{ABBMS17} observe that one can use a technique of McGregor~\cite{McGregor05} to extend the algorithm to compute a $(1+\eps)$-approximation in 
$O((\log \log n)^2)\cdot (1/\eps)^{O(1/\eps)}$ rounds.

It should also be noted that (as pointed out to us by Seth Pettie) any $O(1)$-approximation algorithm for unweighted matchings can be used to obtain a $(2+\eps)$-approximation algorithm for weighted matchings (see Section 4 of his paper with Lotker and Patt-Shamir \cite{Lotker:2015} for details). In our setting this implies that Theorem \ref{theorem-main} yields an algorithm that computes a \emph{$(2+\eps)$-approximation} to maximum weight matching in $O((\log \log n)^2\cdot (1/\eps))$ rounds and $O(n \log n)$ space per machine.

\subsection{Related work}

We note that there were efforts at modeling MapReduce computation~\cite{FeldmanMSSS10} before the work of Karloff et al. Also a recent work \cite{RoughgardenVW16} investigates the complexity of the MPC model.

In the \emph{filtering} technique, introduced by Lattanzi et al.~\cite{LattanziMSV11}, the input graph is iteratively sparsified until it can be stored on a single machine. For the matching problem, the sparsification is achieved by first obtaining a small sample of edges, then finding a maximal matching in the sample, and finally removing all the matched vertices. Once a sufficiently small graph is obtained, a maximal matching is computed on a single machine. In the $S = \Theta(n)$ regime, the authors show that their approach reduces the number of edges by a constant factor in each iteration. Despite this guarantee, until the very last step, each iteration may make little progress towards obtaining even an approximate maximal matching, resulting in a $O(\log{n})$ round complexity of the algorithm. Similarly, the results of Ahn and Guha~\cite{AhnG15} require $n^{1 + \Omega(1)}$ space per machine to compute a $O(1)$-approximate maximum weight matching in a constant number of rounds and do not imply a similar bound for the case of linear space.

We note that the algorithm of Lattanzi et al.~\cite{LattanziMSV11} cannot be turned easily into a fast approximation algorithm when space per machine is sublinear. Even with $\Theta(n)$ space, their method is able to remove only a constant fraction of edges from the graph in each iteration, so $\Omega(\log n)$ rounds are needed until only a matching is left. When $S=\Theta(n)$, their algorithm
works as follows: sample uniformly at random $\Theta(n)$ edges of the graph, find maximal matching on the sampled set, remove the matched vertices, and repeat. We do not provide a formal proof here, but on the following graph this algorithm requires $\tilde\Omega(\log n)$ rounds, even to discover a constant factor approximation. Consider a graph consisting of $t$ separate regular graphs of degree $2^i$, for $0\le i \le t-1$, each on $2^t$ vertices. This graph has $t2^t$ nodes and the algorithm requires $\tilde\Omega(t)$ rounds even to find a constant approximate matching. The algorithm chooses edges uniformly at random, and few edges are selected each round from all but the densest remaining subgraphs. Thus, it takes multiple rounds until a matching of significant size is constructed for sparser subgraphs.
This example emphasizes the weakness of direct edge sampling and motivates our vertex sampling scheme that we introduce in this paper.

Similarly, Ahn and Gupta~\cite{AhnG15} build on the filtering approach of Lattanzi et al.\ and design a primal-dual method for computing a $(1 + \epsilon)$-approximate weighted maximum matching. They show that each iteration of their distributed algorithm either makes large progress in the dual, or they can construct a large approximate matching. Regardless of their new insights, their approach is inherently edge-sampling based and does not break the $O(\log{n})$ round complexity barrier when $S = O(n)$.

Despite the fact that MPC model is rather new, computing matching is an important problem in this model, as the above mentioned two papers demonstrate.
This is further witnessed by the fact that the distributed and parallel complexity of maximal matching has been studied for many years already.
The best deterministic PRAM maximal matching algorithm, due to Israeli and Shiloach~\cite{IsraeliS86}, runs in $O\left(\log^3 n\right)$ rounds.
Israeli and Itai~\cite{II86} gave a randomized algorithm for this problem that runs in $O(\log n)$ rounds.
Their algorithm works as well in CONGEST, a distributed message-passing model with a processor assigned to each vertex and a limit on the amount of information sent along each edge per round. A more recent paper by Lotker, Patt-Shamir, and Pettie~\cite{Lotker:2015} gives a $(1+\epsilon)$-approximation to maximum matching in $O(\log n)$ rounds also in the CONGEST model, for any constant $\epsilon > 0$.
On the deterministic front, in the LOCAL model, which is a relaxation of CONGEST that allows for an arbitrary amount of data sent along each edge,
a line of research initiated by Hańćkowiak, Karoński, and Panconesi~\cite{hkp1,hkp2} led to an $O\left(\log^3 n\right)$-round algorithm by Fischer and Ghaffari~\cite{pram2017}.

On the negative side, Kuhn, Moscibroda, and Wattenhofer~\cite{Kuhn:2006} showed that any distributed algorithm, randomized or
deterministic, when communication is only between neighbors requires $\Omega\left(\sqrt{\log  n / \log{\log{n}}}\right)$ rounds
to compute a constant approximation to maximum matching. This lower bound applies to
all distributed algorithms that have been mentioned above. Our algorithm circumvents this lower bound by
loosening the only possible assumption there is to be loosened: single-hop communication. In a sense, we assign subgraphs to multiple machines
and allow multi-hop communication between nodes in each subgraph.

Finally, the ideas behind the peeling algorithm that is a starting point for this paper can be traced back to the papers of Israeli, Itai, and Shiloach~\cite{II86,IsraeliS86}, which can be interpreted as matching high-degree vertices first in order to reduce the maximum degree.
A sample distributed algorithm given in a work of Parnas and Ron~\cite{ParnasR07} uses this idea to compute an $O(\log{n})$ approximation for vertex cover. Their algorithm was extended by Onak and Rubinfeld~\cite{OR10} in order to provide an $O(1)$-approximation for vertex cover and maximum matching in a dynamic version of the problems. This was achieved by randomly matching high-degree vertices to their neighbors in consecutive phases while reducing the maximum degree in the remaining graph. This approach was further developed in the dynamic graph setting by a number of papers~\cite{BhattacharyaHI15,BhattacharyaHN16,BhattacharyaHN17,BhattacharyaCH17}. Ideas similar to those in the paper of Parnas and Ron~\cite{ParnasR07} were also used to compute polylogarithmic approximation in the streaming model by Kapralov, Khanna, and Sudan~\cite{KapralovKS14}. Our version of the peeling algorithm was directly inspired by the work of Onak and Rubinfeld~\cite{OR10} and features important modifications in order to make our analysis go through.

\subsection{Future challenges}

We show a parallel matching algorithm in the MPC model by taking an algorithm that can be seen as a distributed algorithm in the so-called
LOCAL model. This algorithm requires $\Theta(\log n)$ rounds and can be simulated in $\Theta(\log n)$ MPC rounds relatively easily with $n^{\Omega(1)}$ space per machine. We develop an approximate version of the algorithm that uses much fewer rounds by repeatedly compressing a superconstant number of rounds of the original algorithm to $O(1)$ rounds. It is a great question if this kind of speedup can be obtained for other---either distributed or PRAM---algorithms.

As for the specific problem considered in this paper, an obvious question is whether our round complexity is optimal. We conjecture that there is a better algorithm that requires $O(\log \log n)$ rounds, the square root of our complexity. Unfortunately, a factor of $\log n$ in one of our functions (see the logarithmic factor in $\prHmult$, a parameter defined later in the paper) propagates to the round complexity, where it imposes a penalty of $\log \log n$.

Note also that as opposed to the paper of Onak and Rubinfeld~\cite{OR10}, we do not obtain an $O(1)$-approximation to vertex cover. This stems from the fact that we discard so-called reference sets, which can be much bigger than the minimum vertex cover. This is unfortunately necessary in our analysis. Is there a way to fix this shortcoming of our approach?

Finally, we suspect that there is a simpler algorithm for the problem that avoids the intricacies of our approach and proceeds by simply greedily matching high-degree vertices on induced subgraphs without sophisticated sampling in every phase. Unfortunately, we do not know how to analyze this kind of approach.

\subsection{Recent developments}
Since an earlier version of this work was shared on arXiv, it has inspired two followup works. First, Assadi~\cite{Assadi17} applied the round compression idea to the distributed $O(\log n)$-approximation algorithm for vertex cover of Parnas and Ron~\cite{ParnasR07}. Using techniques from his recent work with Khanna~\cite{AssadiK17}, he gave a simple MPC algorithm that in $O(\log \log n)$ rounds and $n/\polylog(n)$ space per machine computes an $O(\log n)$-approximation to minimum vertex cover.

Second, a new paper by Assadi et al.~\cite{ABBMS17} addresses, among other things, several open questions from this paper.
They give an MPC algorithm that computes $O(1)$-approximation to both vertex cover and maximum matching in $O(\log \log n)$ rounds and $\tilde O(n)$ space per machine (though the space is strictly superlinear). Their result builds on techniques developed originally for dynamic matching algorithms~\cite{BS15,BS16} and composable coresets~\cite{AssadiK17}. It is worth to note that their construction critically relies on the vertex sampling approach (i.e., random assignment of vertices to machines) introduced in our work.

\subsection{Notation}
For a graph $G = (V,E)$ and $V' \subseteq V$, we write $G[V']$ to denote the subgraph of $G$ induced by $V'$. Formally, $G[V'] \eqdef \left( V', E \cap (V' \times V') \right)$.
We also write $N(v)$ to denote the set of neighbors of a vertex $v$ in $G$.

\section{Overview}
In this section we present the main ideas and techniques behind our result.
Our paper contains two main technical contributions.

First, our algorithm \emph{randomly partitions vertices} across the machines, and on each machine considers only the corresponding induced graph.
We prove that it suffices to consider these induced subgraphs to obtain an approximate maximum matching.
Note that this approach greatly deviates from previous works, that used edge based partitioning.

Second, we introduce a \emph{round compression} technique.
Namely, we start with an algorithm that executes $O(\log n)$ phases and can be naturally implemented in $O(\log n)$ MPC rounds and then demonstrate how to emulate this algorithm using only $o(\log n)$ MPC rounds.
The underlying idea is quite simple: each machine independently runs multiple phases of the initial algorithm.
This approach, however, has obvious challenges since the machines cannot communicate in a single round of the MPC algorithm.
The rest of the section is devoted to describing our approach and illustrating how to overcome these challenges.

%In particular, the algorithm may commit to some sub-optimally chosen matching on one machine that can later lead to a strong deficiency globally. As a result, the output of the MPC algorithm could strongly deviate from the guarantees of the original algorithm's output. To cope with these challenges, we introduce a carefully crafted emulation of the global multi-phase algorithm that uses controlled randomness to simulate multiple phases of the global algorithm on separate machines without any coordination between the machines, thus allowing for a small number of parallel MPC rounds.

%In what follows, we discuss this approach in more detail.

\subsection{Vertex based sampling}
The algorithms for computing maximal matching in PRAM and their simulations in the MPC model \cite{Luby86, AlonBI86, IsraeliS86, II86} are designed to, roughly speaking, either halve the number of the edges or halve the maximum degree in each round. Therefore, in the worst case those algorithms inherently require $\Omega(\log{n})$  rounds to compute a maximal matching.

On the other hand, all the algorithm for the maximal matching problem in the MPC model prior to ours (\cite{LattanziMSV11, AhnG15, AssadiK17}) process the input graph by discarding edges, and eventually aggregate the remaining edges on a single machine to decide which of them are part of the final matching.
% Intuitively, the computation on a single machine is done to avoid the case in which a vertex has two or more incident edges to it that are in the matching.
It is not known how to design approaches similar to \cite{LattanziMSV11, AhnG15, AssadiK17} while avoiding a step in which the maximal matching computation is performed on a single machine. This seems to be a barrier for improving upon $O(\log{n})$ rounds, if the space available on each machine is $O(n)$.

% All the previous works on the distributed maximal matching problem known to the authors (\cite{LattanziMSV11}, \cite{AhnG15}, \cite{II86}, simulation of \cite{Luby86, AlonBI86, IsraeliS86, II86}) process the input graph by sampling a subset of edges and performing the computation on the sampled subset only. The only known measure of progress of algorithms that simulate \cite{II86, Luby86, AlonBI86} is that they halve the number of edges in each round. The algorithm of~\cite{IsraeliS86} is designed to remove all the high-degree vertices after each round. The result by~\cite{II86} can be extended to provide a similar guarantee. In either of the cases, the measure of progress implies $O(\log{n})$ round complexity.

% In the work by \cite{LattanziMSV11} a set of $S$ edges is collected on a single machine. Then, a maximal matching is found among the collected edges, and the vertices in the matching are removed from further consideration. The authors show that this approach halves the number of the edges after every round for $S = \Theta(n)$. We point out that the final decision on which edge belongs to the matching is made on a single machine.

% In a recent work, \cite{AssadiK17} partition the edges across multiple machines and obtain maximum matching on every of the machines independently. However, the union of the matchings obtained in this way is not a matching necessarily. To cope with this shortcoming, all the matchings are collected on a single machine, and a maximal matching of the union is reported as a maximal matching of that iteration.

The starting point of our new approach is alleviating this issue by resorting to a more careful vertex based sampling. Specifically, at each round, we randomly partition the vertex set into vertex sets $V_1, \ldots , V_m$ and consider induced graphs on those subsets independently. Such sampling scheme has the following handy property: the union of matchings obtained across the machines is still a matching. Furthermore, we show that for the appropriate setting of parameters this sampling scheme allows us to handle vertices of a wide range of degrees in a single round, unlike handling only high-degree vertices (that is, vertices with degree within a constant factor of the maximum degree) as guaranteed by \cite{II86, IsraeliS86}.

\subsection{Global algorithm}

To design an algorithm executed on machines locally, we start from a sequential peeling algorithm \globalAlgorithm{} (see Algorithm~\ref{alg:global}), which is a modified version of an algorithm used by Onak and Rubinfeld~\cite{OR10}. The algorithm had to be significantly adjusted in order to make our later analysis of a parallel version possible.

The execution of $\globalAlgorithm$ is divided into $\Theta(\log n)$ \emph{phases}. In each phase, the algorithm first computes a set $H$ of high-degree vertices. Then it selects a set $F$ of vertices, which we call \emph{friends}.
Next the algorithm selects a matching $\widetilde{M}$ between $H$ and $F$, using a simple randomized strategy. $F$ is carefully constructed so that both $F$ and $\widetilde{M}$ are likely to be of order $\Theta(|H|)$.
Finally, the algorithm removes all vertices in $H\cup F$, hence reducing the maximum vertex degree in the graph by a constant factor, and proceeds to the next phase. The central property of $\globalAlgorithm$ is that it returns an $O(1)$ approximation to maximum matching with constant probability (Corollary~\ref{cor:global_alg_is_good}).
A detailed discussion of \globalAlgorithm{} is given in Section~\ref{sec:global-algorithm}.

\begin{algorithm}[ht]\algfontsize
  \KwIn{Graph $G = (V,E)$ of maximum degree at most $\widetilde\threshold$}
  \KwOut{A matching in $G$}
  \BlankLine
  \caption{\globalAlgorithm($G, \widetilde \threshold$) \algdesc Global matching algorithm}
  \label{alg:global}
  $\threshold \leftarrow \widetilde\threshold$, $M \leftarrow \emptyset$, $V' \leftarrow V$\\
  \While{$\threshold \ge 1$\label{line:global:main_loop_start}}{

   \tcc{\small Invariant:~the maximum degree in $G[V']$ is at most $\threshold$}

   Let $H\subset V'$ be a set of vertices of degree at least $\threshold/2$ in $G[V']$. We call vertices in $H$ \emph{heavy}.\label{line:global:high-degree}

   Create a set $F$ of \emph{friends} by selecting each vertex $v \in V'$ independently with probability $|N(v) \cap H|/4\threshold$.\label{line:global:friends}

   Compute a matching $\widetilde M$ in $G[H \cup F]$ using $\matchHeavy(H,F)$ and add it to $M$.\label{line:compute_matching}

   $V' \leftarrow V' \setminus (H \cup F)$, $\threshold \leftarrow \threshold/2$\label{line:global:main_loop_end}

  }
\Return{$M$}
\end{algorithm}

\begin{algorithm}[ht]\label{alg:randomized_matching}\algfontsize
  \KwIn{set $H$ of heavy vertices and set $F$ of friends}
	\KwOut{a matching in $G[H \cup F]$}
  \BlankLine
	\caption{\matchHeavy$(H, F)$\algdesc{Computing a matching in $G[H \cup F]$}}
   For every vertex $v \in F$ pick uniformly at random a heavy neighbor $v_\star$ in $N(v) \cap H$.\label{step:select}\label{line:randomized_matching:v_star}

Independently at random color each vertex in $H \cup F$ either red or blue.

Select the following subset of edges: $E_\star \leftarrow \{(v,v_\star): v \in F \land \text{$v$ is red} \land v_\star \in H \land \text{$v_\star$ is blue}\}$.

For every blue vertex $w$ incident to an edge in $E_\star$, select one such edge and add it to $\widetilde M$.\label{step:match}

\Return{$\widetilde M$}
\end{algorithm}

\subsection{Parallel emulation of the global algorithm (Section~\ref{sec:emulate-phase})}
The following two ways could be used to execute $\globalAlgorithm$ in the MPC model: (1) place the whole graph on one machine, and trivially execute all the phases of $\globalAlgorithm$ in a single round; or (2) simulate one phase of $\globalAlgorithm$ in one MPC round while using $O(n)$ space per machine, by distributing vertices randomly onto machines (see Section~\ref{sec:implementation-of-centralized-algorithm} for details). However, each of these approaches has severe drawbacks. The first approach requires $\Theta(|E|)$ space per machine, which is likely to be prohibitive for large graphs. On the other hand, while the second approach uses $O(n)$ space, it requires $\Theta(\log{n})$ rounds of MPC computation. We achieve the best of both worlds by showing how to emulate the behavior
of \emph{multiple phases} of $\globalAlgorithm$ in a \emph{single MPC round} with each machine using $O(n)$ space, thus obtaining an MPC algorithm requiring $o(\log n)$ rounds. More specifically, we show that it is possible to emulate the behavior of $\globalAlgorithm$ in $O\left((\log \log{n})^2\right)$ rounds with each machine using $O(n)$ (or even only $n/(\log n)^{O(\log \log n)}$) space.

Before we provide more details about our parallel multi-phase emulation of $\globalAlgorithm$, let us mention the main obstacle such an emulation encounters.
At the beginning of every phase, $\globalAlgorithm$ has access to the full graph. Therefore, it can easily compute the set of heavy vertices $H$. On the other hand, machines in our MPC algorithm use $O(n)$ space and thus have access only to a small subgraph of the input graph (when $|E| \gg n$). In the first phase this is not a big issue, as, thanks to randomness, each machine can estimate the degrees of high-degree vertices. However, the degrees of vertices can significantly change from phase to phase. Therefore, after each phase it is not clear how to select high-degree vertices in the next phase without inspecting the entire graph again. Hence, one of the main challenges in designing a multi-phase emulation of $\globalAlgorithm$ is to ensure that machines at the beginning of every phase can estimate \emph{global} degrees of vertices well enough to identify the set of heavy vertices, while each machine still having access only to its local subgraph. This property is achieved using a few modifications to the algorithm.
%First, at the beginning of every phase $\globalAlgorithm$ has access to the full graph. Therefore, it can easily compute the set of heavy vertices $H$. On the other hand, machines in our MPC algorithm use $O(n)$ space and thus have access only to a small subgraph of the input graph (when $|E| \gg n$). Second, the degrees of vertices can significantly change from phase to phase. After each phase, it is not clear how to select high degree vertices in the next phase without inspecting the entire graph again. Hence, one of the main challenges in designing a multi-phase emulation of $\globalAlgorithm$ is to ensure that machines at the beginning of every phase can estimate \emph{global} degrees of vertices well enough to identify the set of heavy vertices. This property is achieved using a few modifications to the algorithm.

\subsubsection{Preserving randomness}
Our algorithm partitions the vertex set into $\machines$ disjoint subsets $V_i$ by assigning each vertex independently and uniformly at random.
Then the graph induced by each subset $V_i$ is processed on a separate machine. Each machine finds a set of heavy vertices, $H_i$, by estimating the global degree of each vertex of $V_i$.
It is not hard to argue (using a standard concentration bound) that there is enough randomness in the initial partition so that local degrees in each induced subgraph roughly correspond to the global degrees. Hence,
after the described partitioning, sets $H$ and $\bigcup_{i \in [\machines]} H_i$ have very similar properties.
This observation crucially relies on the fact that initially the vertices are distributed \emph{independently} and \emph{uniformly} at random.

However, if one attempts to execute the second phase of $\globalAlgorithm$ without randomly reassigning vertices to sets after the first phase, the remaining vertices are no longer distributed independently and uniformly at random.
% This can in principle lead to a situation in which there are many edges going between sets $V_i$ in the partition, but proportionally very few edges are present in graphs $H_i$.
In other words, after inspecting the neighborhood of every vertex locally and making a decision based on it, the randomness of the initial random partition may significantly decrease.

Let us now make the following thought experiment. Imagine for a moment that there is an algorithm that emulates multiples phases of $\globalAlgorithm$ in parallel and in every phase inspects \textbf{only} the vertices that end-up being matched. Then, from the point of view of the algorithm, the vertices that are not matched so far are still distributed independently and uniformly at random across the machines. Or, saying in a different way, if randomness of some vertices is not inspected while emulating a phase, then at the beginning of the next phase those vertices still have the same distribution as in the beginning of that MPC round. But, how does an algorithm learn about vertices that should be matched by inspecting no other vertex? How does the algorithm learn even only about high-degree vertices without looking at their neighborhood?\\
In the sequel we show how to design an algorithm that looks only "slightly" at the vertices that do not end-up being matched. As we prove, that is sufficient to design a multi-phase emulation of $\globalAlgorithm$.

We now discuss in more detail how to preserve two crucial properties of our vertex assignments throughout the execution of multiple phases: independent and nearly-uniform distribution.

\subsubsection{Independence (Lemma~\ref{lem:independence})}
As noted above, it is not clear how to compute vertex degrees without inspecting their local neighborhood.
A key, and at first sight counter-intuitive, step in our approach is to \emph{estimate} even \emph{local degrees} of vertices (in contrast to computing them exactly). To obtain the estimates, it suffices to examine only small neighborhoods of vertices and in turn preserve the independent distribution of the intact ones. More precisely, we sample a small set of vertices on each machine, called \emph{reference sets}, and use the set to estimate the local degrees of all vertices assigned to this machine. Furthermore, we show that with a proper adjustments of \globalAlgorithm{} these estimates are sufficient for capturing high-degree vertices.

Very crucially, all the vertices that are used in computing a matching in one emulated phase (including the reference sets) are discarded at the end of the phase, even if they do not participate in the obtained matching. In this way we disregard the vertices which position is fixed and, intuitively, secure an independent distribution of the vertices across the machines in the next phase.

 We also note, without going into details, that obtaining full independence required modifying how the set of friends is selected, compared to the original approach of Onak and Rubinfeld \cite{OR10}. In their approach, each heavy vertex selected one friend at random. However, as before, in order to select exactly one friend would require examining neighborhood of heavy vertices. This, however, introduces dependencies between vertices that have not been selected. So instead, in our \globalAlgorithm{}, every vertex selects itself as a friend independently and proportionally to the number of high-degree vertices (found using the reference set), which again secures an independent distribution of the remaining vertices. The final properties of the obtained sets in either approach are very similar.

 \subsubsection{Uniformity (Lemma~\ref{lem:preserving_near_uniformity})}
A very convenient property in the task of emulating multiple phases of $\globalAlgorithm$ is a uniform distribution of vertices across all the machines at every phase -- for such a distribution, we know the expected number of neighbors of each desired type assigned to the same machine. Obtaining perfect uniformity seems difficult---if not impossible in our setting---and we therefore settle for \emph{near} uniformity of vertex assignments. The probability of the assignment of each vertex to each machine is allowed to differ slightly from that in the uniform distribution. Initially, the distribution of each vertex is uniform and with every phase it can deviate more and more from the uniform distribution. We bound the rate of the decay with high probability and execute multiple rounds as long as the deviation from the uniform distribution is negligible.
 More precisely, in the execution of the entire parallel algorithm, the sufficiently uniform distribution is on average kept over $\Omega\left(\tfrac{\log{n}}{\left(\log \log n\right)^2}\right)$ phases of the emulation of $\globalAlgorithm$.

\begin{figure}[ht]
\begin{center}
\includegraphics{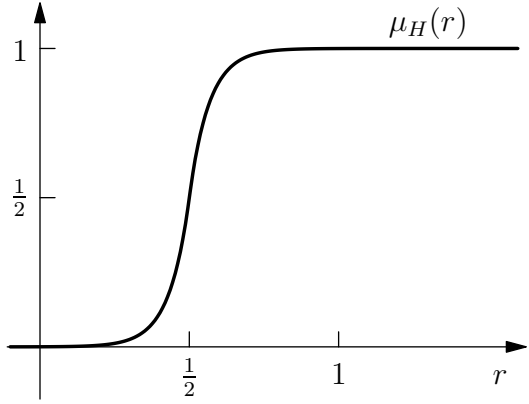}
\end{center}
\caption{An idealized version of $\prH:\mathbb R\to [0,1]$, in which $n$ was fixed to a small constant and the multiplicative constant inside the exponentiation operator was lowered.}
\label{fig:prH}
\end{figure}
 In order to achieve the near uniformity, we modify the procedure for selecting $H$, the set of high-degree vertices. Instead of a hard threshold on the degrees of vertices that are included in $H$ as in the sequential algorithm, we randomize the selection by using a carefully crafted threshold function $\prH$. This function specifies the probability with which a vertex is included in $H$. It takes as input the ratio of the vertex's degree to the current maximum degree (or, more precisely, the current upper bound on the maximum degree) and it smoothly transitions from 0 to 1 in the neighborhood of the original hard threshold (see Figure~\ref{fig:prH}). The main intuition behind the introduction of this function is that we want to ensure that a vertex is \emph{not} selected for $H$ with almost the same probability, independently of the machine on which it resides. Using a hard threshold instead of $\prH$ could result in the following deficiency. Consider a vertex $v$ that has slightly too few neighbors to qualify as a heavy vertex. Still, it could happen, with a non-negligible probability, that the reference set of some machine contains so many neighbors of $v$ that $v$ would be considered heavy on this machine. However, if $v$ is not included in the set of heavy vertices on that machine, it becomes clear after even a single phase that the vertex is not on the given machine, i.e. the vertex is on the given machine with probability zero. At this point the distribution is clearly no longer uniform.

 Function $\prH$ has further useful properties that we extensively exploit in our analysis. We just note that in order to ensure near uniformity with high probability, we also have to ensure that each vertex is selected for $F$, the set of friends, with roughly the same probability on each machine.

\subsection{Organization of the appendix}
The appendix presents the details of our techniques.
We start by analyzing $\globalAlgorithm$ in Section~\ref{sec:global-algorithm}.
Then, Section~\ref{sec:emulate-phase} describes how to emulate of a single phase of $\globalAlgorithm$ in the MPC model.
Section~\ref{sec:parallel} gives and analyzes our parallel algorithm by putting together components developed in the previous sections.
The resulting parallel algorithm can be implemented in the MPC model in a fairly straightforward way by using the result of~\cite{goodrich2011sorting}.
The details of the implementation are given in Section~\ref{sec:complexity-and-implementation}.

%\bibliographystyle{alpha}
%\bibliography{bibliography}
%\newpage\appendix\begin{center}\huge \bf Appendix\end{center}
\section{Global Algorithm}
\label{sec:global-algorithm}
\newcommand{\expe}[1]{\Exp\left[#1\right]}
\newcommand{\card}[1]{\left|#1\right|}

\subsection{Overview}
The starting point of our result is a peeling algorithm \globalAlgorithm{} that takes as input a graph $G$, and removes from it vertices of lower and lower degree until no edge is left. See page~\pageref{alg:global} for its pseudocode. We use the term \emph{phase} to refer to an iteration of the main loop in Lines~\ref{line:global:main_loop_start}--\ref{line:global:main_loop_end}.

Each phase is associated with a threshold $\threshold$. Initially, $\threshold$ equals $\widetilde \threshold$, the upper bound on the maximum vertex degree. In every phase,  $\threshold$ is divided by two until it becomes less than one and the algorithm stops. Since during the execution of the algorithm we maintain the invariant that the maximum degree in the graph is at most $\threshold$, the graph has no edge left when the algorithm terminates.

In each phase the algorithm matches, in expectation, a constant fraction of the vertices it removes. We use this fact to prove that, across all the phases, the algorithm computes a constant-factor approximate matching.

We now describe in more detail the execution of each phase. First, the algorithm creates $H$, the set of vertices that have degree at least $\threshold/2$ (Line~\ref{line:global:high-degree}). We call these vertices \emph{heavy}. Then, the algorithm uses randomness to create $F$, a set of \emph{friends} (Line~\ref{line:global:friends}). Each vertex $v$ is independently included in $F$ with probability equal to the number of its heavy neighbors divided by $4\threshold$. We show that $\expe{\card{F}} = O(\card{H})$ and $G[H \cup F]$ contains a matching of expected size $\Omega(\card{H})$. This kind of matching is likely found by \matchHeavy{} in Line~\ref{line:compute_matching}.

Note that \globalAlgorithm{} could as well compute a maximal matching in $G[H\cup F]$ instead of calling \matchHeavy{}. However, for the purpose of the analysis, using \matchHeavy{} is simpler, as we can directly relate the size of the obtained matching to the size of $H$. In addition, we later give a parallel version of \globalAlgorithm{}, and \matchHeavy{} is easy to parallelize. 

At the end of the phase, vertices in both $H$ and $F$ are removed from the graph, while the matching found in $G[H\cup F]$ is added to the global matching being constructed. It is easy to see, that by removing $H$, the algorithm ensures that no vertex of degree larger than $\threshold/2$ remains in the graph, and therefore the bound on the maximum degree decreases by a factor of two.

\subsection{Analysis}

We start our analysis of the algorithm by showing that the execution of \matchHeavy{} in each phase of \globalAlgorithm{} finds a relatively large matching in expectation.

\begin{lemma}\label{lem:global_single_round}
Consider one phase of \globalAlgorithm.
Let $H$ be the set of heavy vertices. \matchHeavy{} finds a matching $\widetilde M$ such that $\Exp\left[\left|\widetilde M\right|\right] \ge \frac{1}{40} |H|$.
\end{lemma}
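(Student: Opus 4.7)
The plan is to apply linearity of expectation, writing $\Exp[|\widetilde M|] = \sum_{h \in H} \Pr[X_h = 1]$ where $X_h$ is the indicator that $h$ is incident to some edge of $\widetilde M$. Before the computation I would verify that $\widetilde M$ really is a matching: every edge of $E_\star$ has the form $(v, v_\star)$ with $v \in F$ red and $v_\star \in H$ blue, so the only blue vertices incident to $E_\star$ are blue heavy vertices. Furthermore, each $v \in F$ selects exactly one $v_\star$, so $v$ appears in at most one edge of $E_\star$; hence distinct blue heavy vertices cannot pick the same red friend in Step~\ref{step:match}, and the output is a matching supported on $H \times F$. In particular, $|\widetilde M| = \sum_{h \in H} X_h$.

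Fix $h \in H$. For each $v \in N(h) \cap V'$ I would introduce the event $A_v = \{v \in F\} \cap \{v_\star = h\} \cap \{v \text{ red}\}$. Using the independent random choices prescribed in Line~\ref{line:global:friends} of \globalAlgorithm{}, in Line~\ref{line:randomized_matching:v_star} of \matchHeavy{}, and in the coloring step, I would compute
\[
\Pr[A_v] \;=\; \frac{|N(v) \cap H|}{4\threshold} \cdot \frac{1}{|N(v) \cap H|} \cdot \frac{1}{2} \;=\; \frac{1}{8\threshold},
\]
where the middle factor is well defined because $h \in N(v) \cap H$. Since all randomness entering $A_v$ is intrinsic to $v$ (its $F$-membership, its $v_\star$-draw, and its color), the family $\{A_v\}_{v \in N(h) \cap V'}$ is mutually independent, and jointly independent of the color of $h$.

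Because $h$ is heavy, $|N(h) \cap V'| \ge \threshold/2$, so applying $1 - x \le e^{-x}$ gives
\[
\Pr\Bigl[\,\bigcup_{v} A_v\Bigr] \;\ge\; 1 - \Bigl(1 - \tfrac{1}{8\threshold}\Bigr)^{\threshold/2} \;\ge\; 1 - e^{-1/16}.
\]
Multiplying by $\Pr[h \text{ is blue}] = 1/2$ and summing over $h \in H$ yields $\Exp[|\widetilde M|] \ge \tfrac{1-e^{-1/16}}{2}\,|H|$. A short numerical check (using, e.g., $1 - e^{-x} \ge x - x^2/2$ at $x = 1/16$, so $(1-e^{-1/16})/2 \ge 31/1024 > 1/40$) converts this to the stated bound $|H|/40$.

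The step I expect to require the most care is justifying the independence of $\{A_v\}_v$ and its independence from the color of $h$: one must explicitly isolate the three sources of randomness (per-vertex $F$-membership, per-vertex $v_\star$-draw, per-vertex coloring) and verify that each $A_v$ reads only the bits associated with $v$ itself. Once this is pinned down, the factoring of $\Pr[A_v]$ and the subsequent expectation computation are routine.
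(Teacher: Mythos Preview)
Your approach is essentially the paper's: both lower-bound, for each $h\in H$, the probability that $h$ is blue and some neighbor $v$ joins $F$, picks $v_\star=h$, and is red, using that this per-neighbor event has probability $1/(8\threshold)$ (the paper instead bundles the colors separately, getting $1/(4\threshold)$ per neighbor and then multiplying by $1/4$). One small slip: the equality $|\widetilde M|=\sum_{h\in H}X_h$ can fail because a red endpoint $v\in F$ may also lie in $H$, so an edge of $\widetilde M$ can have both endpoints in $H$; the clean fix is to let $X_h$ be the indicator that $h$ is \emph{blue} and incident to $E_\star$ (each edge of $\widetilde M$ has exactly one such endpoint, so the equality then holds), which is precisely the event your subsequent computation lower-bounds.
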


\begin{proof}
Observe that the set $E_\star$ is a collection of vertex-disjoint stars: each edge connects a red vertex with a blue vertex and the red vertices have degree $1$. Thus, a subset of $E_\star$ forms a valid matching as long as no blue vertex is incident to two matched edges. Note that this is guaranteed by how edges are added to $\widetilde M$ in Line~\ref{step:match}.

The size of the computed matching is the number of blue vertices in $H$ that have at least one incident edge in $E_\star$. Let us now lower bound the number of such vertices. Consider an arbitrary $u \in H$. It has the desired properties exactly when the following three independent events happen: some $v \in F$ selects $u$ in Line~\ref{step:select}, $u$ is colored blue, and $v$ is colored red.
The joint probability of the two latter events is exactly $\frac14$.
The probability that $u$ is \emph{not} selected by some $v \in F$ is
\[\left(1-\frac{1}{4\threshold}\right)^{|N(u) \cap V'|}
\le \left(1-\frac{1}{4\threshold}\right)^{\threshold/2}
\le \exp\left(-\frac{1}{4\threshold} \cdot \frac{\threshold}{2}\right) \le \exp\left(-\frac{1}{8}\right) \le \frac{9}{10}.\]
This implies that $u$ is selected by a neighbor $v \in F$ with probability at least $\frac{1}{10}$. Therefore, with probability at least $\frac{1}{10} \cdot \frac{1}{4} = \frac{1}{40}$, $u$ is blue and incident to an edge in $E_\star$. Hence, $\expe{\card{\widetilde M}} \geq \frac{1}{40}\card{H}$.
\end{proof}

Next we show an upper bound on the expected size of $F$, the set of friends.

\begin{lemma}\label{lem:hf}
Let $H$ be the set of heavy vertices selected in a phase of \globalAlgorithm{}. The following bound holds on the expected size of $F$, the set of friends, created in the same phase: $\Exp\left[\left|F\right|\right] \le \frac{1}{4}|H|$.
\end{lemma}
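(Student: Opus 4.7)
The plan is to compute $\Exp[|F|]$ by linearity of expectation and then swap the order of summation to exploit the degree bound guaranteed by the loop invariant.

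First I would write
\[
\Exp[|F|] \;=\; \sum_{v \in V'} \frac{|N(v) \cap H|}{4\threshold}
\;=\; \frac{1}{4\threshold}\,\sum_{v \in V'} |N(v) \cap H|,
\]
using that each $v \in V'$ is included in $F$ independently with probability $|N(v)\cap H|/(4\threshold)$ (Line~\ref{line:global:friends}).

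Next I would rewrite the sum by double counting the edges of $G[V']$ incident to $H$: each such edge $(v,u)$ with $u \in H$ is counted once in the term for $v$, so
\[
\sum_{v \in V'} |N(v) \cap H| \;=\; \sum_{u \in H} |N(u) \cap V'|.
\]
Because $H \subseteq V'$, the quantity $|N(u) \cap V'|$ is exactly the degree of $u$ in $G[V']$.

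Finally, I would invoke the loop invariant stated in the pseudocode: at the beginning of each phase the maximum degree in $G[V']$ is at most $\threshold$. Hence $|N(u) \cap V'| \le \threshold$ for every $u \in H$, giving $\sum_{u \in H} |N(u) \cap V'| \le \threshold \cdot |H|$. Plugging back in yields $\Exp[|F|] \le \threshold |H| / (4\threshold) = |H|/4$, as claimed. There is no genuine obstacle here; the only thing to be slightly careful about is justifying the invariant $\deg_{G[V']}(u) \le \threshold$, which holds either because it is the initial bound $\widetilde{\threshold}$ or because the previous phase removed all vertices of degree exceeding $\threshold/2$ in that phase's graph (together with the halving of $\threshold$).
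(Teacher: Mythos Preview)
Your proof is correct and follows essentially the same approach as the paper: write $\Exp[|F|]$ by linearity, reverse the order of summation to get $\sum_{u\in H}|N(u)\cap V'|$, and apply the degree invariant $|N(u)\cap V'|\le\threshold$ to conclude. The paper's argument is identical, just more terse.
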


\begin{proof}
At the beginning of a phase,
every vertex $u \in V'$---including those in $H$---has its degree, $|N(u) \cap V'|$, bounded by $\threshold$. Reversing the order of the summation and applying this fact, we get:
\[\Exp\left[|F|\right] = \sum_{v \in V'} \frac{|N(v) \cap H|}{4\threshold} = \sum_{u \in H} \frac{\left|N(u) \cap V'\right|}{4\threshold}
\le \frac{|H|\cdot \threshold}{4\threshold} = \frac{|H|}{4}.\qedhere\]
\end{proof}

We combine the last two bounds to lower bound the expected size of the matching computed by \globalAlgorithm.

\begin{lemma}
\label{lem:global_alg_expectation}
        Consider an input graph $G$ with an upper bound $\widetilde\threshold$ on the maximum vertex degree.
	$\globalAlgorithm(G,\widetilde\threshold)$ executes $T \eqdef \lfloor\log \widetilde\threshold\rfloor+1$ phases. Let $H_i$, $F_i$, and $\widetilde M_i$ be the sets $H$, $F$, and $\widetilde M$ constructed in phase $i$ for $i \in [T]$.
	The following relationship holds on the expected sizes of these sets:
	\[\sum_{i=1}^{T}\expe{\card{\widetilde M_i}} \ge \frac{1}{50} \sum_{i=1}^{T}\expe{\card{H_i} + \card{F_i}} \]
\end{lemma}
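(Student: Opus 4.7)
The proof is a direct combination of Lemma~\ref{lem:global_single_round} and Lemma~\ref{lem:hf}, applied phase by phase and then summed, so the plan is short. First I would observe that both preceding lemmas give statements that hold pointwise for the random set $H$ produced in a phase: Lemma~\ref{lem:global_single_round} gives $\Exp[|\widetilde M_i| \mid H_i] \ge \tfrac{1}{40}|H_i|$, and Lemma~\ref{lem:hf} gives $\Exp[|F_i| \mid H_i] \le \tfrac{1}{4}|H_i|$ (the proofs of those lemmas actually bound these conditional expectations, since $F$ depends only on $H$ and the extra randomness used by \matchHeavy{}, while $H_i$ itself depends on the random choices of previous phases). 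Taking total expectations gives
\[
\Exp[|\widetilde M_i|] \;\ge\; \tfrac{1}{40}\Exp[|H_i|], \qquad \Exp[|F_i|] \;\le\; \tfrac{1}{4}\Exp[|H_i|].
\]

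From these two inequalities I would write, for each phase $i$,
\[
\Exp[|H_i|] + \Exp[|F_i|] \;\le\; \Exp[|H_i|] + \tfrac14\Exp[|H_i|] \;=\; \tfrac54\Exp[|H_i|] \;\le\; \tfrac54\cdot 40\,\Exp[|\widetilde M_i|] \;=\; 50\,\Exp[|\widetilde M_i|],
\]
which rearranges to $\Exp[|\widetilde M_i|] \ge \tfrac{1}{50}\bigl(\Exp[|H_i|]+\Exp[|F_i|]\bigr)$. Summing this inequality over $i=1,\dots,T$ yields the desired bound.

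There is essentially no obstacle here, since the two previous lemmas already carry all the work; the only subtle point worth verifying is that the bounds in those lemmas apply conditional on the state at the start of phase $i$ (in particular, conditional on $H_i$ and on the invariant that every remaining vertex has degree at most $\threshold$ in $G[V']$). This conditional applicability is what legitimizes taking outer expectations in each phase even though $H_i$, $F_i$, and $\widetilde M_i$ are all random and depend on the history of previous phases. Once that is noted, the linearity of expectation lets us sum over phases with no further subtlety and obtain the claim.
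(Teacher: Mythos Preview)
Your proposal is correct and follows essentially the same approach as the paper: apply Lemmas~\ref{lem:global_single_round} and~\ref{lem:hf} conditionally on $H_i$, take total expectations, combine the two bounds via $\tfrac{5}{4}\cdot 40 = 50$ (the paper writes this as $\tfrac{1}{50}+\tfrac{1}{200}=\tfrac{1}{40}$), and sum over phases. Your explicit remark about conditioning on the state at the start of each phase is a useful clarification that the paper leaves implicit.
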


\begin{proof}
For each phase $i \in [T]$, by applying the expectation over all possible settings of the set $H_i$, we learn from Lemmas~\ref{lem:global_single_round} and~\ref{lem:hf} that 
\[
\expe{\card{\widetilde M_i}} \ge \frac{1}{40} \expe{\card{H_i}}
\qquad
\mbox{and}
\qquad
\expe{\card{F_i}} \le \frac{1}{4} \expe{\card{H_i}}.
\]
It follows that
\[\frac{1}{50}\expe{\card{H_i} + \card{F_i}}
\le 
\frac{1}{50}\expe{\card{H_i}} + \frac{1}{200}\expe{\card{H_i}}
=
\frac{1}{40}\expe{\card{H_i}}
\le \expe{\card{\widetilde M_i}},
\]
and the statement of the lemma follows by summing over all phases.
\end{proof}

We do not use this fact directly in our paper, but note that the last lemma can be used to show that \globalAlgorithm{} can be used to find a large matching.

\begin{corollary}\label{cor:global_alg_is_good}
\globalAlgorithm{} computes a constant factor approximation to the maximum matching with $\Omega(1)$ probability.
\end{corollary}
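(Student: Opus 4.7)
The plan is to combine Lemma~\ref{lem:global_alg_expectation} with a simple ``covering'' observation about the set of vertices peeled across the phases, and then convert the resulting bound on the expected matching size into an $\Omega(1)$ success probability via a reverse Markov inequality. Fix a maximum matching $M^\star$ in $G$.

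The key structural observation is that after the last phase, the threshold satisfies $\threshold < 1$, so the invariant recorded in Algorithm~\ref{alg:global} forces $G[V']$ to have maximum degree $0$, i.e., no edges. Consequently, for every edge $(u,v) \in M^\star$, at least one of its endpoints must have been deleted during some phase, which means it lies in $H_i \cup F_i$ for some $i$. Since $M^\star$ is a matching, the endpoints of its edges are distinct, so
\[
\sum_{i=1}^{T} \card{H_i \cup F_i} \;\ge\; \card{M^\star}.
\]
Taking expectations and using $\card{H_i \cup F_i} \le \card{H_i} + \card{F_i}$ together with Lemma~\ref{lem:global_alg_expectation} yields
\[
\Exp\!\left[\card{M}\right] \;=\; \sum_{i=1}^{T}\Exp\!\left[\card{\widetilde M_i}\right] \;\ge\; \frac{1}{50}\sum_{i=1}^{T}\Exp\!\left[\card{H_i}+\card{F_i}\right] \;\ge\; \frac{\card{M^\star}}{50}.
\]

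Finally, to pass from expectation to a constant-probability statement, I would invoke the trivial upper bound $\card{M} \le \card{M^\star}$ (which holds deterministically because $M$ is a matching in $G$) and apply the reverse Markov inequality: for any threshold $t \in (0,1)$,
\[
\tfrac{1}{50}\card{M^\star} \;\le\; \Exp\!\left[\card{M}\right] \;\le\; t\card{M^\star}\cdot\Pr\!\left[\card{M} < t\card{M^\star}\right] + \card{M^\star}\cdot\Pr\!\left[\card{M} \ge t\card{M^\star}\right].
\]
Choosing, say, $t = 1/100$ gives $\Pr[\card{M} \ge \card{M^\star}/100] \ge 1/100$, which is the desired constant-factor approximation with $\Omega(1)$ probability.

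The only step that requires any thought is the covering observation, and even that is essentially forced by the termination condition of the while loop and the maintained degree invariant; the rest is bookkeeping on top of Lemma~\ref{lem:global_alg_expectation}. I do not anticipate a real obstacle here, though one should note that the constants $1/50$ and $1/100$ are loose and could be tightened; the corollary only claims $O(1)$ approximation with $\Omega(1)$ probability, so no optimization is needed.
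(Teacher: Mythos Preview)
Your proposal is correct and follows essentially the same approach as the paper: both argue that the peeled sets $H_i\cup F_i$ cover a maximum matching (since the residual graph is edgeless at termination), invoke Lemma~\ref{lem:global_alg_expectation} to get $\Exp[\card{M}]\ge \card{M^\star}/50$, and then use the deterministic bound $\card{M}\le\card{M^\star}$ together with a reverse Markov argument to conclude $\Pr[\card{M}\ge \card{M^\star}/100]\ge 1/100$. The only cosmetic difference is that the paper first explicitly notes that $M$ is a valid matching (because each $\widetilde M_i$ is built on a disjoint vertex set), which you use implicitly when invoking $\card{M}\le\card{M^\star}$.
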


\begin{proof}
First, note that \globalAlgorithm{} finds a correct matching, i.e., no two different edges in $M$ share an endpoint. This is implied by the fact that $M$ is extended in every phase by a matching on a disjoint set of vertices.

Let $T$ and sets $H_i$, $F_i$, and $\widetilde M_i$ for $i \in [T]$ be defined as in the statement of Lemma~\ref{lem:global_alg_expectation}.
Let $M_{\rm OPT}$ be a maximum matching in the graph. Observe that at the end of the algorithm execution, the remaining graph is empty. This implies that the size of the maximum matching can be bounded by the total number of removed vertices, because each removed vertex decreases the maximum matching size by at most one:
\[
\sum_{i=1}^T \card{H_i} + \card{F_i} \ge \card{M_{\rm OPT}}.
\]
Hence, using Lemma~\ref{lem:global_alg_expectation},
\[
\expe{\card{M}}
= 
\sum_{i=1}^{T}\expe{\card{\widetilde M_i}}
\ge \frac{1}{50} \sum_{i=1}^{T}\expe{\card{H_i} + \card{F_i}}
\ge \frac{1}{50}\card{M_{\rm OPT}}.
\]
Since $\card{M} \le \card{M_{\rm OPT}}$, $|M| \ge \frac{1}{100} \card{M_{\rm OPT}}$ with probability at least $\frac{1}{100}$. Otherwise, $\expe{\card{M}}$ would be strictly less than $\frac{1}{100} \cdot \card{M_{\rm OPT}} + 1 \cdot \frac{1}{100}\card{M_{\rm OPT}} = \frac{1}{50}\card{M_{\rm OPT}}$, which is not possible.
\end{proof}

\section{Emulation of a Phase in a Randomly Partitioned Graph}
\label{sec:emulate-phase}

In this section, we introduce a modified version of a single phase (one iteration of the main loop) of \globalAlgorithm{}. Our modifications later allow for implementing the algorithm in the MPC model. The pseudocode of the new procedure, \globalPhase, is presented as Algorithm~\ref{alg:full_phase}. We partition the vertices of the current graph into $\machines$ sets $V_i$, $1\le i \le \machines$. Each vertex is assigned independently and \emph{almost} uniformly at random to one of the sets. For each set $V_i$, we run a subroutine \localPhase{} (presented as Algorithm~\ref{alg:local_phase}). This subroutine runs a carefully crafted approximate version of one phase of \globalAlgorithm{} with an appropriately rescaled threshold $\threshold$. More precisely, the threshold passed to the subroutine is scaled down by a factor of $\machines$, which corresponds to how approximately vertex degrees decrease in subgraphs induced by each of the sets. The main intuition behind this modification is that we hope to break the problem up into smaller subproblems on disjoint induced subgraph, and obtain similar \emph{global} properties by solving the problem approximately on each smaller part. Later, in Section~\ref{sec:parallel}, we design an algorithm that assigns the subproblems to different machines and solves them in parallel.

\newsavebox{\bbooxx}
\savebox{\bbooxx}{\phantom{\bf Algorithm 5:}}

\begin{algorithm}[ht]\algfontsize
  \caption{$\globalPhase(\threshold,G_\star,\machines,\DD)$
  \algdesc{Emulation of a single phase in a randomly partitioned graph}}\label{alg:full_phase}
  \KwIn{\\
      \qquad$\bullet$ threshold $\threshold$\\
      \qquad$\bullet$ induced subgraph $G_\star = (V_\star,E_\star)$ of maximum degree $\frac{3}{2}\threshold$\\
      \qquad$\bullet$ number $\machines$ of subgraphs\\
      \qquad$\bullet$ $\eps$-near uniform and independent distribution $\DD$ on assignments of $V_\star$ to $[\machines]$\\
      }
  \KwOut{Remaining vertices and a matching}
  \BlankLine

  Pick a random assignment $\Phi:V_\star \to [\machines]$ from $\DD$\label{line:full_phase:pick_assignment}

  \For{$i \in [\machines]$}{

       $V_i \leftarrow \{v \in V_\star : \Phi(v) = i\}$

       $(V'_i,M_i) \leftarrow \localPhase(i, G_\star[V_i], \threshold / \machines)$ \qquad\tcc{\localPhase{} = Algorithm~\ref{alg:local_phase}}

  }

  \Return{$\left(\bigcup_{i=1}^{\machines} V'_i, \bigcup_{i=1}^{\machines} M_i \right)$}
	\label{line:full-phase-return}
\end{algorithm}

\begin{table}[ht]\algfontsize
\begin{mdframed}[style=myframe]
A multiplicative constant used in the exponent of $\prH$:
$$\prHmult \eqdef 96 \ln n.$$

The probability of the selection for a reference set:
  \[\prR \eqdef \left(10^6\cdot\log n\right)^{-1}.\]

The probability of the selection for a heavy set (used with $r$ equal to the ratio of the estimated degree to the current threshold):
  \[\prH(r) \eqdef
  \begin{cases}
  \frac{1}{2}\exp\left(\frac{\prHmult}{2} \left(r-1/2\right)\right)&
  \text{if $r \le 1/2$,}\\
  1-\frac{1}{2}\exp\left(-\frac{\prHmult}{2} \left(r-1/2\right)\right)&
  \text{if $r > 1/2$.}
  \end{cases}\]

The probability of the selection for the set of friends (used with $r$ equal to the ratio of the number of heavy neighbors to the current threshold):
  \[\prF(r) \eqdef
  \begin{cases}
  \max\{r/4,0\}&
  \text{if $r \le 4$,}\\
  1&
  \text{if $r > 4$.}
  \end{cases}\]

\end{mdframed}
\caption{Global parameters $\prHmult \in (1,\infty)$ and $\prR \in (0,1)$ and functions $\prH:\R\to[0,1]$ and $\prF:\R\to[0,1]$ used in the parallel algorithm. $\prHmult$, $\prR$, and $\prH$ depend on $n$, the total number of vertices in the graph.}
\label{table:parameters}
\end{table}

We now discuss \localPhase{} (i.e., Algorithm~\ref{alg:local_phase}) in more detail. Table~\ref{table:parameters} introduces two parameters, $\prHmult$ and $\prR$, and two functions, $\prH$ and $\prF$, which are used in \localPhase. Note first that $\prHmult$ is a parameter used in the definition of $\prH$ but it is not used in the pseudocode of \localPhase{} (or \globalPhase) for anything else. It is, however, a convenient abbreviation in the analysis and the later parallel algorithm. The other three mathematical objects specify probabilities with which vertices are included in sets that are created in an execution of \localPhase.

\begin{algorithm}[ht]\algfontsize
  \caption{$\localPhase(i,G_i,\threshold_\star)$\algdesc{Emulation of a single phase on an induced subgraph}}
  \label{alg:local_phase}
  \KwIn{\\
      \qquad$\bullet$ induced subgraph number $i$ (useful only for the analysis)\\
      \qquad$\bullet$ induced subgraph $G_i = (V_i,E_i)$\\
      \qquad$\bullet$ threshold $\threshold_\star \in \mathbb R_+$
      }
  \KwOut{Remaining vertices and a matching on $V_i$}
  \BlankLine
     Create a \emph{reference set} $R_i$ by independently selecting each vertex in $V_i$ with probability $\prR$.
     \label{line:local_phase:create_R}

     For each $v \in V_i$, $\degEst{v} \leftarrow |N(v) \cap R_i| / \prR$.
     \label{line:local_phase:estimate_degrees}

     Create a set $H_{i}$ of \emph{heavy vertices} by independently selecting each $v \in V_i$ with probability $\prH\left(\degEst{v} / \threshold_\star\right)$.
     \label{line:local_phase:create_H}

     Create a set $F_{i}$ of \emph{friends} by independently selecting each vertex in $v \in V_i$ with probability $\prF\left(|N(v) \cap H_{i}| / \threshold_\star\right)$.
     \label{line:local_phase:create_F}

     Compute a maximal matching $M_i$ in $G[H_i \cup F_i]$.
     \label{line:local_phase:matching}

   \Return{$\left(V_i \setminus (R_i \cup H_i \cup F_i),M_i\right)$}
	\label{line:local-phase-return}
\end{algorithm}
Apart from creating its own versions of $H$, the set of heavy vertices, and $F$, the set of friends, \localPhase{} constructs also a set $R_i$, which we refer to as a \emph{reference set}. In Line~\ref{line:local_phase:create_R}, the algorithm puts each vertex in $R_i$ independently and with the same probability $\prR$. The reference set is used to estimate the degrees of other vertices in the same induced subgraph in Line~\ref{line:local_phase:estimate_degrees}. For each vertex $v_i$, its estimate $\degEst{v}$ is defined as the number of $v$'s neighbors in $R_i$ multiplied by $\prR^{-1}$ to compensate for sampling. Next, in Line~\ref{line:local_phase:create_H}, the algorithm uses the estimates to create $H_i$, the set of heavy vertices. Recall that $\globalAlgorithm$ uses a sharp threshold for selecting heavy vertices: all vertices of degree at least $\threshold/2$ are placed in $H_i$. \localPhase{} works differently. It divides the degree estimate by the current threshold $\threshold_\star$ and uses function $\prH$ to decide with what probability the corresponding vertex is included in $H_i$. A sketch of the function can be seen in Figure~\ref{fig:prH}. The function transitions from almost $0$ to almost $1$ in the neighborhood of $\frac{1}{2}$ at a limited pace. As a result vertices of degrees smaller than, say, $\frac{1}{4}\threshold$ are very unlikely to be included in $H_i$ and vertices of degree greater than $\frac{3}{4}\threshold$ are very likely to be included in $H_i$.
\globalAlgorithm{} can be seen as an algorithm that instead of $\prH$, uses a step function that equals $0$ for arguments less than $\frac{1}{2}$ and abruptly jumps to $1$ for larger arguments. Observe that without $\prH$, the vertices whose degrees barely qualify them as heavy could behave very differently depending on which set they were assigned to. We use $\prH$ to guarantee a smooth behavior in such cases. That is one of the key ingredients that we need for making sure that a set of vertices that remains on one machine after a phase has almost the same statistical properties as a set of vertices obtained by new random partitioning.

Finally, in Line~\ref{line:local_phase:create_F}, \localPhase{} creates a set of friends. This step is almost identical to what happens in the global algorithm. The only difference is that this time we have no upper bound on the number of heavy neighbors of a vertex. As a result that number divided by $4\threshold_\star$ can be greater than 1, in which case we have to replace it with 1 in order to obtain a proper probability. This is taken care of by  function $\prF$. Once $H_i$ and $F_i$ have been created, the algorithm finds a maximal matching $M_i$ in the subgraph induced by the union of these two sets. The algorithm discards from the further consideration not only $H_i$ and $F_i$, but also $R_i$. This eliminates dependencies in the possible distribution of assignments of vertices that have not been removed yet if we condition this distribution on the configuration of sets that have been removed. Intuitively, the probability of a vertex's inclusion in any of these sets depends only on $R_i$ and $H_i$ but not on any other
vertices. Hence, once we fix the sets of removed vertices, the assignment of the remaining vertices to subgraphs is fully independent.\footnote{By way of comparison, consider observing an experiment in which we toss the same coin twice. The bias of the coin is not fixed but comes from a random distribution. If we do not know the bias, the outcomes of the coin tosses are not independent. However, if we do know the bias, the outcomes are independent, even though they have the same bias.} The output of \localPhase{} is a subset of $V_i$ to be considered in later phases and a matching $M_i$, which is used to expand the matching that we construct for the entire input graph.
\newcommand{\family}[1]{\{#1_{i}\}_{i \in [\machines]}}%
We now introduce additional concepts and notation. They are useful for describing and analyzing properties of the algorithm. A configuration describes sets $R_i$, $H_i$, and $F_i$, for $1 \le i \le \machines$, constructed in an execution of \globalPhase. We use it for conditioning a distribution of vertex assignments as described in the previous paragraph. We also formally define two important properties of distributions of vertex assignments: independence and near uniformity.

\paragraph{Configurations.}
Let $\machines$ and $V_\star$ be the parameters to \globalPhase: the number of subgraphs and the set of vertices in the graph to be partitioned, respectively.
We say that
\[\CC = \left(\family{R},\family{H},\family{F}\right)\]
is an \emph{$\machines$-configuration} if it represents a configuration of sets $R_{i}$, $H_{i}$, and $F_{i}$ created by \globalPhase{} in the simulation of a phase. Recall that for any $i \in [\machines]$, $R_{i}$, $H_{i}$, and $F_{i}$ are the sets created (and removed) by the execution of \localPhase{} for $V_i$, the $i$-th subset of vertices.

We say that a vertex $v$ is \emph{fixed} by $\CC$ if it belongs to one of the sets in the configuration, i.e.,
\[v \in \bigcup_{i \in [\machines]}\left(R_{i} \cup H_{i} \cup F_{i}\right).\]

\paragraph{Conditional distribution.}
Let $\DD$ be a distribution on assignments $\varphi:V_\star \to [\machines]$. Suppose that we execute \globalPhase{} for $\DD$
and let $\CC$ be a \emph{non-zero probability} $\machines$-configuration---composed of sets $R_{i}$, $H_{i}$, and $F_{i}$ for $i \in [\machines]$---that can
be created in this setting. Let $V'_\star$ be the set of vertices in $V_\star$ that are \emph{not} fixed by $\CC$.
We write $\DD[\CC]$ to denote the conditional distribution of possible assignments of vertices in $V'_\star$ to $[\machines]$, given that for all $i\in[\machines]$, $R_{i}$, $H_{i}$, and $F_{i}$ in $\CC$ were the sets constructed by \localPhase{} for the $i$-th induced subgraph.

\paragraph{Near uniformity and independence.}
Let $\DD$ be a distribution on assignments $\varphi:\widetilde V \to [\machines]$ for some set $\widetilde V$ and $\machines$.
For each vertex $v \in \widetilde V$, let $p_v:[\machines] \to [0,1]$ be the probability mass function of the marginal distribution of $v$'s assignment. For any $\eps \ge 0$, we say that $\DD$ is \emph{$\eps$-near uniform} if for every vertex $v$ and every $i \in [\machines]$, $p_v(i) \in \range{(1\pm\eps)/\machines}$. We say that $\DD$ is an \emph{independent} distribution if the probability of every assignment $\varphi$ in $\DD$
equals exactly $\prod_{v \in V'} p_v(\varphi(v))$.

\paragraph{Concentration inequality.}
We use the following version of the Chernoff bound that depends on an upper bound on the expectation of the underlying independent random variables.
It can be shown by combining two applications of the more standard version.

\begin{lemma}[Chernoff bound]
\label{lem:chernoff_consequence}
\newcommand{\bound}{U}
Let $X_1$, \ldots, $X_k$ be independently distributed random variables taking values in $[0,1]$. Let $X \eqdef X_1 + \cdots + X_k$ and let $\bound \ge 0$ be an upper bound on the expectation of $X$, i.e., $\Exp[X] \le \bound$. For any $\delta \in [0,1]$, $\Pr(|X - \Exp[X]| > \delta \bound) \le 2\exp(-\delta^2 \bound/3)$.
\end{lemma}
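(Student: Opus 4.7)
The plan is to derive the two-sided bound from two separate applications of the one-sided multiplicative Chernoff bound, exploiting that $U$ is only an \emph{upper} bound on $\mu \eqdef \Exp[X]$. First I would dispose of the degenerate case $\mu = 0$: then $X \equiv 0$ almost surely, so the event $\{|X-\mu| > \delta U\}$ is empty when $\delta U > 0$ (and otherwise the claimed right-hand side already exceeds $1$). So I can assume $\mu > 0$ and reparameterize by $\gamma \eqdef \delta U / \mu$, noting that $\gamma \ge \delta$ because $U \ge \mu$.

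For the upper tail, I would invoke the standard multiplicative Chernoff bound in the combined form $\Pr(X \ge (1+\gamma)\mu) \le \exp(-\gamma^2 \mu/(2+\gamma))$, which holds for every $\gamma \ge 0$ and every sum of independent $[0,1]$-valued random variables. Substituting $\gamma = \delta U/\mu$ gives $\exp(-\delta^2 U^2/(2\mu + \delta U))$, and since $\mu \le U$ and $\delta \le 1$ the denominator is at most $3U$, so the upper tail is bounded by $\exp(-\delta^2 U/3)$.

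For the lower tail, I would first observe that if $\delta U > \mu$ then $\mu - \delta U < 0$ and the event $\{X < \mu - \delta U\}$ is empty. Otherwise $\gamma = \delta U/\mu \le 1$, and the one-sided lower Chernoff bound $\Pr(X \le (1-\gamma)\mu) \le \exp(-\gamma^2 \mu/2)$ yields $\exp(-\delta^2 U^2/(2\mu)) \le \exp(-\delta^2 U/2) \le \exp(-\delta^2 U/3)$, again using $\mu \le U$. A union bound over the two tails supplies the factor of $2$ in front of $\exp(-\delta^2 U/3)$.

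The argument has no real obstacle; the only points of care are the reparameterization $\gamma = \delta U/\mu$, which can exceed $1$ (so one needs a form of the upper Chernoff bound valid for all $\gamma \ge 0$, rather than only the $\gamma \in [0,1]$ version), and the two uses of $\mu \le U$ to absorb the $U^2/\mu$-style exponents back into the cleaner $U$.
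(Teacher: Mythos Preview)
Your proposal is correct and matches exactly what the paper indicates: the paper does not actually prove this lemma but only remarks that ``It can be shown by combining two applications of the more standard version,'' which is precisely the route you take. Your handling of the reparameterization $\gamma = \delta U/\mu$ (including the case $\gamma > 1$ via the $\exp(-\gamma^2\mu/(2+\gamma))$ form of the upper-tail bound) and the degenerate cases is careful and fills in the details the paper omits.
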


\paragraph{Concise range notation.} Multiple times throughout a paper, we want to denote a range around some value. Instead of writing, say, $[x-\delta,x+\delta]$, we introduce a more concise notation. In this specific case, we would simply write $\range{x\pm \delta}$. More formally, let $E$ be a numerical expression that apart from standard operations also contains a single application of the binary or unary operator $\pm$. We create two standard numerical expressions from $E$: $E_{-}$ and $E_{+}$ that replace ${\pm}$ with ${-}$ and ${+}$, respectively. Now we define $\range{E} \eqdef [\min\{E_-,E_+\},\max\{E_-,E_+\}]$.

As another example, consider $E = \sqrt{101\pm 20}$. We have $E_{-} = \sqrt{101-20}= 9$ and $E_{+} = \sqrt{101+20} = 11$. Hence $\range{\sqrt{101\pm 20}} = [\min\{9,11\},\max\{9,11\}] = [9,11]$.

We now show the properties of \globalPhase{} that we use to obtain our final parallel algorithm.

\subsection{Outline of the section}
We start by showing that \globalPhase{} computes a large matching as follows. Each vertex belonging to $H_i$ or $F_i$ that \globalPhase{} removes in the calls to \localPhase{} can decrease the maximum matching size in the graph induced by the remaining vertices by one. We show that the matching that \globalPhase{} constructs in the process captures on average at least a constant fraction of that loss. We also show that the effect of removing $R_i$ is negligible. More precisely, in Section~\ref{sec:expected-matching-size} we prove the following lemma.
\begin{restatable}{lemma}{lemmamatchingsize}\label{lem:modified_phase_matching_size}
Let $\threshold$, $G_\star = (V_\star,E_\star)$, $\machines$, and $\DD$ be parameters for \globalPhase{} such that
\begin{itemize}

	\item $\DD$ is an independent and $\eps$-near uniform distribution on assignments of vertices $V_\star$ to $[\machines]$
for $\eps \in [0,1/200]$,

\item $\frac{\threshold}{\machines} \ge 4000\prR^{-2}\ln^2n \iffalse= 4\cdot 10^{15} \cdot \ln^4 n\fi$,

\item the maximum degree of a vertex in $G_\star$ is at most $\frac{3}{2}\threshold$.
\end{itemize}
For each $i \in [\machines]$, let $H_i$, $F_i$, and $M_i$ be the sets constructed by \localPhase{} for the $i$-th induced subgraph.
Then, the following relationship holds for their expected sizes:
\[\sum_{i \in [\machines]}\Exp\left[\left|H_{i}\cup F_{i}\right|\right]
\le n^{-9} + 1200 \sum_{i \in [\machines]}\Exp\left[\left|M_{i}\right|\right]. \]
\end{restatable}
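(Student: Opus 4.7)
The plan is to mirror the proof structure of Lemma~\ref{lem:global_alg_expectation}: first establish a local analog of Lemma~\ref{lem:global_single_round} giving $\Exp[|M_{i}|] \ge c_1\,\Exp[|H_{i}|]$, and a local analog of Lemma~\ref{lem:hf} giving $\Exp[|F_{i}|] \le c_2\,\Exp[|H_{i}|]$ for absolute constants $c_1, c_2$, then combine and sum over $i$. The three complications relative to the single-machine global analysis are that (i) degrees are estimated from the random reference set $R_i$ rather than measured exactly, (ii) heavy-vertex selection is smoothed through $\prH$ instead of using a hard cutoff, and (iii) the random assignment $\Phi$ is only $\eps$-near uniform and must be viewed as a single draw from $\DD$. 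The $n^{-9}$ slack in the statement is there precisely to absorb the polynomially small failure probabilities of the concentration events that appear below.

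The first step is to prove a concentration primitive: for every $v \in V_\star$ and every $i \in [\machines]$, with probability $1-n^{-20}$, both $|N(v)\cap V_i|$ lies in $\range{(1\pm \tfrac{1}{10})\,d_{G_\star}(v)/\machines}$ and $\degEst{v}$ lies in $\range{(1\pm \tfrac{1}{10})\,|N(v) \cap V_i|}$ whenever the corresponding expectations are $\Omega(\log n)$. Two applications of Lemma~\ref{lem:chernoff_consequence} suffice: for the first, $|N(v) \cap V_i|$ is a sum of independent Bernoullis with expectation in $\range{(1 \pm \eps)\,d_{G_\star}(v)/\machines}$ by independence and $\eps$-near uniformity of $\DD$; for the second, conditional on $V_i$, the quantity $|N(v)\cap R_i|$ is a sum of independent Bernoulli$(\prR)$ variables. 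The hypothesis $\threshold/\machines \ge 4000\,\prR^{-2}\ln^2 n$ is exactly what is needed so that both expectations are at least $\Omega(\prR^{-1}\log n)$ whenever $d_{G_\star}(v) \ge \threshold/100$. For $d_{G_\star}(v) < \threshold/100$ the contribution to $|H_i|$ is negligible because $\prH(r) \le \tfrac{1}{2}\exp(-\prHmult/8) = n^{-\Omega(1)}$ for $r \le 1/4$. A union bound over the $O(n\machines)$ relevant events leaves a total failure mass well below $n^{-10}$, which is absorbed into the $n^{-9}$ slack.

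Conditioning on this good event, the friend bound follows by reversing the order of summation exactly as in Lemma~\ref{lem:hf}. Since $|N(v)\cap H_i| \le |N(v)\cap V_i| \le \tfrac{11}{10}\cdot\tfrac{3}{2}\threshold_\star < 4\threshold_\star$, the cap in $\prF$ is never active, so
\[
\Exp[|F_{i}|] = \sum_{v \in V_i}\frac{\Exp[|N(v)\cap H_{i}|]}{4\threshold_\star} = \sum_{u \in V_i}\frac{\Pr[u \in H_{i}]\,|N(u)\cap V_i|}{4\threshold_\star} \le \tfrac{1}{2}\,\Exp[|H_{i}|].
\]
For the matching, I exploit that a maximal matching in $G[H_i \cup F_i]$ has size at least half of any other matching in that subgraph, so it suffices to produce an explicit matching of expected size $\Omega(|H_i|)$. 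The natural candidate is the output of $\matchHeavy(H_i, F_i)$ run as a thought experiment on the same $H_i$ and $F_i$: conditional on $H_i$, each $u \in H_i$ has $|N(u)\cap V_i| \ge \tfrac{1}{4}\threshold_\star$ on the good event, every $v \in N(u)\cap V_i$ is independently in $F_i$ with probability at least $1/(4\threshold_\star)$ and, conditional on being in $F_i$, selects $u$ as $v_\star$ with probability $1/|N(v)\cap H_i|\ge 1/(4\threshold_\star)$. The color-and-select calculation in the proof of Lemma~\ref{lem:global_single_round} then transfers verbatim, giving a constant per-vertex probability of being matched and hence $\Exp[|M_{i}|] \ge c_1\,\Exp[|H_{i}|]$.

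The main obstacle will be handling the conditional dependencies between $R_i$, $H_i$, and $F_i$ cleanly: the probability that $v$ enters $F_i$ depends on $|N(v)\cap H_i|$, which is correlated with whether $u$ itself was placed in $H_i$. I plan to resolve this by iterated conditioning---first on $R_i$, which fixes all estimates $\degEst{\cdot}$ and makes the $H_i$-selection a product of independent Bernoullis across vertices, then on $H_i$, which renders the $F_i$-selection a product of independent Bernoullis across vertices---and only then invoke the calculation of Lemma~\ref{lem:global_single_round} on the freshly independent randomness. Summing the two bounds over $i$ yields $\sum_i \Exp[|H_i \cup F_i|] \le (1 + c_2)\sum_i \Exp[|H_i|] \le ((1+c_2)/c_1)\sum_i \Exp[|M_i|]$; tracking the concrete constants gives $(1+c_2)/c_1 \le 1200$, and adding the $n^{-10}$ concentration-failure mass produces the claimed bound with the stated $n^{-9}$ slack.
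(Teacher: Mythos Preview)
Your plan is essentially the paper's own proof: condition on a Chernoff-type concentration event for local degrees, show that low-global-degree vertices contribute negligibly to $H_i$ via the smallness of $\prH$ away from $1/2$, reverse the order of summation to bound $\Exp[|F_i|]$ by a constant times $\Exp[|H_i|]$, run \matchHeavy{} as a thought experiment to lower-bound the matching, and compare to the maximal matching via $|\tM_i|\le 2|M_i|$. The one claim to tighten is ``each $u\in H_i$ has $|N(u)\cap V_i|\ge \tfrac14\threshold_\star$ on the good event'': this is not literally true, since the randomized $\prH$-selection can (with tiny probability) place arbitrarily low-degree vertices in $H_i$; the paper handles this by explicitly splitting $H_i$ into $H'_i$ (global degree $<\threshold/8$, shown negligible) and $H''_i$ (global degree $\ge\threshold/8$, for which the local-degree lower bound holds on the concentration event), which is the partition you are already implicitly invoking.
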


Note that Lemma~\ref{lem:modified_phase_matching_size} requires that the vertices are distributed independently and near uniformly in the $m$ sets. This is trivially the case right after the vertices are partitioned independently at random.
However, in the final algorithm, after we partition the vertices, we run \emph{multiple} phases on each machine. In the rest of this section we show that running a single phase \emph{preserves} independence of vertex distribution and only slightly disturbs the uniformity (Lemma~\ref{lem:independence} and Lemma~\ref{lem:preserving_near_uniformity}). As we have mentioned before, independence stems from the fact that we use reference sets to estimate vertex degrees. We discard them at the end and condition on them, which leads to the independence of the distribution of vertices that are not removed.
\begin{restatable}{lemma}{lemmaindependence}\label{lem:independence}
Let $\DD$ be an independent distribution of assignments of vertices in $V_\star$ to $[\machines]$. Let $\CC$ be a non-zero probability $\machines$-configuration that can be constructed by \globalPhase{} for $\DD$. Let $V'_\star$ be the set of vertices of $V_\star$ that are not fixed by $\CC$. Then $\DD[\CC]$ is an independent distribution of vertices in $V'_\star$ on $[\machines]$.
\end{restatable}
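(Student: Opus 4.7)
The plan is to compute the conditional distribution $\DD[\CC]$ explicitly and verify that it factors as a product over $V'_\star$. Write $\varphi_{\CC} : V_\star \to [\machines]$ for the extension of any $\varphi : V'_\star \to [\machines]$ that sends each fixed vertex to the (unique) machine to which $\CC$ forces it; only such extensions can produce $\CC$ with nonzero probability, so $\Pr[\varphi\mid\CC]$ is proportional to $\Pr[\Phi = \varphi_{\CC}, \text{configuration} = \CC]$.

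I would first decompose this joint probability as a product of two factors: by independence of $\DD$, the assignment part equals $\prod_{v \in V_\star} p_v(\varphi_{\CC}(v))$; and conditional on the assignment, the calls to \localPhase{} run independently on each $V_i$, and within each call the three sets $R_i$, $H_i$, $F_i$ are formed by independent per-vertex coin flips. Hence the probability of producing $\CC$ given $\Phi = \varphi_{\CC}$ factors as $\prod_{v \in V_\star} q_{v,\varphi_{\CC}(v),\CC}$, where $q_{v,i,\CC}$ is the probability that vertex $v$ lands in precisely the sets of $\CC$ on machine $i$ that it is supposed to.

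The crucial step is to verify that $q_{v,i,\CC}$ depends only on $v$, $i$, and $\CC$, not on the assignments of other unfixed vertices. For an unfixed $v$ with $\varphi_{\CC}(v) = i$, this probability equals $(1-\prR)\,(1 - \prH(\degEst{v,i}/\threshold_\star))\,(1 - \prF(|N(v) \cap H_i|/\threshold_\star))$, where $\degEst{v,i} = |N(v) \cap R_i|/\prR$. Since $R_i$ and $H_i$ are part of the conditioned-on configuration $\CC$, these quantities are completely pinned down; the remaining randomness consists only of the three coin flips belonging to $v$ itself, which are drawn independently of everything else. This is exactly why \localPhase{} discards reference sets at the end of each phase: doing so turns $\degEst{v,i}$ and $|N(v) \cap H_i|$ into functions of $\CC$ alone rather than of hidden assignments of other vertices.

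Combining these observations, $\Pr[\Phi = \varphi_{\CC}, \CC]$ equals $K_\CC \cdot \prod_{v \in V'_\star} p_v(\varphi(v))\, q_{v,\varphi(v),\CC}$ for a constant $K_\CC$ that collects the contributions of fixed vertices. Summing over all $\varphi : V'_\star \to [\machines]$ and invoking independence once more yields $\Pr[\CC] = K_\CC \prod_{v \in V'_\star} \sum_{i \in [\machines]} p_v(i)\, q_{v,i,\CC}$. Dividing, $K_\CC$ cancels, and
\[
\Pr[\varphi \mid \CC] \;=\; \prod_{v \in V'_\star} \frac{p_v(\varphi(v))\, q_{v,\varphi(v),\CC}}{\sum_{i \in [\machines]} p_v(i)\, q_{v,i,\CC}},
\]
which is a product measure on $V'_\star$; hence $\DD[\CC]$ is independent. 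The main obstacle is carefully justifying the claim that $q_{v,i,\CC}$ is determined by $(v,i,\CC)$ alone --- once that is established, the rest is bookkeeping via Bayes' rule and the independence of $\DD$.
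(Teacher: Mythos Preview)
Your proposal is correct and relies on the same key observation as the paper: once the configuration $\CC$ is fixed, the probability that a vertex $v$ assigned to machine $i$ behaves consistently with $\CC$ is a function of $(v,i,\CC)$ alone, so the joint probability factors over vertices and Bayes' rule yields a product measure on $V'_\star$.

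The only structural difference is packaging. The paper isolates the factorization principle as a standalone lemma (if $\Pr[\EE\mid X=x]=\prod_i p_i(x_i)$ then conditioning on $\EE$ preserves independence) and then applies it in three successive stages---first conditioning on the reference sets $R_i$, then on the heavy sets $H_i$, then on the friend sets $F_i$---each time exhibiting the per-vertex cooperation functions explicitly. You instead collapse all three layers into a single product $q_{v,i,\CC}$ and invoke Bayes' rule once. Your route is more compact; the paper's staged version makes it slightly more transparent that the dependence of the $H$-coin bias on $R_i$ and the $F$-coin bias on $H_i$ causes no trouble, since each stage conditions on the previous one. Either way the substance is identical.
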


Independence of the vertex assignment is a very handy feature that allows us to use Chernoff-like concentration inequalities in the analysis of multiple phase emulation. However, although the vertex assignment of non-removed vertices remains independent across machines from phase to phase, as stated by Lemma~\ref{lem:independence}, their distribution is not necessarily uniform. Fortunately, we can show it is near uniform.
\\
The proof of near uniformity is the most involved proof in this paper. In a nutshell, the proof is structured as follows. We pick an arbitrary vertex $v$ that has not been removed and show that with high probability it has the same number of neighbors in all sets $R_i$. The same property holds for $v$'s neighbors in all sets $H_i$. We use this to show that the probability of a fixed configuration of sets removed in a single phase is roughly the same for all assignments of $v$ to subgraphs. In other words, if $v$ was distributed nearly uniformly before the execution of \globalPhase, it is distributed only slightly less uniformly after the execution.
\begin{restatable}{lemma}{lemmauniformity}\label{lem:preserving_near_uniformity}
Let $\threshold$, $G_\star = (V_\star,E_\star)$, $\machines$, and $\DD$ be parameters for \globalPhase{} such that
\begin{itemize}

\item $\DD$ is an independent and $\eps$-near uniform distribution on assignments of vertices $V_\star$ to $[\machines]$
for $\eps \in [0,(200 \ln n)^{-1}]$,

\item $\frac{\threshold}{\machines} \ge 4000\prR^{-2}\ln^2n \iffalse= 4\cdot 10^{15} \cdot \ln^4 n\fi$.

\end{itemize}
Let $\CC$ be an $m$-configuration constructed by \globalPhase. With probability at least $1-n^{-4}$ both the following properties hold:
\begin{itemize}
 \item The maximum degree in the graph induced by the vertices not fixed in $\CC$ is bounded by $\frac{3}{4}\threshold$.
 \item $\DD[\CC]$ is $60\prHmult \left(\left(\frac{\threshold}{\machines}\right)^{-1/4}+\eps\right)$-near uniform.
\end{itemize}

\end{restatable}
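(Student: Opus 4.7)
My plan is to handle the two claims separately and combine them by a single union bound at the end. For the near-uniformity claim, I would, for each vertex $v \in V_\star$ unfixed by $\CC$, control the conditional marginal $q_v(i) \eqdef \Pr_{\DD[\CC]}[\varphi(v)=i]$; Lemma~\ref{lem:independence} already says $\DD[\CC]$ is a product distribution, so controlling these marginals is enough. Writing Bayes's rule as $q_v(i) \propto p_v(i) \cdot \Pr[\CC \mid \varphi(v)=i]$, I would expand the generation process of $\CC$ step by step (the assignment $\Phi$, then $\{R_j\}_j$, then $\{H_j\}_j$, then $\{F_j\}_j$). The key algebraic observation is that every factor coming from a vertex $u \ne v$ is the same regardless of $i$: each fixed vertex has its assignment locked by $\CC$, while an unfixed $u \ne v$ contributes independently of where $v$ sits because $v \notin R_i$ by assumption, hence $v$ enters no degree estimate; moreover the $(1-\prR)$-factors from $\{R_j = R_j^{\CC}\}_j$ balance across $i$ since $|V_{\varphi(v)}|$ grows by one while every other $|V_j|$ shrinks by one. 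The only $i$-dependent factors that survive are the two contributions from $v$'s own non-inclusion in $H_i$ and $F_i$, yielding
\[\Pr[\CC \mid \varphi(v)=i] \;=\; A(\CC)\,(1-\prR)\,\bigl(1-\prH(|N(v)\cap R_i|/(\prR\threshold_\star))\bigr)\,\bigl(1-\prF(|N(v)\cap H_i|/\threshold_\star)\bigr),\]
with $A(\CC)$ independent of $i$.

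The second step is high-probability concentration of the $i$-dependent quantities across subgraphs. For a fixed $v$, its neighbors are assigned to subgraphs independently with probabilities in $\range{(1\pm\eps)/\machines}$ and then sampled into $R_i$ independently with probability $\prR$, so $|N(v)\cap R_i|$ is a sum of independent $[0,1]$-valued variables with expectation at most $U = O(\prR\threshold/\machines)$. Applying Lemma~\ref{lem:chernoff_consequence} with $\delta = (\threshold/\machines)^{-1/4}$, the failure probability is $\exp(-\Omega(\prR(\threshold/\machines)^{1/2}))$, which by the hypothesis $\threshold/\machines \ge 4000\prR^{-2}\ln^2 n$ is at most $n^{-\omega(1)}$. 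Dividing by $\prR\threshold_\star$ I obtain $|r_i - r_j| = O((\threshold/\machines)^{-1/4}+\eps)$ for the normalized ratios $r_i \eqdef |N(v) \cap R_i|/(\prR\threshold_\star)$, and an analogous bound for $|N(v)\cap H_i|/\threshold_\star$ via a two-stage Chernoff: first fix $\Phi$ and the $R_j$'s (which determines the per-neighbor heavy-inclusion probabilities), and then apply Chernoff to the independent heavy-selection coin flips conditioned on those probabilities. A union bound over all $n$ vertices and $\machines$ subgraphs keeps the overall failure probability comfortably below $n^{-4}$.

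Finally, I would convert the concentration into the near-uniformity conclusion by exploiting the smoothness of $\prH$ and $\prF$. By construction $1-\prH$ is log-linear with slope at most $\prHmult/2$ on $(1/2,\infty)$ and stays in $[1/2,1]$ on $(-\infty,1/2]$, so in both regimes $|\log((1-\prH(r_i))/(1-\prH(r_j)))| \le \prHmult\,|r_i-r_j|$, and the same Lipschitz-in-log estimate holds for $1-\prF$. Combining this with the concentration, the ratio $\Pr[\CC\mid\varphi(v)=i]/\Pr[\CC\mid\varphi(v)=j]$ lies in $\exp(O(\prHmult((\threshold/\machines)^{-1/4}+\eps)))$; since the exponent is $o(1)$ under the hypotheses, $1+x \le e^x \le 1+2x$ gives $q_v(i)/q_v(j) \in \range{1 \pm 60\prHmult((\threshold/\machines)^{-1/4}+\eps)}$, which is exactly the required near-uniformity after multiplying by the $\eps$-near-uniform prior ratio $p_v(i)/p_v(j)$. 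The maximum-degree claim then falls out of the same Chernoff bound specialized to a vertex $u$ with $\deg_{G_\star}(u) > \tfrac{3}{4}\threshold$: its estimate $\degEst{u}/\threshold_\star$ is $\tfrac{3}{4}-o(1)$ with high probability, so $\prH(\degEst{u}/\threshold_\star) \ge 1-\tfrac{1}{2}e^{-\prHmult/8+o(1)} = 1-n^{-\Omega(1)}$, and $u$ is removed into $H_i$ with the same probability. The hardest part, in my view, will be the Bayes bookkeeping in paragraph one---verifying that $A(\CC)$ really is $i$-independent requires tracking how $(1-\prR)$- and $(1-\prH)$-factors migrate between subgraphs as $\varphi(v)$ varies---followed by ensuring that the two-stage Chernoff for $|N(v)\cap H_i|$ is applied in a way that always sees independent Bernoullis.
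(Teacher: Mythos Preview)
Your overall architecture is the paper's: concentration of $|N(v)\cap R_i|$ and $|N(v)\cap H_i|$ across $i$, fed into a Bayes factorization of $\Pr[\CC\mid\varphi(v)=i]$, combined with the smoothness of $\prH$ and $\prF$. Your factorization (paragraph one) is correct, though the remark about ``$(1-\prR)$-factors balancing across $i$ since $|V_{\varphi(v)}|$ grows by one'' is confused and unnecessary: by independence of $\DD$, conditioning on $\varphi(v)=i$ leaves the distribution of every $u\ne v$ unchanged, so every factor not coming from $v$ itself is literally identical across $i$. (The paper phrases this as a thought experiment: remove $v$, let $q_\star$ be the probability of $\CC$, then reinsert $v$.)

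There are, however, two genuine gaps. First, your ``two-stage Chernoff'' for $|N(v)\cap H_i|$ is incomplete. Conditioning on $\Phi$ and the $R_j$'s and applying Chernoff to the heavy-selection coins yields $|N(v)\cap H_i|\approx h_{v,i}\eqdef\sum_{u\in N(v)\cap V_i}\prH(\degEst{u}/\threshold_\star)$, but you have not argued that the $h_{v,i}$ are close to one another across $i$. Once $\Phi$ and the $R_j$'s are fixed, $h_{v,i}$ is \emph{deterministic}, so there is no further Chernoff to apply. The paper closes this via an intermediate ``ideal weight'' $w_u\eqdef\prH(d_u/\threshold)$ and $W_v(V_i)\eqdef\sum_{u\in N(v)\cap V_i}w_u$: a Chernoff over the assignment $\Phi$ gives $W_v(V_i)\approx W_v(V_\star)/\machines$ uniformly in $i$; then the insensitivity of $\prH$ (your log-Lipschitz observation) gives $\prH(\degEst{u}/\threshold_\star)\approx w_u$ for each $u$, hence $h_{v,i}\approx W_v(V_i)$. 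This extra layer is not optional.

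Second, the claim that ``the same Lipschitz-in-log estimate holds for $1-\prF$'' is false as stated: $1-\prF(r)=1-r/4$ on $[0,4]$, and $\tfrac{d}{dr}\log(1-r/4)=-1/(4-r)$ blows up as $r\to 4$. The paper first establishes (via the $W_v$ analysis above together with $d_v<\tfrac34\threshold$) that the arguments $|N(v)\cap H_i|/\threshold_\star$ are uniformly bounded well below $4$; only then does $1-\prF$ become Lipschitz-in-log with constant $O(1)$ (far smaller than $\prHmult$). You need this boundedness both here and to supply a usable upper bound $U$ in Lemma~\ref{lem:chernoff_consequence} for the $H_i$ concentration step.
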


\subsection{Expected matching size}
\label{sec:expected-matching-size}
Now we prove Lemma~\ref{lem:modified_phase_matching_size}, i.e. we show that \globalPhase{} computes a large matching.
In the proof we argue that the expected total size of sets $H_i$ and $F_i$ is not significantly impacted by relatively low-degree vertices classified as heavy or by an unlucky assignment of vertices to subgraphs resulting in local vertex degrees not corresponding to global degrees.
Namely, we show that the expected number of friends a heavy vertex adds is $O(1)$ and at the same time the probability that the vertex gets matched is $\Omega(1)$.

\lemmamatchingsize*
\newcommand{\curG}{\textcolor{cyan}{G_\diamond}}%
\newcommand{\curV}{\textcolor{cyan}{V_\diamond}}%
\begin{proof}
\newcommand{\tM}{\widetilde{M}}
We borrow more notation from \globalPhase{} and the $m$ executions of \localPhase{} initiated by it.
For $i \in [\machines]$, $V_i$ is the set inducing the $i$-th subgraph. Value $\threshold_\star = \frac{\threshold}{\machines}$ is the rescaled threshold passed to the executions of \localPhase{}. $R_i$ is the reference set created by \localPhase{} for the $i$-th induced subgraph.

For each induced subgraph, \localPhase{} computes a maximal matching $M_i$ in Line~\ref{line:local_phase:matching}. While such a matching is always large---its size is at least half the maximum matching size---it is hard to relate its size directly to the sizes of $H_i$ and $F_i$. Therefore, we first analyze the size of a matching that would be created by $\matchHeavy(G_\star[H_i \cup F_i], H_i, F_i)$. We refer to this matching as $\tM_i$ and we later use the inequality $\left|\tM_i\right| \le 2\left|M_i\right|$.

We partition each $H_i$, $i \in [\machines]$, into two sets: $H'_{i}$ and $H''_{i}$.
$H'_{i}$ is the subset of vertices in $H_{i}$ of degree less than $\frac{1}{8}\threshold$ in $G_\star$. $H''_{i,t+1}$ is its complement, i.e., $H''_{i} \eqdef H_{i} \setminus H'_{i}$.
We start by bounding the expected total size of sets $H'_{i}$. What is the probability that a given vertex $v$ of degree less than $\frac{1}{8}\threshold$ is included in $\bigcup_{i \in [\machines]} H_{i}$?
	Suppose that $v \in V_k$, where $k \in [\machines]$.
The expected number of $v$'s neighbors in $R_k$ is at most
$(1+\eps) \cdot \prR \cdot \frac{1}{8} \threshold/\machines \le \frac{3}{16}\prR \threshold_\star$
due to the independence and $\eps$-near uniformity of $\DD[\CC]$.
Using the independence, Lemma~\ref{lem:chernoff_consequence}, and the lower bound on $\threshold_\star$, we obtain the following bound:
\[
\Pr\left[\prR\degEst{v} > \frac{1}{4}\prR\threshold_\star\right]
\le 2 \exp\left(-\frac{1}{3} \cdot \left(\frac{1}{3}\right)^2  \cdot \frac{3}{16} \prR \threshold_\star\right)
\le 2 \exp\left(-27 \ln n\right) = 2n^{-27}.
\]
If $\degEst{v} \le \frac{1}{4}\threshold_\star$, the probability that $v$ is selected to $H_k$ is at most $\prH(\degEst{v}/\threshold_\star) \le \prH(1/4) \le \frac{1}{2}n^{-12}$. Hence $v$ is selected to $H_k$---and therefore to $H'_k$---with probability at most $
2n^{-27} + \frac{1}{2}n^{-12} \le
n^{-12}$. This implies that $ \sum_{i \in [\machines]} \Exp\left[\left| H'_{i} \right| \right] \le n \cdot n^{-12} = n^{-11}$.

We also partition the sets of friends, $F_{i}$ for $i \in [\machines]$, into two sets each: $F'_i$ and $F''_i$. This partition is based on the execution of \matchHeavy{} for the $i$-th subgraph. In Line~\ref{line:randomized_matching:v_star}, this algorithm selects for every vertex $v \in F_i$ a random heavy neighbor $v_\star \in H_i$. If $v_\star \in H'_i$, we assign $v$ to $F'_i$. Analogously, if $v_\star \in H''_i$, we assign $v$ to $F''_i$.
Obviously, a heavy vertex in $H'_{i}$ can be selected only if $H'_{i}$ is non-empty. By Markov's inequality and the upper bound on $ \sum_{i \in [\machines]} \Exp\left[\left| H'_{i} \right| \right]$, the probability that at least one set $H'_{i}$ is non-empty is at most $n^{-11}$. Even if for all $i\in[\machines]$, all vertices in $F_i$ select a heavy neighbor in $H'_{i}$ whenever it is available,
the total expected number of vertices in sets $F'_{i}$ is at most $ \sum_{i \in [\machines]} \Exp\left[\left| F'_{i,t+1} \right| \right] \le n \cdot n^{-11} = n^{-10}$.

Before we proceed to bounding sizes of the remaining sets, we prove that with high probability, all vertices have a number of neighbors close to the expectation.
Let $\varphi:V_\star \to [\machines]$ be the assignment of vertices to subgraphs. We define $\EE$ as the event that for all $v \in V_\star$,
\[
\left|\frac{1}{m}\left|N(v) \cap V_\star\right| - \left|N(v) \cap V_{\varphi(v)}\right| \right|
\le
\frac{1}{16}\threshold_\star.
\]
Consider first one fixed $v \in V_\star$. The degree of $v$ in $G_\star$ is
$\left|N(v) \cap V_\star\right| \le \frac{3}{2}\threshold$.
Due to the near-uniformity and independence,
\[
\left|\frac{1}{m}\left|N(v) \cap V_\star\right| - \Exp\left[\left|N(v) \cap V_{\varphi(v)}\right|\right] \right|
\le
\eps \cdot \frac{3}{2} \frac{\threshold}{\machines} \le \frac{3}{400}\threshold_\star.
\]
This in particular implies that $\Exp\left[\left|N(v) \cap V_{\varphi(v)}\right|\right] \le \left(\frac{3}{2} + \frac{3}{400}\right)\threshold_\star \le 2\threshold_\star$.
Using the independence of $\DD$, Lemma~\ref{lem:chernoff_consequence}, and the lower bound on $\threshold_\star$ (i.e., $\threshold_\star = \tfrac{\threshold}{\machines} \ge 4000\prR^{-2}\ln^2n = 4\cdot 10^{15} \cdot \ln^4 n$),
\begin{align*}
\Pr\left[ \left| \Exp\left[\left|N(v) \cap V_{\varphi(v)}\right|\right] - \left|N(v) \cap V_{\varphi(v)}\right| \right| > \frac{1}{20}\threshold_\star \right]
& \le 2\exp\left(-\frac{1}{3} \cdot \left(\frac{1}{20} \cdot \frac{1}{2}\right)^2 \cdot 2\threshold_\star\right)\\
& \le 2\exp\left(-(10^{12} + 3) \ln n\right)\\
& \le n^{-(10^{12} + 2)} \le n^{-12}.
\end{align*}
As a result, with this probability, we have
\[
\left|\frac{1}{m}\left|N(v) \cap V_\star\right| - \left|N(v) \cap V_{\varphi(v)}\right| \right|
\le
\frac{1}{20}\threshold_{\star} + \frac{3}{400}\threshold_{\star} \le \frac{1}{16}\threshold_{\star}.
\]
By the union bound, this bound holds for all vertices in $V_\star$ simultaneously---and hence $\EE$ occurs---with probability at least $1 - n \cdot n^{-12} = 1 - n^{-11}$.

If $\EE$ does not occur, we can bound both $\sum_{i \in [\machines]}\left|H''_{i}\right|$ and $\sum_{i \in [\machines]}\left|F''_{i}\right|$ by $n$. This contributes at most $n^{-11}\cdot n = n^{-10}$ to the expected size of each of these quantities. Suppose now that $\EE$ occurs. Consider an arbitrary $v \in H''_{i}$ for some $i$.
The number of neighbors of $v$ in $V_i$ lies in the range $\left[\frac{1}{8}\threshold_\star - \frac{1}{16}\threshold_\star,
\frac{3}{2}\threshold_\star + \frac{1}{16}\threshold_\star\right] \subseteq \left[\frac{1}{16}\threshold_\star,2\threshold_\star\right]$.
Moreover, the expected number of vertices $w \in F''_{i}$ that select $v$ in $w_\star$ in Line~\ref{line:randomized_matching:v_star} of \matchHeavy{} is bounded by $2\threshold_\star \cdot \frac{1}{4\threshold_\star} = \frac{1}{2}$. It follows that $\Exp\left[\left|F''_{i}\right|\right] \le \frac{1}{2}\Exp\left[|H''_{i}|\right]$, given $\EE$. We now lower bound the expected size of $\tM_i$ given $\EE$.
What is the probability that some vertex $w \in F_{i}$ selects $v$ as $w_\star$ in \matchHeavy{} and $(v,w)$ is added to $\tM_i$?

This occurs if one of $v$'s neighbors $w$ is added to $F_{i}$ and selects $v$ as $w_\star$, and additionally, $v$ and $w$ are colored blue and red, respectively.
The number of $v$'s neighbors is at least $\frac{1}{16}\threshold_\star$. Since each vertex $w$ in $V_i$ has at most $2\threshold_\star$ neighbors, the number of heavy neighbors of $w$ is bounded by the same number.
This implies that in the process of selecting $F_{i}$, only the first branch in the definition of $\prF$ is used and each vertex $w$ is included with probability exactly equal to the number of its neighbors in $H_{i}$ divided by $4\threshold_{t+1}$.
Then each heavy neighbor of $w$ is selected as $w_\star$ with probability one over the number of heavy neighbors of $w$.
What this implies is that each neighbor $w$ of $v$ is selected for $F_{i}$ \emph{and} selects $v$ as $w_\star$ with probability exactly $(4\threshold_\star)^{-1}$.
Hence the probability that $v$ is \emph{not} selected as $w_\star$ by any of its at least $\frac{1}{16}\threshold_\star$ neighbors $w$ can be bounded by
\[
 \left(1 - \frac{1}{4\threshold_\star}\right)^{\frac{1}{16}\threshold_\star}
 \le
 \exp\left(-\frac{1}{4\threshold_\star}  \cdot \frac{1}{16} \threshold_\star\right)
 = e^{-1/64}.
\]
Therefore the probability that $v$ is selected by some vertex $w \in F_{i}$ as $w_\star$ is at least
$1-e^{-1/64} \ge 1/100$.
Then with probability $1/4$, these two vertices have appropriate colors and this or another edge incident to $v$ with the same properties is added to $\tM_{i}$.
In summary, the probability that an edge $(v,w)$ for some $w$ as described is added to $\tM_{i}$ is at least $1/400$. Since we do not count any edge in the matching twice for two heavy vertices, by the linearity of expectation $\Exp\left[\left| \tM_{i} \right|\right] \ge \frac{1}{400}\Exp\left[|H''_i|\right]$ given $\EE$. Overall, given $\EE$, we have
\[
\sum_{i \in [\machines]}
\Exp\left[
 \left|H''_{i}\right| +
 \left|F''_{i}\right|
 \right] \le
\frac{3}{2}\sum_{i \in [\machines]}
 \Exp\left[\left|H''_{i}\right|\right]
 \le
600\sum_{i \in [\machines]}
 \Exp\left[\left|\tM_{i}\right|\right].
\]
In general, without conditioning on $\EE$,
\[
\sum_{i \in [\machines]}
 \Exp\left[
 \left|H''_{i}\right| +
 \left|F''_{i}\right|
 \right] \le 2\cdot n^{-10}+
600\sum_{i \in [\machines]}
\Exp\left[
 \left|\tM_{i}\right|
 \right].
\]

We now combine bounds on all terms to finish the proof of the lemma.
\begin{align*}
 \sum_{i \in [\machines]}\Exp\left[\left|H_{i}\cup F_{i}\right|\right]
 & \le \sum_{i \in [\machines]}
 \Exp\left[
 \left|H'_{i}\right| +
 \left|F'_{i}\right| +
 \left|H''_{i}\right| +
 \left|F''_{i}\right|
 \right]
 \\
 & \le n^{-11} + n^{-10} + 2n^{-10} + 600
 \sum_{i \in [\machines]}\Exp\left[\left|\tM_{i}\right|\right]
 \\
 & \le n^{-9} + 1200\sum_{i \in [\machines]}\Exp\left[\left|M_{i}\right|\right].\qedhere
\end{align*}
\end{proof}
\undefine{\curG}

\subsection{Independence}
Next we prove Lemma~\ref{lem:independence}.
We start with an auxiliary lemma that gives a simple criterion under which an independent distribution remains independent after conditioning on a random event. Consider a random vector with independently distributed coordinates. Suppose that for any value of the vector, a random event $\EE$ occurs when all coordinates ``cooperate'', where each coordinate cooperates independently with probability that depends only on the value of that coordinate. We then show that the distribution of the vector's coordinates given $\EE$ remains independent.
\begin{lemma}\label{lem:cond_independence}
Let $k$ be a positive integer and $A$ an arbitrary finite set.
Let $X = (X_1,\ldots,X_k)$ be a random vector in $A^k$ with independently distributed coordinates. Let $\EE$ be a random event of non-zero probability. If there exist functions $p_i:A \to [0,1]$, for $i \in [k]$, such that for any $x=(x_1,\ldots,x_k) \in A^k$ appearing with non-zero probability,
\[\Pr[\EE | X = x] = \prod_{i=1}^k p_i(x_i),\]
then the conditional distribution of coordinates in $X$ given $\EE$ is independent as well.
\end{lemma}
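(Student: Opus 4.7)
My plan is to prove this by a direct Bayes'--rule computation and a factorization argument. The core observation is that conditional independence follows whenever the joint conditional mass factors as a product of single-coordinate functions, up to a normalizing constant.

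First I would fix any $x = (x_1, \ldots, x_k) \in A^k$ with $\Pr[X = x] > 0$ and apply Bayes' rule:
\[
\Pr[X = x \mid \EE] \;=\; \frac{\Pr[\EE \mid X = x]\,\Pr[X = x]}{\Pr[\EE]}.
\]
Using independence of the coordinates of $X$, we have $\Pr[X = x] = \prod_{i=1}^k \Pr[X_i = x_i]$, and by the hypothesis $\Pr[\EE \mid X = x] = \prod_{i=1}^k p_i(x_i)$. Substituting,
\[
\Pr[X = x \mid \EE] \;=\; \frac{\prod_{i=1}^k p_i(x_i)\,\Pr[X_i = x_i]}{\Pr[\EE]}.
\]
The numerator already factors over $i$; the denominator is a constant in $x$.

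Next I would compute the marginal $\Pr[X_i = x_i \mid \EE]$ by summing out the other coordinates. Writing $\Pr[X_i = x_i \mid \EE] = \Pr[\EE \cap X_i = x_i]/\Pr[\EE]$, and using independence of $X$ together with the factorization of $\Pr[\EE \mid X = x]$,
\[
\Pr[\EE \cap X_i = x_i] \;=\; p_i(x_i)\,\Pr[X_i = x_i]\,\prod_{j \ne i}\Bigl(\sum_{x_j' \in A} p_j(x_j')\,\Pr[X_j = x_j']\Bigr).
\]
The same calculation applied with no coordinate fixed yields $\Pr[\EE] = \prod_{j=1}^k \bigl(\sum_{x_j'} p_j(x_j')\,\Pr[X_j = x_j']\bigr)$, which in particular is positive by the hypothesis that $\Pr[\EE] > 0$, so all the per-coordinate sums are nonzero.

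Finally I would multiply the marginals and check they match the joint. Dividing,
\[
\Pr[X_i = x_i \mid \EE] \;=\; \frac{p_i(x_i)\,\Pr[X_i = x_i]}{\sum_{x_i' \in A} p_i(x_i')\,\Pr[X_i = x_i']},
\]
and taking the product over $i$ gives exactly the factored expression for $\Pr[X = x \mid \EE]$ derived above. Hence $\Pr[X = x \mid \EE] = \prod_{i=1}^k \Pr[X_i = x_i \mid \EE]$ for every $x$ with $\Pr[X = x] > 0$, which is the definition of independence of the conditional distribution. There is no real obstacle here; the only subtlety is to verify that $\Pr[\EE] > 0$ implies each per-coordinate normalizer is nonzero, so the marginal formulas are well defined.
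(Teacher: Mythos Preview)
Your proof is correct and follows essentially the same route as the paper's: both apply Bayes' rule, factor $\Pr[\EE]$ as $\prod_i \sum_{y} p_i(y)\Pr[X_i=y]$ using the independence of $X$ and the product hypothesis, observe that $\Pr[\EE]>0$ forces each per-coordinate sum to be nonzero, and then read off the conditional marginals $p_i(x_i)\Pr[X_i=x_i]/\sum_y p_i(y)\Pr[X_i=y]$ whose product is the joint. The only cosmetic difference is that the paper simply exhibits the factorization into valid probability mass functions, while you additionally verify explicitly that the product of the marginals recovers the joint.
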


\begin{proof}
Since the distribution of coordinates in $X$ is independent, there are $k$ probability mass functions $p'_i:A \to [0,1]$, $i \in [k]$, such that for every $x = (x_1,\ldots,x_k) \in A^k$, $\Pr[X = x] = \prod_{i=1}^{k} p'_i(x_i)$. The probability of $\EE$ can be expressed as
\begin{align*}
 \Pr[\EE] &= \sum_{x=(x_1,\ldots,x_k) \in A^k} \Pr[\EE \land X=x] =  \sum_{\substack{x=(x_1,\ldots,x_k) \in A^k \\ \Pr[X = x] > 0}} \Pr[\EE|X=x] \cdot \Pr[X=x]\\
 &= \sum_{x=(x_1,\ldots,x_k) \in A^k} \prod_{i=1}^k p_i(x_i) p'_i(x_i) = \prod_{i=1}^k \sum_{y \in A} p_i(y)p'_i(y).
\end{align*}
Note that since the probability of $\EE$ is positive, each multiplicative term $\sum_{y \in A} p_i(y)p'_i(y)$, $i \in [k]$, in the above expression is positive.
We can express the probability of any vector $x=(x_1,\ldots,x_k) \in A^k$ given $\EE$ as follows:
\begin{align*}
 \Pr[X = x | \EE] &= \frac{\Pr[\EE \land X=x]}{\Pr[\EE]}= \frac{\Pr[\EE|X=x] \cdot \Pr[X=x]}{\Pr[\EE]}\\
 &= \frac{\prod_{i=1}^k p_i(x_i) p'_i(x_i)}{\prod_{i=1}^k \sum_{y \in A} p_i(y)p'_i(y)} = \prod_{i=1}^k \frac{p_i(x_i) p'_i(x_i)}{\sum_{y \in A} p_i(y)p'_i(y)}.
\end{align*}
We define $p''_i:A \to [0,1]$ as $p''_i(x) \eqdef p_i(x_i) p'_i(x_i) / \sum_{y \in A} p_i(y)p'_i(y)$ for each $i \in [k]$. Each $p''_i$ is a valid probability mass function on $A$. As a result we have $\Pr[X = x | \EE] = \prod_{i=1}^k p''_i(x_i)$, which proves that the distribution of coordinates in $X$ given $\EE$ is still independent with each coordinate distributed according to its probability mass function $p''_i$.
\end{proof}
We now prove Lemma~\ref{lem:independence} by applying Lemma~\ref{lem:cond_independence} thrice. We refer to functions $p_i$, which describe the probability of each coordinate cooperating, as \emph{cooperation probability functions}.

\lemmaindependence*
\begin{proof}%
$\CC$ can be expressed as
\[\CC = \left(\family{R^\star},\family{H^\star},\family{F^\star}\right)\]
for some subsets $R^\star_{i}$, $H^\star_{i}$, and $F^\star_{i}$ of $V_\star$, where $i \in [\machines]$. We write $\Phi$ to denote the random assignment of vertices to sets selected in Line~\ref{line:full_phase:pick_assignment} of \globalPhase. $\Phi$ is a random variable distributed according to $\DD$.

Let $\EE_R$ be the event that for all $i \in [\machines]$, the reference set $R_{i}$ generated for the $i$-th induced subgraph by \localPhase{} equals exactly $R^\star_{i}$. A vertex $v$ that is assigned to a set $V_i$ is included in $R_{i}$ with probability exactly $\prR$, independently of other vertices. Hence once we fix an assignment $\varphi:V_\star \to [\machines]$ of vertices to sets $V_i$, we can express the probability of $\EE_R$ as a product of probabilities that each vertex cooperates. More formally, $\Pr[\EE_R | \Phi = \varphi] = \prod_{v \in V_\star} q_v(\varphi(v))$ for cooperation probability functions $q_v:[\machines] \to [0,1]$ defined as follows.
\begin{itemize}
  \item If $v \in \bigcup_{i \in [\machines]} R^\star_{i}$, there is exactly one $i \in [m]$ such that $v \in R^\star_{i}$. If $v$ is not assigned to $V_i$, $\EE_R$ cannot occur. If it is, $v$ cooperates with $\EE_R$ with probability exactly $\prR$, i.e., the probability of the selection for $R_{i}$. For this kind of $v$, the cooperation probability function is
  \[q_v(i) \eqdef \begin{cases}\prR&\mbox{if $v \in R^\star_{i}$,}\\0&\mbox{if $v \not\in R^\star_{i}$.}\end{cases}\]

  \item If $v \not\in \bigcup_{i \in [\machines]} R^\star_{i}$, $v$ cooperates with $\EE_R$ if it is not selected for $R_{\varphi(v)}$, independently of its assignment $\varphi(v)$, which happens with probability exactly $1-\prR$. Therefore, the cooperation probability can be defined as $q_v(i) \eqdef 1-\prR$ for all $i \in [\machines]$.
\end{itemize}
We invoke Lemma~\ref{lem:cond_independence} to conclude that the conditional distribution of values of $\Phi$ given $\EE_R$ is independent as well.

We now define an event $\EE_H$ that both $\EE_R$ occurs and for all $i \in [\machines]$, $H_{i}$, the set of heavy vertices constructed for the $i$-th subgraph equals exactly $H^\star_{i}$. We want to show that the conditional distribution of values of $\Phi$ given $\EE_H$ is independent. Note that if $\Phi$ is selected from the conditional distribution given $\EE_R$ (i.e., all sets $R_i$ are as expected) and we fix the assignment $\phi:V_\star \to [\machines]$ of vertices to sets $V_i$, then each vertex $v \in V_\star$ is assigned to $H_{\phi(v)}$---this the only set $H_i$ to which it can be assigned---independently of other vertices. As a result, we can express the probability of $\EE_H$ given $\EE_R$ and $\varphi$ being the assignment as a product of cooperation probabilities for each vertex. More precisely,
$\Pr[\EE_H | \Phi = \varphi, \EE_R] = \prod_{v \in V_\star} q'_v(\varphi(v))$ for cooperation probability functions $q'_v:[\machines] \to [0,1]$
defined as follows, where $\threshold_\star$ is the threshold used in the $\machines$ executions of \localPhase.
\begin{itemize}

  \item If $v \in \bigcup_{i \in [\machines]} H^\star_{i}$, then there is exactly one $i$ such that $v \in H^\star_{i}$. $\EE_H$ can only occur if $v$ is included in the corresponding $H_{i}$. This cannot happen if $v$ is not assigned to the corresponding $V_{i}$ by $\varphi$. If $v$ is assigned to this $V_i$, it has to be selected for $H_i$, which happens with probability $\prH\left( |N(v) \cap R^\star_i| / (\prR \threshold_\star) \right)$.
  The cooperation probability function can be written in this case as
  \[q'_v(i) \eqdef \begin{cases}
  \prH(|N(v)\cap R^\star_i|/(\prR \threshold_\star))&\mbox{if $v \in H^\star_i$,}\\
   0&\mbox{if $v \not\in H^\star_i$.}\\
  \end{cases}\]

  \item If $v \not\in \bigcup_{i \in [\machines]} H^\star_{i}$, $v$ cannot be included in $H_{i}$ corresponding to the set $V_i$ to which it is assigned for $\EE_H$ to occur. This happens with probability $1-\prH(|N(v)\cap R^\star_i|/(\prR \threshold_\star))$. Hence, we can define
  $q'_v(i) \eqdef 1-\prH(|N(v)\cap R^\star_i|/(\prR \threshold_\star))$ for all $i \in [\machines]$.

 \end{itemize}
We can now invoke Lemma~\ref{lem:cond_independence} to conclude that the distribution of values of $\Phi$ given $\EE_H$ is independent.

Finally, we define $\EE_F$ to be the event that both $\EE_H$ occurs and for each $i \in [\machines]$, $F_{i}$, the set of friends selected for the $i$-th induced subgraph, equals exactly $F^\star_{i}$. We observe that once $\Phi$ is fixed to a specific assignment $\varphi:V_\star \to [\machines]$ and $\EE_H$ occurs (i.e., all sets $R_i$ and $H_i$ are as in $\CC$), then each vertex is independently included in $F_{\varphi(v)}$ with some specific probability that depends only on $H_{\varphi(v)}$, which is already fixed. In this setting, we can therefore express the probability of $\EE_F$, which exactly specifies the composition of sets $F_i$, as a product of values provided by some cooperation probability functions $q''_v : [\machines] \to [0,1]$. More precisely, $\Pr[\EE_F | \Phi = \varphi, \EE_H] = \prod_{v \in V_\star} q''_v(\varphi(v))$ for $q''_v$ that we define next.
\begin{itemize}
  \item If $v \in \bigcup_{i \in [\machines]} F^\star_{i}$, then there is exactly one $i$ such that $v \in F^\star_{i}$. $\EE_F$ cannot occur if $v$ is not assigned to $V_i$ and selected for $F_{i}$. Hence, the cooperation probability function for $v$ is
  \[q''_v(i) \eqdef \begin{cases}
  \prF(|N(v)\cap H^\star_i|/\threshold_\star)&\mbox{if $v \in F^\star_i$,}\\
  0&\mbox{if $v \not\in F^\star_i$.}
  \end{cases}\]
\item If $v \not \in \bigcup_{i \in [\machines]} F^\star_{i}$, to whichever set $V_i$ vertex $v$ is assigned, it should not be included in $F_{i}$ in order for $\EE_F$ to occur. Hence,
$q''_v(i) \eqdef 1 - \prF(|N(v)\cap H^\star_{i_\star,t}|/\threshold_t)$.
\end{itemize}
We invoke Lemma~\ref{lem:cond_independence} to conclude that the distribution of values of $\Phi$ given $\EE_F$ is independent as well. This is a distribution on assignments for the entire set $V_\star$. If we restrict it to assignments of $V'_\star \subseteq V_\star$, we obtain a distribution that first, is independent as well, and second, equals exactly $\DD[\CC]$.
\end{proof}

\subsection{Near Uniformity}
\label{sec:near_uniform}
In this section we prove Lemma~\ref{lem:preserving_near_uniformity}. We begin by showing a useful property of $\prH$ (see Table~\ref{table:parameters} for definition). Recall that \globalAlgorithm{} selects $H$, the set of heavy vertices, by taking all vertices of degree at least $\threshold/2$. In \localPhase{} the degree estimate of each vertex depends on the number of neighbors in the reference set in the vertex's induced subgraph. We want the decision taken for each vertex to be approximately the same, independently of which subgraph it is assigned to. Therefore, we use $\prH$---which specifies the probability of the inclusion in the set of heavy vertices---which is relatively insensitive to small argument changes. The next lemma proves that this is indeed the case. Small additive changes to the parameter $x$ to $\prH$ have small multiplicative impact on both $\prH(x)$ and $1-\prH(x)$.

\begin{lemma}[Insensitivity of $\prH$]\label{lem:prop_of_prH}
Let $\delta \in [0,(\prHmult/2)^{-1}] = [0,(48 \ln n)^{-1}]$.
For any pair $x$ and $x'$ of real numbers such that $|x - x'| \le \delta$,
$$\prH(x') \in \range{\prH(x)(1\pm\prHmult\delta)}$$
and
$$1-\prH(x') \in \range{(1-\prH(x))(1\pm\prHmult\delta)}.$$
\end{lemma}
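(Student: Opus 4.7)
The plan is to show that $\log \prH$ is Lipschitz with constant $\prHmult/2$ on all of $\mathbb R$, and then transport this into the multiplicative bound by a routine application of the elementary inequalities $e^y \le 1+2y$ for $y\in[0,1]$ and $e^{-y}\ge 1-y$ for $y\ge 0$.

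First I would compute $(\log \prH)'(r)$ in the two branches. For $r\le 1/2$, $\log \prH(r) = -\ln 2 + \frac{\prHmult}{2}(r-1/2)$, so the derivative is exactly $\prHmult/2$. For $r>1/2$, a direct differentiation gives $\prH'(r)=\frac{\prHmult}{2}(1-\prH(r))$, and hence $(\log \prH)'(r) = \frac{\prHmult}{2}\cdot\frac{1-\prH(r)}{\prH(r)}$; since $\prH(r)>1/2$ in this regime, the factor $(1-\prH)/\prH$ lies in $(0,1)$, so $|(\log \prH)'(r)|\le \prHmult/2$. (The two one-sided derivatives at $r=1/2$ agree, so $\log \prH$ is continuously differentiable.) By the mean value theorem, $|\log \prH(x')-\log \prH(x)|\le \tfrac{\prHmult}{2}\,|x-x'|\le \tfrac{\prHmult\delta}{2}$.

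Exponentiating yields $\prH(x')/\prH(x)\in[e^{-\prHmult\delta/2},\,e^{\prHmult\delta/2}]$. Because the hypothesis $\delta\le (\prHmult/2)^{-1}$ means $y := \prHmult\delta/2 \in[0,1]$, the inequalities $e^y\le 1+2y$ on $[0,1]$ and $e^{-y}\ge 1-y\ge 1-2y$ give $e^{\pm\prHmult\delta/2}\in\range{1\pm \prHmult\delta}$, establishing the first claim.

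For the second claim, I would exploit the symmetry $1-\prH(r)=\prH(1-r)$, which is immediate from the definition (the exponent $\frac{\prHmult}{2}(r-1/2)$ flips sign under $r\mapsto 1-r$, and the two branches swap). Setting $\tilde x = 1-x$, $\tilde{x}' = 1-x'$ we have $|\tilde x-\tilde{x}'|=|x-x'|\le \delta$, so the first claim applied to $\tilde x,\tilde{x}'$ gives $\prH(\tilde{x}')\in\range{\prH(\tilde x)(1\pm\prHmult\delta)}$, which is precisely the desired bound on $1-\prH(x')$. I do not anticipate a genuine obstacle; the only mildly delicate point is checking $|(\log\prH)'|\le \prHmult/2$ uniformly, which in the $r>1/2$ branch uses that $\prH(r)\ge 1/2$ there.
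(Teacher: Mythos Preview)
Your proof is correct. Both you and the paper exploit the same symmetry to reduce the second claim to the first: the paper defines $f(r)$ with $\prH(r)=f(r-1/2)$ and $1-\prH(r)=f(-(r-1/2))$, which is exactly your observation $1-\prH(r)=\prH(1-r)$ in different notation. Both arguments also rely on the same elementary inequalities $e^{-y}\ge 1-y$ and $e^{y}\le 1+2y$ on $[0,1]$.

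The genuine difference is in how the multiplicative bound on $\prH$ itself is established. The paper splits into cases according to the sign of $r-1/2$: for $r\ge 1/2$ it uses the additive derivative bound $f'\le\prHmult/4$ together with $f\ge 1/2$ to convert an additive error into a multiplicative one; for $r<1/2$ it computes $f(r\pm\delta)$ directly from the exponential formula. Your route is more unified: by bounding $(\log\prH)'$ globally by $\prHmult/2$ you avoid the case split entirely, handling both branches with one mean-value-theorem application. The only branch-specific ingredient you need is $\prH(r)\ge 1/2$ for $r>1/2$, which is the same fact the paper uses in its first case. Your argument is tighter and shorter; the paper's is perhaps more transparent about where each half of the estimate comes from, but at the cost of some repetition.
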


\begin{proof}
We define an auxiliary function $f:\mathbb R \to [0,1]$:
  $$f(r) \eqdef
  \begin{cases}
  \frac{1}{2}\exp\left(\frac{\prHmult}{2}r\right)&
  \text{if $r \le 0$,}\\
  1-\frac{1}{2}\exp\left(-\frac{\prHmult}{2} r\right)&
  \text{if $r > 0$.}
  \end{cases}$$
It is easy to verify that for all $r \in \mathbb R$, $\prH(r) = f(r-1/2)$ and $1-\prH(r) = f(-(r-1/2))$.
Therefore, in order to prove the lemma, it suffices to prove that for any $r$ and $r'$ such that $|r-r'|\le \delta$,
\begin{equation}\label{eq:mult_change}
f(r)(1-\prHmult\delta) \le f(r') \le f(r)(1+\prHmult\delta),
\end{equation}
i.e., a small additive change to the argument of $f$ has a limited multiplicative impact on the value of $f$.

Note that $f$ is differentiable in both $(-\infty,0)$ and $(0,\infty)$. Additionally, it is continuous in the entire range---the left and right branch of the function meet at $0$---and both the left and right derivatives at $0$ are equal. This implies that it is differentiable at $0$ as well. Its derivative is
  $$f'(r) =
  \begin{cases}
  \frac{\prHmult}{4} \cdot \exp\left(\frac{\prHmult}{2}r\right)&
  \text{if $r \le 0$,}\\
  \frac{\prHmult}{4} \cdot \exp\left(-\frac{\prHmult}{2}r\right)&
  \text{if $r > 0$,}
  \end{cases}$$
which is positive for all $r$, and therefore, $f$ is strictly increasing.
Note that $f'$ is increasing in $(-\infty,0]$ and decreasing in $[0,\infty)$. Hence the global maximum of $f'$ equals $f'(0) = \prHmult/4$.

In order to prove Inequality \ref{eq:mult_change} for all $r$ and $r'$ such that $|r-r'|\le\delta$, we consider two cases. Suppose first that $r \ge 0$.
By the upper bound on the derivative of $f$,
$$f(r) - \frac{\prHmult}{4} \cdot |r - r'| \le f(r') \le f(r) + \frac{\prHmult}{4} \cdot |r - r'|.$$
Since $r\ge 0$, $f(r) \ge 1/2$. This leads to
$$f(r) - f(r) \cdot \frac{\prHmult}{2} \cdot |r - r'| \le f(r') \le f(r) + f(r) \cdot \frac{\prHmult}{2} \cdot |r - r'|.$$
By the bound on $|r - r'|$,
$$f(r) (1 - \prHmult \delta) \le f(r') \le f(r) (1 + \prHmult \delta),$$
which finishes the proof in the first case.

Suppose now that $r < 0$. Since $f$ is increasing, it suffices to bound the value of $f$ from below at $r-\delta$ and from above and at $r+\delta$. For $r-\delta$, we obtain
\begin{align*}
f(r-\delta) & = \frac{1}{2} \exp\left(\frac{\prHmult}{2} (r-\delta) \right) = f(r) \exp\left(-\frac{\prHmult}{2} \delta\right)\\
                & \ge f(r) \left(1 - \frac{\alpha}{2} \delta\right) \ge f(r) (1 - \prHmult \delta).
\end{align*}
For $r + \delta$, let us first define a function $g:\mathbb R \to \mathbb R$ as
$$g(y) \eqdef \frac{1}{2}\exp\left(\frac{\prHmult}{2}y\right).$$
	For $y \le 0$, $f(y) = g(y)$. For $y > 0$, $g'(y) \geq f'(y)$ and hence, for any $y \in \mathbb R$, $g(y) \ge f(y)$.
As a result, we obtain
$$f(r+\delta) \le g(r+\delta) = \frac{1}{2} \exp\left(\frac{\alpha}{2} (r+\delta) \right)
=f(r) \cdot \exp\left(\frac{\prHmult}{2} \delta \right).$$
By the bound on $\delta$ in the lemma statement, $\frac{\prHmult}{2}\delta \le 1$. It follows from the convexity of the exponential function that for any $y \in [0,1]$, $\exp(y) \le y\cdot \exp(1) + (1-y)\cdot\exp(0) \le 3y + (1 - y) = 1 + 2y$. Continuing the reasoning,
$$f(r+\delta) \le f(r) \cdot \left(1 + 2 \cdot \frac{\prHmult}{2}\delta\right) = f(r) (1 + \prHmult\delta),$$
	which finishes the proof of Inequality~(\ref{eq:mult_change}).
\end{proof}

The main result of this section is Lemma~\ref{lem:preserving_near_uniformity} that states that if a distribution $\DD$ of vertex assignments is near uniform, then \globalPhase{} constructs a configuration $\CC$ such that $\DD[\CC]$ is near uniform as well, and also, the maximum degree in the graph induced by the vertices not removed by \globalPhase{} is bounded.

\lemmauniformity*

\paragraph{Proof overview (of Lemma \ref{lem:preserving_near_uniformity}).} This is the most intricate proof of the entire paper. We therefore provide a short overview. First, we list again the variables in \globalPhase{} and \localPhase{} to which we refer in the proof and define additional convenient symbols. Then we introduce five simple random events (\allEvents) that capture properties needed to prove Lemma~\ref{lem:preserving_near_uniformity}. In Claim~\ref{claim:random_events}, we show that the probability of all these events occurring simultaneously is high. The proof of the claim follows mostly from a repetitive application of the Chernoff bound. In the next claim, Claim~\ref{claim:good_consequences}, we show that the occurrence of all the events has a few helpful consequences. First, high degree vertices get removed in the execution of \globalPhase{} (which is one of our final desired properties).
Second, each vertex $v$ that is not fixed in $\CC$ has a very similar number of neighbors in all sets $R_i$ and it has a very similar number of neighbors in all sets $H_i$. In the final proof of Lemma~\ref{lem:preserving_near_uniformity}, we use the fact that this implies that to whichever set $V_i$ vertex $v$ was assigned in \globalPhase{}, the probability of its removal in \globalPhase{} was more or less the same. This leads to the conclusion that if $v$ was distributed nearly uniformly in $\DD$, it is distributed only slightly less uniformly in $\DD[\CC]$.

\paragraph{Notation.}
To simplify the presentation, for the rest of Section~\ref{sec:near_uniform}, we assume that $\threshold$, $G_\star=(V_\star,E_\star)$, $\machines$, and $\DD$ are the parameters to \globalPhase{} as in the statement of Lemma~\ref{lem:preserving_near_uniformity}. Additionally, for each $i \in [\machines]$, $R_i$, $H_i$, and $F_i$ are the sets constructed by \localPhase{} for the $i$-th subgraph in the execution of \globalPhase.
We also write $\CC$ to denote the corresponding $\machines$-configuration, i.e., $\CC = \left(\family{R},\family{H},\family{F}\right)$.
Furthermore, for each $v \in V_\star$, $\degEst{v}$ is the estimate of $v$'s degree in the subgraph to which it was assigned. This estimate is computed in Line~\ref{line:local_phase:estimate_degrees} of \localPhase. We also use $\threshold_\star$ to denote the rescaled threshold passed in all calls to \localPhase, i.e., $\threshold_\star = \frac{\threshold}{\machines}$.

We also introduce additional notation, not present in \globalPhase{} or \localPhase. For each $v \in V_\star$, $d_v \eqdef |N(v) \cap V_\star|$, i.e., $d_v$ is the degree of $v$ in $G_\star$. For each vertex $v \in V_\star$, we also introduce a notion of its \emph{weight}: $w_v \eqdef \prH(d_v/\threshold)$, which can be seen as a very rough approximation of $v$'s probability of being selected for the set of heavy vertices. For any $v \in V_\star$ and $U\subseteq V_\star$, we also introduce notation for the total weight of $v$'s neighbors in $U$:
\[W_v(U) \eqdef \sum_{u \in N(v) \cap U} w_u.\]
Finally, for all $i \in [\machines]$ and $v \in V_\star$, we also introduce a slightly less intuitive notion of the expected number of heavy neighbors of $v$ in the $i$-th subgraph after the degree estimates are fixed in Line~\ref{line:local_phase:estimate_degrees} of \localPhase{} and before vertices are assigned to the heavy set in Line~\ref{line:local_phase:create_H}:
\[h_{v,i} \eqdef \sum_{u \in N(v)\cap V_{i}} \prH\left(\degEst{u} / \threshold_\star\right).\]
Obviously, each $h_{v,i}$ is a random variable.

\paragraph{Convenient random events.}
We now list five random events that we hope all to occur simultaneously with high probability. The first event intuitively is the event that high-degree vertices are likely to be included in the set of heavy vertices in Line~\ref{line:local_phase:create_H} of \localPhase.
\begin{NewEvent}{prH_high}\label{first_event}
For each vertex $v \in V_\star$ such that $d_v\ge \frac{3}{4}\threshold$,
\[\prH\left(\degEst{v}/\threshold_\star\right) \ge 1-\frac{1}{2}n^{-6}.\]
\end{NewEvent}
Another way to define this event would be to state that $\degEst{v}$ for such vertices $v$ is high, but this form is more suitable for our applications later. The next event is the event that all such vertices are in fact classified as heavy.
\begin{NewEvent}{high_heavy}
Each vertex $v \in V_\star$ such that $d_v \ge \frac{3}{4}\threshold$ belongs to $\bigcup_{i \in [\machines]} H_{i}$.
\end{NewEvent}
The next event is the event that low-degree vertices have a number of neighbors in each set $R_i$ close to the mean. This implies that if we were able to move a low-degree vertex $v$ to $V_i$, for any $i \in [\machines]$, its estimated degree $\degEst{v}$ would not change significantly.
\begin{NewEvent}{good_deg_estimate}For each vertex $v \in V_\star$ such that $d_v < \frac{3}{4}\threshold$ and each $i \in [\machines]$, \[\left|\frac{1}{\prR}\left| N(v) \cap R_{i} \right| - \frac{d_v}{\machines}\right| \le \threshold_\star^{3/4} + \frac{3}{4}\eps\threshold_\star.\]
\end{NewEvent}
As a reminder, we use $W_v(U)$ to denote the expected number of vertices in $N(v) \cap U$ that are selected as heavy, where every vertex $u$ is selected with respect to its global degree $d_u$. The next event shows that $W_v(V_{i})$ does not deviate much from its mean.
\begin{NewEvent}{good_weight}For each vertex $v \in V_\star$ such that $d_v < \frac{3}{4}\threshold$ and each $i \in [ \machines ]$, \[\left|W_v(V_{i}) - W_v(V_\star)/m \right| \le \threshold_{\star}^{3/4} + \frac{3}{4}\eps\threshold_{\star}.\]
\end{NewEvent}
Recall that $h_{v,i}$ intuitively expresses the expected number of $v$'s neighbors in the $i$-th induced subgraph at some specific stage in the execution of \localPhase{} for the $i$-th induced subgraph. The final event is the event that for all bounded $h_{v,i}$, the actual number of $v$'s neighbors in $H_i$ does not deviate significantly from $h_{v,i}$.
\begin{NewEvent}{expected_heavy_neighbors}\label{last_event}
For each vertex $v \in V_\star$ and each $i \in [ \machines ]$, if $h_{v,i} \le 2\threshold_\star$, then
\[\left| \left|N(v) \cap H_{i}\right| - h_{v,i} \right| \le \threshold_\star^{3/4}.\]
\end{NewEvent}

\paragraph{High probability of the random events.}
We now show that the probability of all the events occurring is high. The proof follows mostly via elementary applications of the Chernoff bound.
\begin{claim}\label{claim:random_events}
If $\eps \in [0,1/100]$ and $\frac{\threshold}{\machines} \ge 4000\prR^{-2}\ln^2n \iffalse= 4\cdot 10^{15} \cdot \ln^4 n\fi$, then
\allEvents{} occur simultaneously with probability at least $1-n^{-4}$.
\end{claim}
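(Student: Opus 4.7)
The plan is to bound each of the five events separately by repeated applications of the Chernoff bound (Lemma~\ref{lem:chernoff_consequence}) together with a union bound over vertices $v \in V_\star$ and indices $i \in [\machines]$, and then take one more union bound over the events themselves. Throughout, the crucial hypothesis is that $\threshold_\star \eqdef \threshold/\machines \ge 4000\prR^{-2}\ln^2 n$, which makes $\threshold_\star^{1/2}$ enormous compared to $\prR^{-1}\ln n$; this single inequality is what powers every concentration estimate below.

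I would start with \event{good_deg_estimate}. Fix $v$ with $d_v < \tfrac{3}{4}\threshold$ and $i \in [\machines]$. The random variable $|N(v)\cap R_i|$ is a sum of independent indicators (since $\DD$ is independent) and, by the $\eps$-near uniformity of $\DD$, satisfies $\Exp[|N(v)\cap R_i|] \in \range{\prR (d_v/\machines)(1\pm\eps)}$. Setting the absolute deviation target to $\prR(\threshold_\star^{3/4} + \tfrac{3}{4}\eps\threshold_\star)$ and splitting it as the concentration deviation $\prR\threshold_\star^{3/4}$ plus the expectation-shift $\tfrac{3}{4}\eps\prR\threshold_\star$, Lemma~\ref{lem:chernoff_consequence} with $U = \prR\threshold_\star$ gives a failure probability at most $2\exp(-\Omega(\threshold_\star^{1/2}/\prR))$, which is smaller than $n^{-10}$ by our lower bound on $\threshold_\star$. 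Union-bounding over the at most $n \cdot \machines \le n^2$ pairs $(v,i)$ yields failure probability $\le n^{-6}$ for \event{good_deg_estimate}. \event{good_weight} is handled identically: $W_v(V_i) = \sum_{u \in N(v)\cap V_\star} w_u\mathbb{1}[\Phi(u) = i]$ is a sum of independent $[0,1]$-valued random variables with expectation $\in \range{W_v(V_\star)(1\pm\eps)/\machines}$, and the same Chernoff calculation with $U \le \threshold \le \machines\threshold_\star$ applies (noting $d_v < \tfrac{3}{4}\threshold$ makes $W_v(V_\star) \le \tfrac{3}{4}\threshold$).

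Next I would establish \event{prH_high}. For $v$ with $d_v \ge \tfrac{3}{4}\threshold$ sitting on machine $\varphi(v) = k$, the expectation of $|N(v) \cap R_k|$ is at least $(1-\eps)\prR d_v/\machines \ge \tfrac{2}{3}\prR\threshold_\star$, and a standard multiplicative Chernoff bound gives $\degEst{v} \ge \tfrac{5}{8}\threshold_\star$ with probability at least $1 - n^{-10}$; plugging $r = 5/8$ into the definition of $\prH$ and using $\prHmult = 96\ln n$ produces $\prH(\degEst{v}/\threshold_\star) \ge 1 - \tfrac{1}{2}\exp(-6\ln n) = 1 - \tfrac{1}{2}n^{-6}$, as required. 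A union bound over $v$ gives failure probability $\le n^{-8}$. Then \event{high_heavy} is a direct corollary: given \event{prH_high}, each eligible $v$ is independently missed by $H_{\varphi(v)}$ with probability at most $\tfrac{1}{2}n^{-6}$, and the union bound over at most $n$ such vertices yields total failure probability $\le \tfrac{1}{2}n^{-5}$.

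Finally, for \event{expected_heavy_neighbors}, I would condition on the random assignment $\Phi$ and on all reference sets $\{R_i\}$, which fixes every degree estimate $\degEst{u}$ and therefore fixes the value $h_{v,i}$. Conditional on this data, each neighbor $u \in N(v)\cap V_i$ is put into $H_i$ independently with probability $\prH(\degEst{u}/\threshold_\star)$, so $|N(v) \cap H_i|$ is a sum of independent indicators of expectation exactly $h_{v,i}$; applying Lemma~\ref{lem:chernoff_consequence} with $U = 2\threshold_\star$ (the hypothesis of the event) and deviation $\threshold_\star^{3/4}$ gives failure probability $2\exp(-\Omega(\threshold_\star^{1/2}))$, again well below $n^{-10}$. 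A union bound over $v$ and $i$, followed by a final union bound across the five events, establishes the claim. The main obstacle I anticipate is bookkeeping the conditioning correctly in \event{expected_heavy_neighbors}, where the quantity $h_{v,i}$ itself is random and only becomes a valid Chernoff mean after we freeze $\Phi$ and the $R_i$'s, and making sure the $\eps$-near uniformity contribution in Events~\ref{event:good_deg_estimate} and~\ref{event:good_weight} is absorbed into the $\tfrac{3}{4}\eps\threshold_\star$ slack rather than needing additional deviation budget from Chernoff.
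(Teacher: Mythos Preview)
Your plan is essentially identical to the paper's proof: the same Chernoff-plus-union-bound treatment for each event, the same conditioning on $\Phi$ and the reference sets to make $h_{v,i}$ a genuine conditional mean in \event{expected_heavy_neighbors}, and the same absorption of the $\eps$-near-uniformity bias into the $\tfrac{3}{4}\eps\threshold_\star$ slack.

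One slip to fix: in your treatment of \event{good_weight} you write ``$U \le \threshold \le \machines\threshold_\star$.'' Taken literally this wrecks the bound, since with $U = \machines\threshold_\star$ and target deviation $\threshold_\star^{3/4}$ the Chernoff exponent becomes $\Theta(\threshold_\star^{1/2}/\machines)$, which is useless when $\machines$ is large. What you actually need (and what the paper uses) is $U \le \threshold_\star$: the expectation being controlled is $\Exp[W_v(V_i)] \le (1+\eps)W_v(V_\star)/\machines \le \threshold_\star$, not $W_v(V_\star)$ itself. With that correction your argument goes through exactly as in the paper. (A similar minor slip: your Chernoff exponent for \event{good_deg_estimate} should be $\Omega(\prR\,\threshold_\star^{1/2})$, not $\Omega(\threshold_\star^{1/2}/\prR)$; the former is what the calculation actually yields and is still comfortably below $n^{-10}$.)
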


\begin{proof}
We consider all events in order and later show by the union bound that all of them hold simultaneously with high probability.
In the proof of the lemma, we extensively use the fact that $\threshold_\star = \frac{\threshold}{\machines} \ge 4000\prR^{-2}\ln^2n = 4\cdot 10^{15} \cdot \ln^4 n$.

First, we consider \event{prH_high} and \event{high_heavy}, which we handle together. Consider a vertex $v$ such that $d_v \ge \frac{3}{4}\threshold$. Let $i_\star$ be the index of the set to which it is assigned. Since $\DD$ is $\eps$-near uniform, the expectation of $|N(v) \cap R_{i_\star}|$, the number of $v$'s neighbors in $R_{i_\star}$, is at least $(1-\eps)\frac{3}{4}\prR\frac{\threshold}{\machines} \ge \frac{297}{400}\prR\threshold_\star$.
Since vertices are both assigned to machines independently and included in the reference set independently as well, we can apply Lemma~\ref{lem:chernoff_consequence} to bound the deviation with high probability. The probability that the number of neighbors is smaller than $\frac{9}{10} \cdot \frac{297}{400}\prR\threshold_\star \ge \frac{5}{8}\prR\threshold_\star$ is at most
\[2\exp\left(- \frac{1}{3} \cdot \left(\frac{1}{10}\right)^2\cdot \frac{297}{400}\prR\threshold_\star \right)
\le 2\exp\left(- \frac{1}{405}\prR\threshold_\star\right)
\le 2n^{-9} \le \frac{1}{2}n^{-6}.
\]
Hence with probability at least $1-\frac{1}{2}n^{-6}$, $\degEst{v} \ge \frac{5}{8} \threshold_\star$ and
$\prH\left(\degEst{v} / \threshold_\star \right) \ge 1-\frac{1}{2}n^{-6}$.
If this is the case, $v$ is not included in the set of heavy vertices in Line~\ref{line:local_phase:create_H} of \localPhase{} with probability at most $\frac{1}{2}n^{-6}$. Therefore, $v$ has the desired value of $\prH\left(\degEst{v} / \threshold_\star \right)$ and belongs to $H_{i_\star}$ with probability at least $1 - n^{-6}$. By the union bound, this occurs for all high degree vertices with probability at least $1-n^{-5}$, in which case both \event{prH_high} and \event{high_heavy} occur.

We now show that \event{good_deg_estimate} occurs with high probability. Let $v$ be an arbitrary vertex such that $d_v < \frac{3}{4}\threshold$ and let $i \in [\machines]$. Let $X_{v,i} \eqdef \left|N(v) \cap R_{i}\right|$. $X_{v,i}$ is a random variable.
Since $\DD$ is $\eps$-near uniform, $\Exp \left[X_{v,i}\right] \in \range{(1\pm\eps)\prR d_v / \machines}$. In particular, due to the bounds on $d_v$ and $\eps$, $E[X_{v,i}] \le \prR \threshold_\star$. Due to the independence, we can use Lemma~\ref{lem:chernoff_consequence} to bound the deviation of $X_{v,i}$ from its expectation. We have
\begin{align*}
\Pr\left(|X_{v,i} - \Exp[X_{v,i}]| > \prR \threshold_\star^{3/4}\right)
&\le 2\exp\left(-\frac{1}{3}\cdot \left(\frac{1}{\threshold_\star^{1/4}}\right)^2 \cdot \prR \threshold_\star\right)\\
& = 2 \exp\left(-\frac{1}{3}\prR \threshold_\star^{1/2}\right)
\le 2n^{-21}.
\end{align*}
Hence with probability $1-2n^{-21}$, we have
\begin{align*}
\left|X_{v,i} - \prR \frac{d_v}{\machines}\right|
& \le \left|X_{v,i} - \Exp[X_{v,i}]\right| + \left|\Exp[X_{v,i}] - \prR \frac{d_v}{\machines}\right|
\le \prR \threshold_\star^{3/4} + \eps \prR \frac{d_v}{\machines}\\
& \le \prR \threshold_\star^{3/4} + \frac{3}{4}\eps \prR \threshold_\star.
\end{align*}
By dividing both sides by $\prR$, we obtain the desired bound
\[
\left|\frac{X_{v,i}}{\prR} - \frac{d_v}{\machines}\right|
=
\left|\frac{1}{\prR}\left| N(v) \cap R_{i} \right| - \frac{d_v}{\machines}\right|
\le
\threshold_\star^{3/4} + \frac{3}{4}\eps \threshold_\star.
\]
By the union bound, this holds for all $v$ and $i$ of interest---and therefore, \event{good_deg_estimate} occurs---with probability at least $1 - |V_\star| \cdot \machines \cdot 2n^{-21} \ge 1 - n^{-5}$.

We now move on to \event{good_weight}. Consider a vertex $v$ such that $d_v < \frac{3}{4} \threshold$ and $i \in [\machines]$.
Note that since the weight of every vertex is at most 1,
$W_v(V_\star)/\machines \le d_v/\machines < \frac{3}{4} \threshold_\star$.
Since $\DD[\CC]$ is $\eps$-near uniform, $\Exp\left[W_v(V_{i})\right] \in \range{(1\pm\eps)W_v(V_\star)/\machines}$.
In particular, $\Exp\left[W_v(V_{i})\right] \le \frac{101}{100}W_v(V_\star)/m \le
\frac{101}{100}\cdot\frac{3}{4}\threshold_\star \le \threshold_\star$. Since vertices are assigned to machines independently, we can apply Lemma~\ref{lem:chernoff_consequence} to bound the deviation of $W_v(V_{i})$ from the expectation:
\begin{align*}
\Pr\left(\left|W_v(V_{i}) - \Exp\left[W_v(V_{i})\right]\right| > \threshold_\star^{3/4} \right)
&\le 2\exp\left(- \frac{1}{3} \cdot \left(\frac{1}{\threshold_\star^{1/4}}\right)^2 \cdot \threshold_\star\right)\\
&= 2\exp\left(-\frac{1}{3}\threshold^{1/2}_\star\right)
\le 2n^{-21}.
\end{align*}
As a result, with probability at least $1-2n^{-21}$,
\begin{align*}
\left|W_v(V_{i}) - W_v(V_\star)/\machines\right|
& \le \left|W_v(V_{i}) - \Exp\left[W_v(V_{i})\right]\right| + \left|\Exp\left[W_v(V_{i})\right] - W_v(V_\star)/\machines\right|\\
& \le \threshold_\star^{3/4} + \eps W_v(V_\star)/m \le \threshold_\star^{3/4} + \eps d_v/m \le \threshold_\star^{3/4} + \frac{3}{4}\eps \threshold_\star.
\end{align*}
By the union bound, this holds for all $v$ and $i$ of interest---and therefore, \event{good_weight} occurs---with probability at least $1 - |V_\star|\cdot \machines \cdot 2n^{-21} \ge 1 - n^{-5}$.

To show that \event{expected_heavy_neighbors} occurs with high probability, recall first that $h_{v,i}$ is the expected number of $v$'s neighbors to be added in Line~\ref{line:local_phase:create_H} to $H_{i}$ in the execution of \localPhase{} for the $i$-th subgraph. Note that the decision of adding a vertex to $H_{i}$ is made independently for each neighbor of $v$. Fix a $v \in V_\star$ and $i\in[\machines]$ such that
$h_{v,i} \le 2 \threshold_\star$. We apply Lemma~\ref{lem:chernoff_consequence} to bound the probability of a large deviation from the expectation:
\begin{align*}
\Pr\left(\left| |N(v) \cap H_{i}| - h_{v,i} \right| > \threshold_\star^{3/4}\right)
& \le 2\exp\left(-\frac{1}{3}\cdot\left(\frac{1}{2\threshold_\star^{1/4}}\right)^2 \cdot 2\threshold_\star\right)\\
& = 2\exp\left(-\frac{1}{6}\threshold_\star^{1/2}\right)
\le 2 n^{-10}.
\end{align*}
By the union bound the probability that this bound does not hold for some $v$ and $i$ such that $h_{v,i} \le 2 \threshold_\star$ is by the union bound at most $|V_\star| \cdot \machines \cdot 2n^{-10} \le n^{-5}$. Hence, \event{expected_heavy_neighbors} occurs with probability at least $1 - n^{-5}$.

In summary, \allEvents{}
occur simultaneously with probability at least $1 - 4 \cdot n^{-5} \ge 1 - n^{-4}$ by another application of the union bound.
\end{proof}

\paragraph{Consequences of the random events.} We now show that if all the random events occur, then a few helpful properties hold for every vertex $v$ that is not fixed by the constructed configuration $\CC$. Namely, $v$'s degree is at most $\frac{3}{4}\threshold$, the number of $v$'s neighbors is similar in all sets $R_i$ is approximately the same, and the number of $v$'s neighbors is similar in all sets $H_i$.

\begin{claim}\label{claim:good_consequences}
If \allEvents{} occur for $\eps \in [0,(200 \ln n)^{-1}]$ and $\frac{\threshold}{\machines} \ge 4000\prR^{-2}\ln^2n \iffalse= 4\cdot 10^{15} \cdot \ln^4 n\fi$, then the following properties hold for every vertex $v \in V_\star$ that is not fixed by $\CC$:
\begin{enumerate}
 \item $d_v < \frac{3}{4}\threshold$.

 \item There exists $\baseR_v$ such that for all $i \in [\machines]$,
 \[\left| N(v) \cap R_{i} \right|/\prR
 \in \range{\baseR_v \pm \left(\threshold_\star^{3/4} + \frac{3}{4}\eps\threshold_\star \right)}.\]

 \item There exists $\baseH_v \in \left[0,\frac{3}{4}\threshold_\star\right]$ such that for all $i \in [\machines]$,
 \[\left| N(v) \cap H_{i} \right| \in \range{\baseH_v \pm \prHmult\left(\threshold_\star^{3/4} + \eps\threshold_\star\right) }.\]
\end{enumerate}
\end{claim}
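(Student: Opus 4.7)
The plan is to handle the three parts in order, with Part 3 being the main technical work.

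For Part 1, I would argue by contrapositive using \event{high_heavy} (Event 2): if $d_v \geq \frac{3}{4}\threshold$, then $v \in \bigcup_{i\in[\machines]} H_{i}$, so $v$ is fixed by $\CC$, contradicting the hypothesis. Part 2 is then an immediate consequence of Part 1 and \event{good_deg_estimate} (Event 3): setting $\baseR_v \eqdef d_v/\machines$, the inequality in Event 3 is exactly what is required.

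For Part 3, I would define $\baseH_v \eqdef W_v(V_\star)/\machines$; observe $\baseH_v \le d_v/\machines < \tfrac{3}{4}\threshold_\star$ by Part 1 and $w_u \le 1$. The core step is to show that for every neighbor $u \in N(v)$, the ``local'' probability $\prH(\degEst{u}/\threshold_\star)$ used to build $H_i$ is close to the ``global'' weight $w_u = \prH(d_u/\threshold)$. Split on $d_u$: if $d_u < \tfrac{3}{4}\threshold$, Event 3 applied to $u$ gives $|\degEst{u}/\threshold_\star - d_u/\threshold| \le \threshold_\star^{-1/4} + \tfrac{3}{4}\eps$, and under the parameter assumptions this quantity is at most $(\prHmult/2)^{-1}$, so Lemma~\ref{lem:prop_of_prH} yields $\prH(\degEst{u}/\threshold_\star) \in \range{w_u(1\pm \prHmult\delta)}$ with $\delta \eqdef \threshold_\star^{-1/4} + \tfrac{3}{4}\eps$. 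If $d_u \ge \tfrac{3}{4}\threshold$, Event 1 gives $\prH(\degEst{u}/\threshold_\star) \ge 1-\tfrac{1}{2}n^{-6}$, and the definition of $\prH$ combined with $\prHmult = 96\ln n$ gives $w_u \ge 1 - \tfrac{1}{2}n^{-12}$, so both lie within an additive $n^{-6}$ of each other---comfortably absorbed into the multiplicative $\prHmult\delta$ factor. Summing over $u \in N(v) \cap V_i$ gives
\[\left| h_{v,i} - W_v(V_i)\right| \le \prHmult \delta \cdot W_v(V_i).\]

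Next I would invoke \event{good_weight} (Event 4) to get $|W_v(V_i) - \baseH_v| \le \threshold_\star^{3/4} + \tfrac{3}{4}\eps\threshold_\star$, which together with the bound $\baseH_v \le \tfrac{3}{4}\threshold_\star$ gives $W_v(V_i) \le \threshold_\star$. Combining with the previous display yields
\[\left|h_{v,i} - \baseH_v\right| \le \prHmult\left(\threshold_\star^{3/4}+\tfrac{3}{4}\eps\threshold_\star\right) + \threshold_\star^{3/4} + \tfrac{3}{4}\eps\threshold_\star,\]
and in particular $h_{v,i} \le \baseH_v + \text{(small)} \le 2\threshold_\star$ (here one checks that $\prHmult\delta < 1$, which follows from $\threshold_\star \ge 4000\prR^{-2}\ln^2 n$ and $\eps \le (200\ln n)^{-1}$). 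This makes \event{expected_heavy_neighbors} (Event 5) applicable, giving $|\,|N(v)\cap H_i| - h_{v,i}| \le \threshold_\star^{3/4}$. One final triangle inequality, together with $\prHmult \ge 2$ to absorb the loose additive $2\threshold_\star^{3/4}$ into $\prHmult\threshold_\star^{3/4}$, yields the desired bound $|N(v)\cap H_i| \in \range{\baseH_v \pm \prHmult(\threshold_\star^{3/4}+\eps\threshold_\star)}$ uniformly in $i$.

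The main obstacle is the bookkeeping in Part 3: tracking three separate sources of error (the replacement of $\degEst{u}/\threshold_\star$ by $d_u/\threshold$, the replacement of $W_v(V_i)$ by its mean $\baseH_v$, and the Chernoff deviation of $|N(v)\cap H_i|$ from $h_{v,i}$), making sure the multiplicative $\prHmult\delta$ factor does not blow up, and verifying that the chosen parameter regime leaves enough slack for the constants to collapse cleanly into the target $\prHmult(\threshold_\star^{3/4}+\eps\threshold_\star)$ bound. The high-degree neighbor case also deserves care, since there $\degEst{u}/\threshold_\star$ need not be close to $d_u/\threshold$ in absolute terms---only both $\prH$ values are, by virtue of the saturation of $\prH$ near $1$.
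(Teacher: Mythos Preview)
Your approach is essentially the same as the paper's: Part~1 via Event~2, Part~2 via Event~3 with $\baseR_v = d_v/\machines$, and Part~3 by setting $\baseH_v = W_v(V_\star)/\machines$, comparing $\prH(\degEst{u}/\threshold_\star)$ to $w_u$ via Lemma~\ref{lem:prop_of_prH} (or saturation near $1$ for high-degree $u$), then chaining Events~4 and~5.

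There is, however, a genuine arithmetic gap in your final step. After using $W_v(V_i) \le \threshold_\star$ to bound $|h_{v,i} - W_v(V_i)| \le \prHmult\delta\,\threshold_\star = \prHmult\threshold_\star^{3/4} + \tfrac{3}{4}\prHmult\eps\threshold_\star$, then adding the Event~4 error $\threshold_\star^{3/4} + \tfrac{3}{4}\eps\threshold_\star$ and the Event~5 error $\threshold_\star^{3/4}$, you obtain
\[
(\prHmult + 2)\threshold_\star^{3/4} + \left(\tfrac{3}{4}\prHmult + \tfrac{3}{4}\right)\eps\threshold_\star.
\]
To get the target $\prHmult(\threshold_\star^{3/4} + \eps\threshold_\star)$ you would need $2\threshold_\star^{3/4} + \tfrac{3}{4}\eps\threshold_\star \le \tfrac{1}{4}\prHmult\eps\threshold_\star$, which fails when $\eps = 0$ (allowed by hypothesis). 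The justification ``$\prHmult \ge 2$ to absorb the loose additive $2\threshold_\star^{3/4}$ into $\prHmult\threshold_\star^{3/4}$'' does not work because that term is already at full strength.

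The fix is exactly what the paper does: instead of bounding $W_v(V_i) \le \threshold_\star$ before multiplying by $\prHmult\delta$, keep the multiplicative form $h_{v,i} \in \range{W_v(V_i)(1\pm\prHmult\delta)}$, substitute the Event~4 range for $W_v(V_i)$, and only then use $\baseH_v \le \tfrac{3}{4}\threshold_\star$ on the product $\baseH_v \cdot \prHmult\delta$. This yields a leading term of $\tfrac{3}{4}\prHmult\threshold_\star^{3/4}$ rather than $\prHmult\threshold_\star^{3/4}$, and the spare $\tfrac{1}{4}\prHmult\threshold_\star^{3/4}$ (with $\prHmult \ge 96$) comfortably absorbs the remaining $2\threshold_\star^{3/4}$ plus the extra $\threshold_\star^{3/4}$ from Event~5.
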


\begin{proof}We use in the proof of the claim the fact that
$\threshold_\star = \frac{\threshold}{\machines} \ge 4000\prR^{-2}\ln^2n = 4\cdot 10^{15} \cdot \ln^4 n$.
To prove the lemma, we fix a vertex $v$ that is not fixed by $\CC$.
The first property is directly implied by \event{high_heavy}. Suppose that $d_v \ge \frac{3}{4}\threshold$. Then $v$ is included in the $H_{i}$ corresponding to the subgraph to which it has been assigned and $v$ is fixed by $\CC$. We obtain a contradiction that implies that $d_v < \frac{3}{4}\threshold$.

For the second property, we now know that $d_v < \frac{3}{4}\threshold$. The property follows then directly from \event{good_deg_estimate} with $\baseR_v \eqdef d_v/\machines$.

The last property requires a more complicated reasoning. We set $\baseH_v \eqdef W_v(V_\star)/m < \frac{3}{4}\threshold_\star$. Consider any $i \in [m]$. By \event{good_weight},
\begin{equation}
W_v(V_{i}) \in \range{\baseH_v \pm \left(\threshold_\star^{3/4} + \frac{3}{4}\eps\threshold_\star\right)}.
\label{eq:bound_one}
\end{equation}
Consider now an arbitrary $u \in V_\star$. We bound the difference between $w_u = \prH\left(d_u/\threshold\right)$, which can be seen as the ideal probability of the inclusion in the set of heavy vertices, and $\prH\left(\degEst{u}/\threshold_\star\right)$, the actual probability of this event
in Line~\ref{line:local_phase:create_H} of the appropriate execution of \localPhase{}. \newcommand{\diff}{\delta_\star}
Let $\diff \eqdef \prHmult \left( \threshold_{\star}^{-1/4} + \frac{3}{4}\eps \right)$.
We consider two cases.
\begin{itemize}
 \item
If $d_u < \frac{3}{4}\threshold$, by \event{good_deg_estimate}, the monotonicity of $\prH$, and Lemma~\ref{lem:prop_of_prH},
\begin{align*}
\prH\left(\degEst{u}/\threshold_\star\right)
&\in\range{\prH\left(\frac{d_u}{\threshold} \pm \left(\threshold_\star^{-1/4} + \frac{3}{4}\eps\right) \right)}\\
&\subseteq \range{w_u\cdot\left(1 \pm \diff\right)}.
\end{align*}
Note that Lemma~\ref{lem:prop_of_prH} is applied properly because $\threshold_\star^{-1/4} + \frac{3}{4}\eps \le (200 \ln n)^{-1} + (200 \ln n)^{-1} \le (48 \ln n)^{-1}$.

\item If $d_u\ge \frac{3}{4}\threshold$, by \event{prH_high}, $\prH\left(\degEst{u}/\threshold_\star\right) \in \left[1-\frac{1}{2}n^{-6},1\right]$. Concurrently, $w_u \in \left[\prH(3/4),1\right] = \left[1-\frac{1}{2}n^{-12},1\right]$.
Because $\threshold_\star$ is relatively small, i.e., $\threshold_\star \le n$,
\[ \prH\left(\degEst{u}/\threshold_\star\right) \in \range{ w_u \left(1 \pm \threshold_\star^{-1/4}\right)}
\subseteq \range{w_u\cdot\left(1 \pm \diff\right)},
\]
which is the same bound as in the previous case.
\end{itemize}
It follows from the bound that we just obtained and the definitions of $W_v$ and $h_{v,i}$ that
\begin{align}
h_{v,i}
&= \sum_{u \in N(v) \cap V_{i}} \prH\left(\degEst{u} / \threshold_\star\right)
\in \range{\left(1 \pm \diff\right) \cdot \sum_{u \in N(v) \cap V_{i}} w_u}\nonumber\\
& = \range{W_v\left(V_i\right)\cdot \left(1 \pm \diff\right)}.
\label{eq:bound_two}
\end{align}
We now combine bounds (\ref{eq:bound_one}) and (\ref{eq:bound_two}):
\begin{align*}
h_{v,i}
& \in \left[
\baseH_v \left(1 - \diff \right) - \left(\threshold_\star^{3/4} + \frac{3}{4}\eps\threshold_\star\right)\left(1+\diff\right)
,
\baseH_v \left(1 + \diff \right) + \left(\threshold_\star^{3/4} + \frac{3}{4}\eps\threshold_\star\right)\left(1+\diff\right)
\right]\\
&\subseteq \range{\baseH_v \pm \left(\baseH_v\diff + \left(\threshold_\star^{3/4}
+ \frac{3}{4}\eps\threshold_\star\right)\left(1+\diff\right)\right)}.
\end{align*}
Due to the lower bound on $\threshold_\star$, we obtain
$\diff \le \prHmult \left( (200 \ln n)^{-1} + (200 \ln n)^{-1} \right) \le 1$.
This enables us to simplify and further transform the bound on $h_{v,i}$:
\begin{align*}
h_{v,i}
&\in \range{\baseH_v \pm \left(\baseH_v\diff + 2\left(\threshold_\star^{3/4} + \frac{3}{4}\eps\threshold_\star\right)\right)}\\
& \subseteq\range{
\baseH_v \pm
\left(
\frac{3}{4}\prHmult \threshold_\star^{3/4}
+
\frac{9}{16}\prHmult\eps\threshold_\star
+
2\threshold_\star^{3/4}
+
\frac{3}{2}\eps\threshold_\star
\right)
}\\
&\subseteq
\range{
\baseH_v \pm \prHmult\left(\frac{4}{5}\threshold_\star^{3/4} + \eps\threshold_\star\right)
}.
\end{align*}
By applying the bound on $\threshold_\star$ again, we obtain a bound on the magnitude of the second term in the above bound:
\[\prHmult\left(\frac{4}{5}\threshold_\star^{3/4} + \eps\threshold_\star\right)
=
\prHmult \left(\frac{4}{5}\threshold_\star^{-1/4} + \eps\right)\threshold_\star
\le
96 \ln n \left(\frac{1}{200 \ln n} + \frac{1}{200 \ln n}\right)\threshold_\star
\le
\threshold_\star.
\]
This implies that $h_{v,i} \le \baseH_v + \threshold_\star \le 2\threshold_\star$. The condition in \event{expected_heavy_neighbors} holds, and therefore, $\left| \left|N(v) \cap H_{i}\right| - h_{v,i}\right| \le \threshold_\star^{3/4}$. We combine this with the bound on $h_{v,i}$ to obtain
\[
\left|N(v) \cap H_{i}\right|
\in
\range{
\baseH_v \pm \left(\prHmult\frac{4}{5}\threshold_\star^{3/4} + \prHmult\eps\threshold_\star + \threshold_\star^{3/4}\right)
}
\subseteq
\range{
\baseH_v \pm \prHmult\left(\threshold_\star^{3/4} + \eps\threshold_\star\right)
}.\qedhere
\]
\end{proof}%
\undefine{\diff}

\paragraph{Wrapping up the proof of near uniformity.}
We now finally prove Lemma~\ref{lem:preserving_near_uniformity}. Recall that it states
that an $\eps$-near uniform $\DD$ is very likely to result in a near uniform $\DD[\CC]$ with a slightly worse parameter and that all vertices not fixed by $\CC$ have bounded degree. The proof combines the last two claims: Claim~\ref{claim:random_events} and Claim~\ref{claim:good_consequences}. We learn that $\CC$, the $\machines$-configuration constructed in the process is very likely to have the properties listed in Claim~\ref{claim:good_consequences}. One of those properties is exactly the property that all vertices not fixed by $\CC$ have bounded degree. Hence we have to prove only the near uniformity property.
We accomplish this by observing that the probability of $\CC$ equal to a specific $\machines$-configuration $\CC_\star$ with good properties---those in Claim~\ref{claim:good_consequences}---does not depend significantly on to which induced subgraph a given vertex $v$ not fixed in $\CC_\star$ is assigned. This can be used to show that the conditional distribution of $v$ given that $\CC=\CC_\star$ is near uniform as desired.

\begin{proof}[Proof of Lemma~\ref{lem:preserving_near_uniformity}]%
\newcommand{\assign}[1]{\EE_{\rightarrow #1}}%
\newcommand{\nCC}{\widetilde{\CC}}%
\newcommand{\nR}{\widetilde{R}}%
\newcommand{\nH}{\widetilde{H}}%
\newcommand{\nF}{\widetilde{F}}%
By combining Claim~\ref{claim:random_events} and Claim~\ref{claim:good_consequences}, we learn that with probability at least $1-n^{-4}$, all properties listed in the statement of Claim~\ref{claim:good_consequences} hold for $\CC$, the configuration constructed by \globalPhase{}. Since one of the properties is exactly the same as in the statement of Lemma~\ref{lem:preserving_near_uniformity}, it suffices to prove the other one: that $\DD[\CC]$ is
$60\prHmult \left(\threshold_\star^{-1/4}+\eps\right)$-near uniform for $\CC$ with this set of properties.

Fix $\nCC = \left(\family{\nR},\family{\nH},\family{\nF}\right)$ to be an $m$-configuration that has non-zero probability when \globalPhase{} is run for $\DD$ and has the properties specified by Claim~\ref{claim:good_consequences}. Consider an arbitrary vertex $v \in V_\star$. In order to prove the near uniformity of $\DD\left[\nCC\right]$,
we show that $v$ is assigned by it almost uniformly to $[\machines]$. Let $\EE$ be the event that \globalPhase{} constructs $\nCC$, i.e., $\CC = \nCC$. For each $i \in [\machines]$, let $\assign{i}$ be the event that $v$ is assigned to the $i$-th induced subgraph. Let $p:[\machines] \to [0,1]$ be the probability mass function describing the probability of the assignment of $v$ to each of the $\machines$ subgraphs in $\DD$. Obviously, $p(i) = \Pr[\assign{i}]$ for all $i \in [\machines]$. Due to the $\eps$-near uniformity of $\DD$, $p(i) = \range{\frac{1}{m} (1\pm \eps)}$.

For each $i \in [\machines]$, let $q_i \eqdef \Pr[\EE | \assign{i}]$. In order to express all $q_i$'s in a suitable form, we conduct a thought experiment. Suppose $v$ were not present in the graph, but the distribution of all the other vertices in the modified $\DD$ remained the same. Let $q_\star$ be the probability of $\EE$, i.e., $\CC = \nCC$, in this modified scenario.
How does the probability of $\EE$ change if we add $v$ back and condition on its assignment to a machine $i$? Note first that conditioning on $\assign{i}$ does not impact the distribution of the other vertices, because vertices are assigned to machines independently in $\DD$.
In order for $\EE$ still to occur in this scenario, $v$ cannot be assigned to any of $\nR_{i}$, $\nH_{i}$, or $\nF_{i}$, for which it is considered.
Additionally, as long as this the case, $v$ does not impact the behavior of other vertices, which only depends on the content of these sets and independent randomized decisions to include vertices. As a result we can express $q_i$ as a product of $q_\star$ and three probabilities: of $v$ not being included in sets $\nR_{i}$, $\nH_{i}$, or $\nF_{i}$.
\begin{equation}\label{eq:q_i}
q_i
= q_\star \cdot \left(1-\prR\right) \cdot
\left(1-\prH\left(\frac{\left| N(v) \cap \nR_{i} \right|/\prR}{\threshold_\star}\right)\right) \cdot
\left(1-\prF\left(\frac{\left| N(v) \cap \nH_{i}\right|}{\threshold_\star}\right)\right).
\end{equation}
Using the properties listed in Claim~\ref{claim:good_consequences}, we have
\[
\left| N(v) \cap \nR_{i} \right|/\prR
 \in \range{ \baseR_v \pm \left(\threshold_\star^{3/4} + \frac{3}{4}\eps\threshold_\star \right) },
\]
and
\[\left| N(v) \cap \nH_{i} \right| \in \range{ \baseH_v \pm \prHmult\left(\threshold_\star^{3/4} + \eps\threshold_\star\right) },\]
where $\baseR_v$ and $\baseH_v$ are constants independent of machine $i$ to which $v$ has been assigned and $\baseH \le \frac{3}{4}\threshold_\star$. In the next step, we use these bounds to derive bounds on the multiplicative terms in Equation (\ref{eq:q_i}) that may depend on $i$. We also repeatedly use the bounds $\threshold_\star = \frac{\threshold}{\machines} \ge 4000\prR^{-2}\ln^2n = 4 \cdot 10^{15} \cdot \ln^4 n$  and $\eps \le (200 \ln n)^{-1}$ from the lemma statement.
First, due to Lemma~\ref{lem:prop_of_prH},
\begin{align*}
1-\prH\left(\frac{\left| N(v) \cap \nR_{i} \right|/\prR}{\threshold_\star}\right)
& \in \range{1- \prH\left(\frac{\baseR_v}{\threshold_{\star}} \pm \left(\threshold_{\star}^{-1/4} + \frac{3}{4}\eps\right) \right)}
\\
& \subseteq \range{\left(1- \prH\left(\frac{\baseR_v}{\threshold_\star}\right)\right) \cdot \left(1 \pm \prHmult\left(\threshold_\star^{-1/4} + \frac{3}{4}\eps\right) \right)}.
\end{align*}
(Note that the application of Lemma~\ref{lem:prop_of_prH} was correct, because
$\threshold_\star^{-1/4} + \frac{3}{4}\eps \le (200 \ln n)^{-1} + (200 \ln n)^{-1} < (96 \ln n)^{-1}$.)
Second,
\begin{align*}
1-\prF\left(\frac{\left| N(v) \cap \nH_{i}\right|}{\threshold_\star}\right)
& \in \range{ 1 - \prF\left(\frac{\baseH_v}{\threshold_\star} \pm\prHmult\left( \threshold_\star^{-1/4} + \eps \right) \right)}.
\end{align*}
Since $\baseH_v/\threshold_\star \le \frac{3}{4}$ and $\prHmult\left( \threshold_\star^{-1/4} + \eps \right) \le (96 \ln n)
\cdot \left( (200 \ln n)^{-1} + (200 \ln n)^{-1} \right) < 1$, the argument to $\prF$ in the above bound is always less than $4$, and therefore, only one branch of $\prF$'s definitions gets applied. Hence, we can eliminate $\prF$:
\begin{align*}
1-\prF\left(\frac{\left| N(v) \cap \nH_{i}\right|}{\threshold_\star}\right)
& \in \range{ 1 - \frac{\baseH_v}{4\threshold_\star} \pm\frac{\prHmult}{4}\left( \threshold_\star^{-1/4} + \eps \right)}.
\end{align*}
Since $1 - \frac{\baseH_v}{4\threshold_\star} \ge \frac{3}{4}$, we can further transform the bound to
\begin{align*}
1-\prF\left(\frac{\left| N(v) \cap \nH_{i}\right|}{\threshold_\star}\right)
& \in \range{\left(1 - \frac{\baseH_v}{4\threshold_\star} \right) \left(1 \pm \frac{\prHmult}{3}\left( \threshold_\star^{-1/4} + \eps \right) \right)}.
\end{align*}
Let $\delta_1 \eqdef \prHmult\left(\threshold_\star^{-1/4} + \frac{3}{4}\eps\right)$ and $\delta_2 \eqdef \frac{\prHmult}{3}\left( \threshold_\star^{-1/4} + \eps \right)$.
As a result, every $q_i$ can be expressed as $q_i = \eta_v \lambda_i \lambda'_i$, where $\eta_v$ is a constant independent of $i$, $\lambda_i \in \range{1\pm\delta_1}$, and $\lambda'_i \in \range{1\pm\delta_2}$.
For every $i$, we can also write
\[\Pr[\EE \land \assign{i}] = \Pr[\EE | \assign{i}] \cdot \Pr[\assign{i}] = \eta_v \lambda_i \lambda'_i \cdot p(i) = \frac{\eta_v}{\machines} \lambda_i\lambda'_i\lambda''_i, \]
where $\lambda''_i \in \range{1\pm \eps}$. We now express the conditional probability of $v$ being assigned to the $i$-th subgraph in $\DD$ given $\EE$:
\[
\Pr[\assign{i} | \EE]  = \frac{\Pr[\EE \land \assign{i}]}{\sum_{j=1}^{\machines} \Pr[\EE \land \assign{j}]} = \frac{\lambda_i\lambda'_i\lambda''_i}{\sum_{j=1}^{\machines} \lambda_j\lambda'_j\lambda''_j}.
\]
Note that for any $i$, this implies that
\begin{equation}\label{eq:prob_bound}
\frac{1}{\machines} \cdot \frac{(1-\delta_1)(1-\delta_2)(1-\eps)}{(1+\delta_1)(1+\delta_2)(1+\eps)}
\le
 \Pr[\assign{i} | \EE]
\le
\frac{1}{\machines} \cdot \frac{(1+\delta_1)(1+\delta_2)(1+\eps)}{(1-\delta_1)(1-\delta_2)(1-\eps)}.
\end{equation}
Observe that
\[\delta_1 \le (96 \ln n) \cdot \left((7000 \ln n)^{-1} + (250 \ln n)^{-1}\right) < 1/2,\]
and
\[\delta_2 \le \frac{1}{3} \cdot (96 \ln n) \cdot \left((7000 \ln n)^{-1} + (200 \ln n)^{-1}\right) < 1/2.\]
Hence all of $\delta_1$, $\delta_2$, and $\eps$ are at most $1/2$. We can therefore transform (\ref{eq:prob_bound}) to
\[
\frac{1}{\machines} \cdot (1-\delta_1)^2(1-\delta_2)^2(1-\eps)^2
\le
 \Pr[\assign{i} | \EE]
\le
\frac{1}{\machines} \cdot (1+\delta_1)(1+\delta_2)(1+\eps)(1+2\delta_1)(1+2\delta_2)(1+2\eps),
\]
and then
\[
\frac{1}{\machines} \cdot (1-2\delta_1-2\delta_2-2\eps)
\le
 \Pr[\assign{i} | \EE]
\le
\frac{1}{\machines} \cdot (1+45\delta_1+45\delta_2+45\eps).
\]
Hence
\[
\Pr[\assign{i} | \EE]
\in \range{\frac{1}{\machines}\cdot (1 \pm 45(\delta_1+\delta_2+\eps))} \subseteq
\range{\frac{1}{\machines}\cdot \left(1 \pm 60\prHmult \left(\threshold_\star^{-1/4}+\eps\right)\right)},
\]
which finishes the proof that $\DD\left[\nCC\right]$ is $60\prHmult \left(\threshold_\star^{-1/4}+\eps\right)$-near uniform.
\end{proof}%
\undefine{\assign}%
\undefine{\nR}%
\undefine{\nH}%
\undefine{\nF} 
\section{Parallel Algorithm}
\label{sec:parallel}
In this section, we introduce our main parallel algorithm. It builds on the ideas introduced in \globalPhase.
\globalPhase{} randomly partitions the graph into $\machines$ induced subgraphs and runs on each of them \localPhase{}, which resembles a phase of \globalAlgorithm. As we have seen, the algorithm performs well even if vertices are assigned to subgraphs not exactly uniformly so long as the assignment is fully independent. Additionally, with high probability, if we condition on the configuration of sets $R_i$, $H_i$, and $F_i$ that were removed, the distribution of assignments of the remaining vertices is still nearly uniform and also independent.

These properties allow for the main idea behind the final parallel algorithm. We partition vertices randomly into $\machines$ induced subgraphs and then run \localPhase{} multiple times on each of them with no repartitioning in the meantime. In each iteration, for a given subgraph, we halve the local threshold $\threshold_\star$. This corresponds to multiple phases of the original global algorithm. As long as we can show that this approach leads to finding a large matching, the obvious gain is that \emph{multiple} phases of the original algorithm translate to $O(1)$ parallel rounds.
This approach enables our main result: the parallel round complexity reduction from $O(\log n)$ to $O((\log \log n)^2)$.

\newcommand{\setMachines}{$\machines \leftarrow \left\lfloor\sqrt{\frac{n\threshold}{S}}\right\rfloor$}%
\newcommand{\setPhases}{$\phases \leftarrow \left\lceil\frac{1}{16}\log_{120\prHmult}\left(\threshold/\machines\right)\right\rceil$}%
\begin{algorithm}[ht]\algfontsize
  \caption{$\parAlg(G,S)$\algdesc{The final parallel matching algorithm}}\label{alg:parallel}
  \KwIn{\\
  \qquad$\bullet$ graph $G = (V,E)$ on $n$ vertices\\
  \qquad$\bullet$ parameter $S \in \mathbb Z_+$ such that $S \le n$ and $S=n^{\Omega(1)}$ (each machine uses $O(S)$ space)
  }
  \KwOut{matching in $G$}

  \BlankLine

  $\threshold \leftarrow n$, $V' \leftarrow V$, $M \leftarrow \emptyset$

  \While{$\threshold \ge \frac{n}{S}\left(200 \ln n\right)^{32}$\label{line:par_alg:parallel-while-condition}}{

     \tcc{High-probability invariant:~maximum degree in $G[V']$ bounded by $\frac{3}{2}\threshold$}

     \setMachines\phantom{\setPhases}\hspace{-1.5cm}\tcc{number of machines used}
		\label{line:machines-assignment}

     \setPhases\phantom{\setMachines}\hspace{-1.5cm}\tcc{number of phases to emulate}
			\label{line:phases-assignment}

     Partition $V'$ into $\machines$ sets $V_1$, \ldots, $V_\machines$ by assigning each vertex independently uniformly at random.
     \label{line:vertex-partitioning}

     \SetKwBlock{ParallelBlock}{foreach $i \in [\machines]$ do in parallel\label{line:par_alg:start_par_block}}{}
     \ParallelBlock{
         If the number of edges in $G[V_i]$ is greater than $8S$, $V_i \leftarrow \emptyset$.\label{line:check-graph-size}\\
         \lFor{$j \in [\phases]$}{$(V_i,M_{i,j}) \leftarrow \localPhase\left(i, G[V_i], \threshold / \left(2^{j-1} \machines\right)\right)$}\label{line:edge-partitioning}\label{line:par_alg:end_par_block}
     }

     $V' \leftarrow \bigcup_{i=1}^{\machines} V_i$\label{line:V'-update}

     $M \leftarrow M \cup \bigcup_{i=1}^{\machines}\bigcup_{j=1}^{\phases} M_{i,j}$\label{line:M-update}

     $\threshold \leftarrow \threshold / 2^{\phases}$
		\label{line:next-threshold}
  }

  Compute degrees of vertices $V'$ in $G[V']$ and remove from $V'$ vertices of degree at least $2\threshold$.\label{line:remove-high-degree-vertices}

  Directly simulate $M' \leftarrow \globalAlgorithm(G[V'],2\threshold)$,
  using $O(1)$ rounds per phase.\label{line:simulate_global_alg}

  \Return{$M \cup M'$}
\end{algorithm}

We present \parAlg, our parallel algorithm, as Algorithm~\ref{alg:parallel}. We write $S$ to denote a parameter specifying the amount of space per machine. After the initialization of variables, the algorithm enters the main loop in Lines~\ref{line:par_alg:parallel-while-condition}--\ref{line:next-threshold}. The loop is executed as long as $\threshold$, an approximate upper bound on the maximum degree in the remaining graph, is large enough. The loop implements the idea of running multiple iterations of \localPhase{} on each induced subgraph in a random partition. At the beginning of the loop, the algorithm decides on $\machines$, the number of machines, and $\phases$, the number of phases to be emulated. Then it creates a random partition of the current set of vertices that results in $\machines$ induced subgraphs. Next for each subgraph, the algorithm first runs a security check that
the set of edges fits onto a single machine (see Line~\ref{line:check-graph-size}). If it does not, which is highly unlikely, the entire subgraph is removed from the graph. Otherwise, the entire subgraph is sent to a single machine that runs $\phases$ consecutive iterations of \localPhase. Then the vertices not removed in the executions of \localPhase{} are collected for further computation and new matching edges are added to the matching being constructed. During the execution of the loop, the maximum degree in the graph induced by $V'$, the set of vertices to be considered is bounded by $\frac{3}{2}\threshold$ with high probability. Once the loop finishes, we remove from the graph vertices of degree higher than $2\threshold$---there should be none---and we directly simulate \globalAlgorithm{} on the remaining graph, using $O(1)$ rounds per phase.

\subsection{Some Properties of Thresholds}

Before we analyze the behavior of the algorithm, we observe that the value of $\frac{\threshold}{\machines}$ inside the main loop is at least polylogarithmic and that the same property holds for the rescaled threshold that is passed to \localPhase{}.

\begin{lemma}\label{lem:threshold_lb}
Consider a single iteration of the main loop of \parAlg{} (Lines~\ref{line:par_alg:parallel-while-condition}--\ref{line:next-threshold}). Let $\threshold$ and $\machines$ be set as in this iteration. The following two properties hold:
\begin{itemize}
 \item $\threshold/\machines \ge (200 \log n)^{16}$.
 \item The threshold  $\threshold / \left(2^{j-1} \machines\right)$ passed to \localPhase{} in Line~\ref{line:edge-partitioning} is always at least
$(\threshold/\machines)^{15/16} \ge 4000\prR^{-2}\ln^2n $.
\end{itemize}
\end{lemma}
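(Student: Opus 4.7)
The plan is to derive both bullets by direct algebra from the while-loop guard on Line~\ref{line:par_alg:parallel-while-condition} together with the definitions of $\machines$ (Line~\ref{line:machines-assignment}) and $\phases$ (Line~\ref{line:phases-assignment}). The lemma is essentially a bookkeeping fact about the parameter settings, so I do not anticipate a conceptual obstacle; the only care needed is in the directions of the floor and ceiling roundings.

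For the first bullet, I will use that the floor in $\machines = \lfloor\sqrt{n\threshold/S}\rfloor$ only shrinks its value, hence $\machines \le \sqrt{n\threshold/S}$ and therefore $\threshold/\machines \ge \sqrt{\threshold S / n}$. The loop guard rewrites as $\threshold S / n \ge (200\ln n)^{32}$, so taking square roots yields $\threshold/\machines \ge (200\ln n)^{16}$. This matches the stated bound under the paper's convention of using $\log$ and $\ln$ interchangeably (as already happens inside the statement of Lemma~\ref{lem:preserving_near_uniformity}, which itself writes $4000\prR^{-2}\ln^2 n$ with $\prR = (10^6\log n)^{-1}$).

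For the second bullet, the threshold $\threshold/(2^{j-1}\machines)$ is decreasing in $j$, so it suffices to lower bound it at $j = \phases$. From the ceiling in the definition of $\phases$ I obtain $\phases - 1 \le \tfrac{1}{16}\log_{120\prHmult}(\threshold/\machines)$, so
\[
2^{\phases-1} \le (\threshold/\machines)^{(\log_{120\prHmult} 2)/16}.
\]
Since $\prHmult = 96\ln n \ge 1$ makes $120\prHmult \ge 2$, we have $\log_{120\prHmult} 2 \le 1$, and thus $2^{\phases-1} \le (\threshold/\machines)^{1/16}$. Dividing $\threshold/\machines$ by this bound gives $\threshold/(2^{\phases-1}\machines) \ge (\threshold/\machines)^{15/16}$, as claimed.

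The remaining numerical inequality $(\threshold/\machines)^{15/16} \ge 4000\prR^{-2}\ln^2 n$ will be checked by plugging in the first bullet and $\prR^{-1} = 10^6 \log n$: the left side is at least $(200\ln n)^{15} = 200^{15}\ln^{15} n$, of order $10^{34}\ln^{15}n$, whereas the right side is $4\cdot 10^{15}\log^2 n \cdot \ln^2 n$, of order $10^{15}\ln^4 n$; this comparison clearly holds for all $n\ge e$. The only mildly delicate point is ensuring that the floor in $\machines$ and the ceiling in $\phases$ both round in the favorable direction for the inequalities above, which they do.
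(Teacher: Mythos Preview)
Your proposal is correct and essentially identical to the paper's proof: both arguments introduce (explicitly or implicitly) the ratio $S\threshold/n$, use the loop guard to lower-bound it by $(200\ln n)^{32}$, then combine with $\machines \le \sqrt{n\threshold/S}$ to obtain $\threshold/\machines \ge (200\ln n)^{16}$; and for the second bullet both use $\phases - 1 \le \tfrac{1}{16}\log_{120\prHmult}(\threshold/\machines)$ to bound $2^{\phases-1} \le (\threshold/\machines)^{1/16}$, followed by the same numerical comparison. The only cosmetic difference is that the paper writes $2^{\phases-1} \le (120\prHmult)^{\phases-1} \le (\threshold/\machines)^{1/16}$ directly, whereas you compute the exponent as $(\log_{120\prHmult} 2)/16 \le 1/16$; these are the same step.
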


\begin{proof}%
\newcommand{\ratio}{\lambda}%
Let $\phases$ be also as in this iteration of the loop. The smallest threshold passed to \localPhase{} is $\threshold / (2^{\phases-1}\machines)$. Let $\ratio \eqdef S\threshold/n$, where $S$ is the parameter to \parAlg. Due to the condition in Line~\ref{line:par_alg:parallel-while-condition}, $\ratio \ge (200 \ln n)^{32}$.
Note that $\threshold  = \ratio n / S$.
Hence $\machines \le \sqrt{n \threshold / S} = \frac{n}{S} \sqrt{\ratio}$. This implies that $\threshold / \machines \ge \sqrt{\ratio} \ge (200 \ln n)^{16}$, which proves the first claim.
Due to the definition of $\phases$,
\[2^{\phases-1} \le (120 \prHmult)^{\phases - 1} \le (\threshold/\machines)^{1/16}.\]
This implies that
\[\threshold / (2^{\phases-1}\machines) \ge (\threshold/\machines)^{15/16}
\ge (200 \ln n)^{15} > 4\cdot 10^{15} \cdot \ln^4 n = 4000\prR^{-2}\ln^2n.\qedhere\]
\end{proof}

We also observe that the probability of any set of vertices deleted by the security check in Line~\ref{line:check-graph-size} of \parAlg{} is low as long as the maximum degree in the graph induced by the remaining vertices is bounded.
\begin{lemma}\label{lem:no_overflow}
Consider a single iteration of the main loop of \parAlg{} and let $\threshold$ and $V'$ be as in that iteration.
If the maximum degree in $G[V']$ is bounded by $\frac{3}{2}\threshold$, then the probability of any subset of vertices deleted in Line~\ref{line:check-graph-size} is $n^{-8}$.
\end{lemma}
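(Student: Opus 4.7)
The plan is to reduce the event that \emph{some} $V_i$ is cleared in Line~\ref{line:check-graph-size} to a union bound over $i \in [\machines]$ of the event $|E(G[V_i])| > 8S$, and to prove the latter happens with probability $n^{-\omega(1)}$ for each individual $i$. I will lean entirely on the size bounds from Lemma~\ref{lem:threshold_lb}: $\threshold/\machines \ge (200\log n)^{16}$ is super-polylogarithmic, and since $\machines = \lfloor\sqrt{n\threshold/S}\rfloor$ together with $n\threshold/S \ge (200\log n)^{32}$ (from the while-loop condition), one has $\machines^2 \ge (1-o(1))\,n\threshold/S$, i.e.\ $n\threshold/\machines^2 \le (1+o(1))\,S$. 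No second-moment or polynomial-concentration machinery should be needed.

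I will establish two high-probability events. First, $\EE_{\mathrm{deg}}$: every $v \in V'$ and every $j \in [\machines]$ satisfy $|N(v) \cap V_j| \le 3\threshold/\machines$. For each pair, $|N(v) \cap V_j|$ is a $\mathrm{Binomial}(d_v, 1/\machines)$ variable with $d_v \le \tfrac{3}{2}\threshold$, hence mean at most $\tfrac{3}{2}\threshold/\machines$; a Chernoff upper-tail bound gives failure probability $\exp(-\Omega(\threshold/\machines))$, which beats any polynomial in $n$ since $\threshold/\machines$ is super-polylogarithmic. A union bound over the at most $n^2$ pairs therefore preserves $\EE_{\mathrm{deg}}$ with probability $1 - n^{-\omega(1)}$. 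Second, $\EE_{\mathrm{size}}$: every $j \in [\machines]$ satisfies $|V_j| \le 2n/\machines + O(\log n)$. Here $|V_j|$ is $\mathrm{Binomial}(|V'|, 1/\machines)$ with mean at most $n/\machines$, and Chernoff plus a union bound over $\machines \le n$ indices suffice.

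Given both events, for any fixed $i$ the handshake identity combined with $\EE_{\mathrm{deg}}$ gives
\[
2|E(G[V_i])| = \sum_{v \in V_i} |N(v) \cap V_i| \le |V_i| \cdot \frac{3\threshold}{\machines} \le \left(\frac{2n}{\machines} + O(\log n)\right) \cdot \frac{3\threshold}{\machines}.
\]
The leading term is $6n\threshold/\machines^2 \le (6+o(1))\,S$ by the arithmetic above; the lower-order term equals $O(\log n \cdot \threshold/\machines) \le O(\log n \cdot \sqrt{S\threshold/n}) \le O(\log n \cdot \sqrt{S})$ using $\threshold \le n$, which is $o(S)$ because $S = n^{\Omega(1)}$. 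Hence $|E(G[V_i])| \le 4S < 8S$ for every $i$ under these events, so no $V_i$ is emptied.

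The main (mild) obstacle is ensuring that the Chernoff bound behind $\EE_{\mathrm{deg}}$ is strong enough \emph{uniformly} over vertices $v$ whose actual degrees $d_v$ may be much smaller than the worst case $\tfrac{3}{2}\threshold$; but since the target $3\threshold/\machines$ is a single absolute threshold and the Chernoff upper tail past a level dominating the mean decays like $\exp(-\Omega(\text{level}))$, the estimate $\exp(-\Omega(\threshold/\machines))$ holds for every $v$ regardless of $d_v$, which is exactly what the union bound needs.
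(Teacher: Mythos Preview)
Your proposal is correct and follows essentially the same approach as the paper: bound the degree of each vertex within the induced subgraphs via Chernoff (using $\threshold/\machines \ge (200\log n)^{16}$ from Lemma~\ref{lem:threshold_lb}), bound $|V_i|$ via Chernoff, multiply via the handshake identity, and then invoke $\machines \approx \sqrt{n\threshold/S}$ to land under $8S$. The only cosmetic differences are that the paper bounds $|N(v)\cap V_{\varphi(v)}|$ only for the subgraph $v$ is assigned to (union bound over $n$ vertices rather than your $n\machines$ pairs), uses the slightly tighter constants $2\threshold/\machines$ and $2n/\machines$, and works out the explicit $n^{-8}$ rather than $n^{-\omega(1)}$.
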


\begin{proof}
Let $\machines$ be as in the same iteration of the main loop of \parAlg.
Consider a single vertex $v \in V'$. The expected number of $v$'s neighbors assigned to the same subgraph is at most $\frac{3}{2} \threshold / \machines$. Recall that due to Lemma~\ref{lem:threshold_lb}, $\frac{\threshold}{\machines} \ge 200 \ln n$. Since the assignment of vertices to machines is fully independent,
by Lemma~\ref{lem:chernoff_consequence} (i.e., the Chernoff bound), the probability that $v$ has more than $2\threshold/\machines$ neighbors is bounded by
\[
2\exp\left(-\frac{1}{3} \cdot \left(\frac{1}{3}\right)^2 \cdot \frac{3}{2}\cdot\frac{\threshold}{\machines}\right)
\le 2 \exp\left(- \frac{1}{18} \cdot 200 \ln n\right) \le n^{-10}.
\]
Therefore, by the union bound, with probability $1-n^{-9}$, no vertex has more than $2\threshold$ neighbors in the same induced subgraph.
As $|V'| \le n$, the expected number of vertices in each set $V_i$ constructed in the iteration of the main loop is at most $n/\machines \ge \threshold/\machines \ge 200 \ln n$. What is the probability that $|V_i| > 2n/\machines$?  By the independence of vertex assignments and Lemma~\ref{lem:chernoff_consequence}, the probability of such event is at most
\[
2\exp\left(- \frac{1}{3} \cdot \frac{n}{\machines} \right) \le 2\exp\left(- \frac{1}{3} \cdot 200 \ln n \right)
\le n^{-10}.
\]
Again by the union bound, the event $|V_i| \le 2 n / \machines$, for all $i$ simultaneously, occurs with probability at least $1-n^{-9}$.
Combining both bounds, with probability at least $1-2n^{-9} \ge 1-n^{-8}$, all induced subgraphs have at most $2n/\machines$ vertices and
the degree of every vertex is bounded by $2\threshold/\machines$. Hence the number of edges in one induced subgraph is at most
$\frac{1}{2} \cdot \frac{2n}{\machines} \cdot \frac{2\threshold}{\machines}= \frac{2n\threshold}{\machines^2}$. By the definition of $\machines$ and the setting of parameters in the algorithm, $\machines \ge \frac{1}{2}\sqrt{\frac{n\threshold}{S}}$, where $S$ is the parameter to \parAlg. This implies that the number of edges is at most $2n\threshold / \left(\frac{1}{2}\sqrt{\frac{n\threshold}{S}}\right)^2 = 8S$ in every induced subgraph with probability $1-n^{-8}$,
in which case no set $V_i$ is deleted in Line~\ref{line:check-graph-size} of \parAlg.
\end{proof}

\subsection{Matching Size Analysis}

The parallel algorithm runs multiple iterations of \localPhase{} on each induced subgraph, without repartitioning. A single iteration on all subgraphs corresponds to running \globalPhase{} once. We now show that in most cases, the global algorithm simulates \globalPhase{} on a well behaved distribution with independently assigned vertices and all vertices distributed nearly uniformly conditioned on the configurations of the previously removed sets $R_i$, $H_i$, and $F_i$. We also show that the maximum degree in the remaining graph is likely to decrease gracefully during the process.

\begin{lemma}\label{lem:good_events}
With probability at least $1-n^{-3}$:
\begin{itemize}
 \item all parallel iterations of \localPhase{} in \parAlg{} on each induced subgraph correspond to running \globalPhase{} on independent and $(200 \ln n)^{-1}$-near uniform distributions of assignments,
 \item the maximum degree of the graph induced by the remaining vertices after the $k$-th simulation of \globalPhase{} is $\frac{3}{2}\threshold / 2^k$.
\end{itemize}
\end{lemma}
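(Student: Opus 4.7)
The plan is to combine Lemmas~\ref{lem:independence}, \ref{lem:preserving_near_uniformity} and~\ref{lem:no_overflow} inductively across the phases executed by \parAlg{}. First I would observe that a single iteration of the main loop faithfully emulates $\phases$ consecutive calls to \globalPhase{}: after the fresh random partition in Line~\ref{line:vertex-partitioning}, the assignment $\Phi$ is drawn from the fully independent, $0$-near uniform product distribution $\DD_0$, and by iterated application of Lemma~\ref{lem:independence} the conditional distribution of $\Phi$ restricted to the vertices not yet fixed by the accumulated configuration $\CC_1 \cup \cdots \cup \CC_j$ remains independent. Since \globalPhase{} is by definition ``draw an assignment from the current distribution and call \localPhase{} on each $G[V_i]$'', reusing the original $\Phi$ across inner phases is statistically identical to drawing a fresh sample from the current conditional distribution in each phase, so the inner \localPhase{} calls really do correspond to a sequence of \globalPhase{} invocations on independent distributions.

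Second, I would inductively control the near-uniformity parameter $\eps_j$ of the distribution before the $j$-th inner phase, starting from $\eps_0 = 0$. As long as $\eps_j \le (200\ln n)^{-1}$ and the rescaled threshold $\threshold_j \eqdef \threshold/(2^{j-1}\machines)$ meets the precondition $\threshold_j \ge 4000\prR^{-2}\ln^2 n$ of Lemma~\ref{lem:preserving_near_uniformity} (which follows from Lemma~\ref{lem:threshold_lb}), that lemma yields, with probability at least $1-n^{-4}$ per phase,
\[
\eps_{j+1} \le 60\prHmult\bigl(\threshold_j^{-1/4} + \eps_j\bigr), \qquad \text{max degree after phase } j \le \tfrac{3}{4}\cdot\tfrac{\threshold}{2^{j-1}} = \tfrac{3}{2}\cdot\tfrac{\threshold}{2^{j}}.
\]
Unrolling the recurrence and using $\threshold_j \ge (\threshold/\machines)^{15/16}$ uniformly (Lemma~\ref{lem:threshold_lb}) gives $\eps_\phases \le 2(60\prHmult)^\phases \cdot (\threshold/\machines)^{-15/64}$. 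The choice $\phases = \lceil \tfrac{1}{16}\log_{120\prHmult}(\threshold/\machines)\rceil$ forces $(120\prHmult)^\phases \le 120\prHmult\cdot(\threshold/\machines)^{1/16}$, so $\eps_\phases \le 240\prHmult\cdot(\threshold/\machines)^{-11/64}$. Plugging in $\prHmult = 96\ln n$ and $\threshold/\machines \ge (200\ln n)^{16}$, this simplifies to $O(\ln n)\cdot(200\ln n)^{-11/4}$, which is $o\bigl((200\ln n)^{-1}\bigr)$ for large $n$; since the $\eps_j$ are non-decreasing in $j$, this closes the induction.

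Third, across consecutive main-loop iterations the max-degree invariant ``$\le \tfrac{3}{2}\threshold$'' is preserved: the last inner phase of an iteration ends with max degree $\le \tfrac{3}{2}\threshold/2^\phases$, and Line~\ref{line:next-threshold} simultaneously updates $\threshold \leftarrow \threshold/2^\phases$; the fresh partition in the next iteration then restarts the near-uniformity induction with $\eps_0 = 0$. Lemma~\ref{lem:no_overflow} additionally ensures, with probability at least $1 - n^{-8}$ per iteration, that no $V_i$ is wiped out by the security check in Line~\ref{line:check-graph-size}. Because the overall $\threshold$ halves with each inner phase and starts at $n$, the total number of phases across the whole run is at most $\log_2 n$; a union bound over these applications of Lemmas~\ref{lem:preserving_near_uniformity} and~\ref{lem:no_overflow} yields overall failure probability at most $\log_2 n \cdot (n^{-4} + n^{-8}) \le n^{-3}$ for $n$ sufficiently large.

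The main obstacle is the calculation in the second paragraph: the specific choice of $\phases$ must exactly balance the geometric growth factor $(60\prHmult)^\phases$ from the repeated near-uniformity degradation against the smallness $\threshold_j^{-1/4}$ of the smoothing term, and one must verify that the resulting $\eps_\phases$ stays comfortably below $(200\ln n)^{-1}$ so that the inductive hypothesis required by Lemma~\ref{lem:preserving_near_uniformity} can be reused throughout all $\phases$ inner phases within every main-loop iteration.
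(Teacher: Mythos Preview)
Your proposal is correct and follows essentially the same route as the paper: fix an iteration of the main loop, use Lemma~\ref{lem:independence} to keep the conditional distributions independent, apply Lemma~\ref{lem:preserving_near_uniformity} phase by phase to get the recurrence $\eps_{j+1}\le 60\prHmult\bigl((\threshold/\machines)^{-15/64}+\eps_j\bigr)$, unroll it and use the definition of $\phases$ together with Lemma~\ref{lem:threshold_lb} to keep all $\eps_j$ below $(200\ln n)^{-1}$, and union-bound over at most $\log n$ phases. The only real differences are cosmetic: you carry an extra $240\prHmult$ factor (harmless for large $n$, whereas the paper absorbs it by bounding $(120\prHmult)^{\phases-1}$ rather than $(120\prHmult)^{\phases}$), and you fold the security-check argument from Lemma~\ref{lem:no_overflow} into this lemma, while the paper defers that to the proof of Theorem~\ref{theorem:constant_apx}.
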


\begin{proof}
We first consider a single iteration of the main loop in \parAlg{}. Let $\threshold$, $\phases$, and $\machines$ be set as in this iteration of the loop.
For $j \in [\phases]$, let $\threshold_j \eqdef \threshold/\left(2^{j-1} \machines\right)$ be the threshold passed to \localPhase{} for the $j$-th iteration of \localPhase{} on each of the induced subgraphs. The parallel algorithm assigns vertices to subgraphs and then iteratively runs \localPhase{} on each of them. In this analysis we ignore the exact assignment of vertices to subgraphs until they get removed as a member of one of sets $R_i$, $H_i$, or $F_i$.
Instead we look at the conditional distribution on assignments given the configurations of sets $R_i$, $H_i$, and $F_i$ removed in the previous iterations corresponding to \globalPhase.
We write $\DD_j$, $1 \le j \le \phases$, to denote this distribution of assignments before the execution of $j$-th iteration of \localPhase{} on the induced subgraphs, which corresponds to the $j$-th iteration of \globalPhase{} for this iteration of the main loop of \parAlg{}. Additionally, we write $\DD_{\phases + 1}$ to denote the same distribution after the $\tau$-th iteration, i.e., at the end of the execution of the parallel block in Lines~\ref{line:par_alg:start_par_block}--\ref{line:par_alg:end_par_block} of \parAlg.
Due to Lemma~\ref{lem:independence}, the distributions of assignments are all independent. We define $
\eps_j$, $j \in [\phases+1]$, to be the minimum positive value such that $\DD_j$ is $\eps_j$-near uniform. Obviously, $\eps_1 = 0$, since the first distribution corresponds to a perfectly uniform assignment. We want to apply Lemma~\ref{lem:preserving_near_uniformity} inductively to bound the value of $\eps_{j+1}$ as a function of $\eps_j$ with high probability. The lemma lists two conditions: $\eps_j$ must be at most $(200 \ln n)^{-1}$ and the threshold passed to \globalPhase{} has to be at least $4000\prH^{-2}\ln^2 n$. The latter condition holds due to Lemma~\ref{lem:threshold_lb}. Hence as long as $\eps_j$ is sufficiently small, Lemma~\ref{lem:preserving_near_uniformity} implies that with probability at least $1-n^{-4}$,
\[
\eps_{j+1} \le 60\prHmult\left(\left(\frac{\threshold}{2^{\phases-1}\machines}\right)^{-1/4}+\eps_{j}\right)
\le
60\prHmult\left(\left(\frac{\threshold}{\machines}\right)^{-15/64}+\eps_{j}\right),
\]
and no high degree vertex survives in the remaining graph.
One can easily show by induction that if this recursion is satisfied for all $1 \le j \le \tau$, then $\eps_{j} \le (120 \alpha)^{j-1} \cdot \left(\frac{\threshold}{\machines}\right)^{-15/64}$ for all $j \in [\phases+1]$. In particular, by the definition of $\phases$ and Lemma~\ref{lem:threshold_lb}, for any $j \in [\phases]$,
\[\eps_{j}
\le
(120 \prHmult)^{\tau - 1} \cdot \left(\frac{\threshold}{\machines}\right)^{-15/64}
\le
\left(\frac{\threshold}{\machines}\right)^{1/16} \cdot \left(\frac{\threshold}{\machines}\right)^{-15/64}
\le
\left(\frac{\threshold}{\machines}\right)^{-11/64} \le (200 \ln n)^{-1},
\]
This implies that as long the unlikely events specified in Lemma~\ref{lem:preserving_near_uniformity} do not occur for any phase in any iteration of the main loop of \parAlg, we obtain the desired properties: all intermediate distributions of possible assignments are $(200 \ln n)^{-1}$-near uniform and
the maximum degree in the graph decreases at the expected rate. It remains to bound the probability of those unlikely events occurring for any phase. By the union bound, their total probability is at most $\log n \cdot n^{-4} \le n^{-3}$.
\end{proof}

We now prove that the algorithm finds a large matching with constant probability.

\begin{theorem}
\label{theorem:constant_apx}
Let $M_{\rm OPT}$ be an arbitrary maximum matching in a graph $G$. With $\Omega(1)$ probability,
\parAlg{} constructs a matching of size $\Omega(|M_{\rm OPT}|)$.
\end{theorem}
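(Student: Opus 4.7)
The plan is to show $E[|M|] = \Omega(|M_{\rm OPT}|)$ for the matching $M$ output by \parAlg, and then invoke the deterministic bound $|M|\le|M_{\rm OPT}|$ together with a reverse-Markov argument to conclude constant-probability $|M|=\Omega(|M_{\rm OPT}|)$. Writing $V_{\rm end}$ for the vertex set that enters the final call $\globalAlgorithm(G[V_{\rm end}],2\threshold)$ in Line~\ref{line:simulate_global_alg}, letting $M_{\rm loop}$ and $M_{\rm final}$ denote the contributions of the main loop and this final call respectively, and writing $M^\star$ for the maximum matching of $G[V_{\rm end}]$, the starting observation (following the proof of Corollary~\ref{cor:global_alg_is_good}) is that every edge of $M_{\rm OPT}$ either has both endpoints surviving to $V_{\rm end}$ or has at least one endpoint removed during some phase of the main loop. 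Since each $M_{\rm OPT}$-endpoint lies in at most one $M_{\rm OPT}$-edge, this yields the deterministic inequality
\[
|M_{\rm OPT}| \;\le\; |M^\star| \;+\; \Big|V(M_{\rm OPT})\cap \bigcup\nolimits_t R^{(t)}\Big| \;+\; \sum_{t}\sum_i \bigl|H^{(t)}_i \cup F^{(t)}_i\bigr|,
\]
where $t$ ranges over the $T_{\rm total}=O(\log n)$ emulated phases of the main loop.

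I will bound each right-hand term by $O(E[|M|])+o(|M_{\rm OPT}|)$. First, let $\EE$ be the event that (i) the conclusions of Lemma~\ref{lem:good_events} hold, and (ii) no subgraph is dropped by the size-check in Line~\ref{line:check-graph-size}; combining Lemma~\ref{lem:good_events} with a union bound over iterations of Lemma~\ref{lem:no_overflow} gives $\Pr[\neg\EE]=O(n^{-3})$. Under $\EE$, every parallel call to \localPhase{} can be viewed as a phase of \globalPhase{} on a $(200\ln n)^{-1}$-near uniform independent distribution, and by Lemma~\ref{lem:threshold_lb} the threshold lower bound required by Lemma~\ref{lem:modified_phase_matching_size} is satisfied. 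Summing that lemma over all $T_{\rm total}$ phases bounds the third term above by $1200\,E[|M_{\rm loop}|]+O(n^{-8})$ in expectation. For the first term, Lemma~\ref{lem:global_alg_expectation} applied to the final \globalAlgorithm{} call, together with the vertex-removal bound on $M^\star$ used in the proof of Corollary~\ref{cor:global_alg_is_good}, yields $E[|M^\star|]\le 50\,E[|M_{\rm final}|]$.

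The delicate term is the reference-set contribution, because Lemma~\ref{lem:modified_phase_matching_size} deliberately does not bound $|R_i|$: the naive estimate $E[|R^{(t)}|]\le \prR\,|V^{(t)}|$ summed over phases yields $O(n\prR T_{\rm total})=O(n/10^6)$, which can dwarf $|M_{\rm OPT}|$ on sparse instances. The fix is to restrict attention to $R$-vertices that actually belong to $V(M_{\rm OPT})$: because any alive vertex enters $R^{(t)}$ independently with probability $\prR=(10^6\log n)^{-1}$, a union bound over the $T_{\rm total}=O(\log n)$ phases gives $\Pr\bigl[v\in\bigcup_t R^{(t)}\bigr]\le \prR T_{\rm total}=O(1/10^6)$ for every $v$, and summing over $v\in V(M_{\rm OPT})$ yields $E\bigl[|V(M_{\rm OPT})\cap\bigcup_t R^{(t)}|\bigr]=o(|M_{\rm OPT}|)$.

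Plugging these three bounds into the deterministic inequality (and absorbing the contribution from $\neg\EE$, which is at most $n\Pr[\neg\EE]=O(n^{-2})$, hence $o(|M_{\rm OPT}|)$ whenever $|M_{\rm OPT}|\ge 1$) gives $|M_{\rm OPT}|\le C\cdot E[|M|]$ for an absolute constant $C$. Since $|M|\in[0,|M_{\rm OPT}|]$ almost surely, applying reverse Markov to the nonnegative random variable $|M_{\rm OPT}|-|M|$ then produces $\Pr[|M|\ge |M_{\rm OPT}|/(2C)]\ge 1/(2C)=\Omega(1)$, as required. I expect the main obstacle to be precisely this reference-set accounting: Lemma~\ref{lem:modified_phase_matching_size} intentionally omits $R_i$ from its matching-size guarantee because a reference set can be much larger than any actual matching, so the analysis must exploit the explicit Bernoulli-$\prR$ sampling structure to compare the relevant reference-set intersection directly with $|M_{\rm OPT}|$ rather than with $n$.
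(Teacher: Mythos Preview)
Your proposal is correct and follows essentially the same route as the paper's proof: decompose the loss of $M_{\rm OPT}$-edges into reference-set hits, $H\cup F$ removals, and what survives to the final \globalAlgorithm{} call; bound the $H\cup F$ part via Lemma~\ref{lem:modified_phase_matching_size}, the final part via Lemma~\ref{lem:global_alg_expectation}, and the reference-set part directly by $O(\prR\cdot\log n\cdot|M_{\rm OPT}|)$; absorb the failure events of Lemmas~\ref{lem:good_events} and~\ref{lem:no_overflow}; and finish with an averaging argument. The paper's proof does exactly this, with the same key observation that the reference-set contribution must be compared against $|V(M_{\rm OPT})|$ rather than $n$.
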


\begin{proof}
By combining Lemma~\ref{lem:no_overflow} and Lemma~\ref{lem:good_events}, we learn that with probability at least $1- n \cdot n^{-8} - n^{-3} \ge 1 - 2n^{-3}$, we obtain a few useful properties. First, all relevant distributions corresponding to iterations of \globalPhase{}
are independent and $(200 \ln n)^{-1}$-near uniform. Second, the maximum degree in the graph induced by vertices still under consideration is bounded by $\frac{3}{2}\threshold$ before and after every simulated execution of \globalPhase{}, where $\threshold$ is the corresponding. As a result, no vertex is deleted in Lines~\ref{line:check-graph-size} or \ref{line:remove-high-degree-vertices} due to the security checks. 

We now use Lemma~\ref{lem:modified_phase_matching_size} to lower bound the expected size of the matching created in every \globalPhase{} simulation. Let $\phases_\star$ be the number of phases we simulate this way. We have $\phases_\star \le \log n$. Let $\mathbf{H}_j$, $\mathbf{F}_j$, and $\mathbf{M}_j$ be random variables equal to the total size of sets $H_i$, $F_i$, and $M_i$ created in the $j$-th phase. If the corresponding distribution in the $j$-th phase is near uniform and the maximum is bounded, Lemma~\ref{lem:modified_phase_matching_size} yields
\[
\Exp\left[\mathbf{H}_j + \mathbf{F}_j\right] \le n^{-9} + 1200 \cdot \Exp\left[\mathbf{M}_j \right],
\]
i.e., 
\[\Exp\left[\mathbf{M}_j \right] \ge \frac{1}{1200}\left(\Exp\left[\mathbf{H}_j + \mathbf{F}_j\right] -  n^{-9} \right). \]
Overall, without the assumption that the conditions of Lemma~\ref{lem:modified_phase_matching_size} are always met, we obtain a lower bound
\[
\sum_{j\in[\phases_\star]}\Exp\left[\mathbf{M}_j \right] \ge \sum_{j\in[\phases_\star]}
\frac{1}{1200}\left(\Exp\left[\mathbf{H}_j + \mathbf{F}_j\right] -  n^{-9} \right)
- 2n^{-3} \cdot \frac{n}{2},
\]
in which we consider the worst case scenario that we lose as much as $n/2$ edges in the size of the constructed matching when the unlikely negative events happen.
\parAlg{} continues the construction of a matching by directly simulating the global algorithm. Let $\phases'_\star$ be the number of phases in that part of the algorithm. We define 
$\mathbf{H}'_j$, $\mathbf{F}'_j$, and $\mathbf{M}'_j$, for $j \in [\phases'_\star]$, to be random variables equal to the size of sets $H$, $F$, and $\widetilde M$ in \globalAlgorithm{} in the $j$-th phase of the simulation. By Lemma~\ref{lem:global_alg_expectation}, we have
\[
\sum_{j\in[\phases'_\star]}\Exp\left[\mathbf{M}'_j \right] \ge \sum_{j\in[\phases'_\star]}
\frac{1}{50}\left(\Exp\left[\mathbf{H}'_j + \mathbf{F}'_j\right]\right).
\]
By combining both bounds we obtain a lower bound on the size of the constructed matching.
Let 
\[\mathbf{M}_\star \eqdef \sum_{j\in[\phases_\star]}\Exp\left[\mathbf{M}_j \right]
+
\sum_{j\in[\phases'_\star]}\Exp\left[\mathbf{M}'_j \right]
\]
be the expected matching size, and let 
\[\mathbf{V}_\star \eqdef 
\sum_{j\in[\phases_\star]} \Exp\left[\mathbf{H}_j + \mathbf{F}_j\right]
+
\sum_{j\in[\phases'_\star]} \Exp\left[\mathbf{H}'_j + \mathbf{F}'_j\right].
\]
We have
\[
\mathbf{M}_\star \ge \frac{1}{1200}\mathbf{V}_\star - \frac{1}{n^2}.
\]
Consider a maximum matching $M_{\rm OPT}$. At the end of the algorithm, the graph is empty. The expected number of edges in $M_{\rm OPT}$ incident to a vertex in one of the reference sets is bounded by $|M_{\rm OPT}| \cdot 2 \prR \cdot \log n \le 10^{-5}|M_{\rm OPT}|$. The expected number of edges removed by the security checks is bounded by $\frac{n}{2} \cdot n^{-3}$. Hence the expected number of edges in $M_{\rm OPT}$ deleted as incident to vertices that are heavy or are friends is at least $(1-10^{-5})|M_{\rm OPT}| - 1/(2n^2)$. Since we can assume without the loss of generality that the graph is non-empty, it is at least $\frac{1}{2}|M_{\rm OPT}|$. Hence $\mathbf{V}_\star \ge \frac{1}{2}|M_{\rm OPT}|$, and $\mathbf{M}_\star \ge \frac{1}{2400}|M_{\rm OPT}| - \frac{1}{n^2}.$ For sufficiently large $n$ (say, $n \ge 50$), $\mathbf{M}_\star \ge \Omega\left(|M_{\rm OPT}|\right)$ and by an averaging argument, \parAlg{} has to output an $O(1)$-multiplicative approximation to the maximum matching with $\Omega(1)$ probability.
For smaller $n$, it is not difficult to show that at least one edge is output by the algorithm with constant probability as long as it is not empty.
\end{proof}

Finally, we want to argue that the above procedure can be used to compute $2+\epsilon$ approximation to maximum matching at the cost of increasing the 
running time by a factor of $\log(1/\eps)$. The idea is to; execute algorithm $\parAlg{}$ to compute constant approximate matching; remove this matching
from the graph; and repeat. 

\begin{corollary}
\label{corollary:2_apx}
Let $M_{\rm OPT}$ be an arbitrary maximum matching in a graph $G$. 
For any $\eps > 0$,
executing \parAlg{} on $G$ and removing a constructed matching repetitively, $O(\log(1/\eps))$ times, finds a multiplicative $(2+\eps)$-approximation to maximum matching, with $\Omega(1)$ probability.
\end{corollary}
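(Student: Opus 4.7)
The plan is to iterate \parAlg{} on successive residual graphs and show that the gap between the accumulated matching and half the optimum contracts geometrically. The key structural fact to invoke is that for any matching $M$ in a graph $G$, the maximum matching in the residual graph $G-V(M)$ has size at least $|M_{\rm OPT}|-2|M|$, since the $2|M|$ vertices of $V(M)$ together cover at most that many edges of $M_{\rm OPT}$.

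Writing $M_i$ for the matching accumulated after $i$ invocations of \parAlg{} (with $M_0=\emptyset$), $G_i\eqdef G-V(M_i)$, and $|M^*_i|$ for the size of a maximum matching in $G_i$, the residual observation yields $|M^*_i|\ge |M_{\rm OPT}|-2|M_i|$. By Theorem~\ref{theorem:constant_apx} applied to $G_i$, the call $\parAlg(G_i,S)$ returns a matching of size at least $c|M^*_i|$ with probability at least $p$, for some absolute constants $c,p\in(0,1)$; because \parAlg{} uses fresh randomness, this guarantee holds conditional on any realization of $M_i$. Call iteration $i$ \emph{successful} when this event occurs. Introducing the potential $Y_i\eqdef \max\{0,\,|M_{\rm OPT}|/2-|M_i|\}$, I observe that $Y_{i+1}\le Y_i$ always (matchings only grow), and that on a successful iteration $|M_{i+1}|\ge |M_i|+c(|M_{\rm OPT}|-2|M_i|)=|M_i|+2cY_i$, hence $Y_{i+1}\le(1-2c)Y_i$ (assuming $c<1/2$ without loss of generality). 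After $s$ successful iterations one therefore has $Y_s\le(1-2c)^s\cdot|M_{\rm OPT}|/2$, and a direct computation shows that $s=\Theta(\log(1/\eps))$ successes suffice to force $|M_s|\ge|M_{\rm OPT}|/(2+\eps)$.

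To bound the number of iterations actually required and complete the proof, I would appeal to a Chernoff bound. The success indicators across iterations are conditionally lower-bounded by $p$, so the count of successes in $k$ iterations stochastically dominates a $\mathrm{Binomial}(k,p)$ random variable; choosing $k=\Theta(\log(1/\eps))$ with a sufficiently large constant (depending on $p$), this count exceeds the required $s=\Theta(\log(1/\eps))$ with probability at least a positive absolute constant independent of $\eps$. The subtle step, and the main obstacle, is precisely this probability amplification: a direct Markov-type conversion of the expected bound $\Exp[|M_k|]\ge|M_{\rm OPT}|/(2+\Theta(\eps))$ only yields success probability of order $\eps$, because the event we care about is a lower tail of $|M_k|$ whose mean is already a constant fraction of its maximum possible value $|M_{\rm OPT}|$. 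Exploiting the independence between successive \parAlg{} calls to reason about the number of successful rounds, rather than treating $|M_k|$ as a single random variable to tail-bound at the end, is what makes the amplification go through.
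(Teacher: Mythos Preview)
Your proof is correct and follows essentially the same approach as the paper: iterate \parAlg{} on the residual graph, show each successful call shrinks the gap to a $2$-approximation by a constant factor (you phrase this via the potential $Y_i=\max\{0,|M_{\rm OPT}|/2-|M_i|\}$, the paper tracks the remaining optimum directly), and then argue that $O(\log(1/\eps))$ iterations contain enough successes with constant probability. The only cosmetic difference is in the amplification step---you invoke stochastic domination plus a Chernoff bound, while the paper appeals to the median of a binomial distribution; both arguments are valid and yield the same $\Omega(1)$ success probability.
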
 
\begin{proof}
Assume that the \parAlg{} succeeds with probability $p$ and computes $c$-approximate matching. Observe that each successful 
execution of \parAlg{} finds a matching $M_c$ of size at least $\frac{1}{c}|M_{\rm OPT}|$. Removal of $M_c$ from the graph 
decreases the size of optimal matching by at least $\frac{1}{c}|M_{\rm OPT}|$ and at most by $\frac{2}{c}|M_{\rm OPT}|$, because each edge of $M_c$ can be incident to at most
two edges of $M_{\rm OPT}$. Hence, when the size of the remaining matching drops to at most $\eps |M_{\rm OPT}|$, we have an 
$2+\eps$-multiplicative approximation to maximum matching constructed. The number $t$ of successful applications of \parAlg{} need to satisfy.
\[
\left(1-\frac{1}{c}\right)^t \le \eps.
\]
This gives $t=O(\log(1/\eps))$. In $\lceil t/p \rceil=O(\log(1/\eps))$ executions,
we have $t$ successes with probability at least $1/2$ by the properties of the median of the binomial distribution. 
\end{proof}

\section{MPC Implementation Details}
\label{sec:complexity-and-implementation}

In this section we present details of an MPC implementation of our algorithm. We also analyze its round and space complexity.
In the description we heavily use some of the subroutines described in~\cite{goodrich2011sorting}. While the model used there is different, the properties of the distributed model used in~\cite{goodrich2011sorting} also hold in the MPC model. Thus, the results carry over to the MPC model.

The results of~\cite{goodrich2011sorting} allow us to sort a set $A$ of $O(N)$ key-value pairs of size $O(1)$ and for every element of a sorted list, compute its index. Moreover, we can also do a parallel search: given a collection $A$ of $O(N)$ key-value pairs and a collection of $O(N)$ queries, each containing a key of an element of $A$, we can annotate each query with the corresponding key-value pair from $A$. Note that multiple queries may ask for the same key, which is nontrivial to parallelize.
If $S = n^{\Omega(1)}$, all the above operations can be implemented in $O(1)$ rounds.

The search operation allows us to broadcast information from vertices to their incident edges. Namely, we can build a collection of key-value pairs, where each key is a vertex and the value is the corresponding information. Then, each edge $\{u,v\}$ may issue two queries to obtain the information associated with $u$ and $v$.

\newcommand{\discard}{\mbox{\tt discarded}}
\newcommand{\matched}{\mbox{\tt matched}}

\subsection{$\globalAlgorithm$}
\label{sec:implementation-of-centralized-algorithm}

We first show how to implement $\globalAlgorithm$, which is called in Line~\ref{line:simulate_global_alg} of $\parallelAlgorithm$.

\begin{lemma}
\label{lemma:globalAlgorithm-implementation}
Let $S = n^{\Omega(1)}$. There exists an implementation of $\globalAlgorithm$ in the MPC model, which with high probability executes $O(\ln \widetilde\threshold)$ rounds and uses $O(S)$ space per machine.
\end{lemma}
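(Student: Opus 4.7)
The plan is to emulate each of the $O(\log \widetilde\threshold)$ iterations of the main loop of $\globalAlgorithm$ in $O(1)$ MPC rounds, so that summing across phases yields the claimed $O(\ln \widetilde\threshold)$ round bound. Throughout, I will represent the current $V'$ and the current surviving edge list of $G[V']$ as key--value pair collections spread across the machines; every per-phase manipulation will then reduce to a constant number of invocations of the sorting and parallel--search primitives of~\cite{goodrich2011sorting}, each of which completes in $O(1)$ MPC rounds when $S = n^{\Omega(1)}$. These primitives are precisely what one needs to perform ``joins'' between a vertex-keyed table and an edge-keyed table, broadcast per-vertex labels to all incident edges, and aggregate values grouped by a common key.

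Within a single phase, the substeps of $\globalAlgorithm$ map onto these primitives as follows. The degrees $|N(v)\cap V'|$ are obtained by first flagging each edge as ``surviving'' via a parallel search that joins the edge list with $V'$, then emitting one key--value pair per endpoint of every surviving edge and sort-aggregating by vertex id; a subsequent local threshold test produces $H$. The heavy indicator is broadcast back to the edges by another parallel search, after which $|N(v)\cap H|$ for every $v\in V'$ is computed by the same sort-and-aggregate pattern, and each vertex independently samples itself into $F$ with the prescribed probability. To emulate $\matchHeavy(H,F)$, every $v\in F$ receives $N(v)\cap H$ via a parallel search, locally draws $v_\star$ uniformly at random, and draws its own color; the set $E_\star$ is extracted by joining the collection of $(v,v_\star)$ pairs with the table of colors; and $\widetilde M$ is obtained by sorting $E_\star$ by its blue endpoint and keeping the first edge in each group. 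The loop epilogue $V' \leftarrow V' \setminus (H\cup F)$ and $\threshold \leftarrow \threshold/2$ is a purely local bookkeeping step that can be folded into the same sort.

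The remaining concern is feasibility with $O(S)$ memory per machine and the high-probability qualifier. Since $H$, $F$, $E_\star$, and $\widetilde M$ are all subsets of $V$ or $E$, and since the edge list only shrinks from phase to phase, at no point does a single machine need to hold more than $O(S)$ words, provided the input is initially distributed across $O((n+|E|)/S)$ machines---which is exactly the standard MPC input distribution. The main technical point is then simply to invoke~\cite{goodrich2011sorting}: each of the primitives above, applied to an instance of size $\poly(n)$, succeeds within $O(1)$ rounds with probability at least $1 - n^{-c}$ for any desired constant $c$. A union bound over the $O(\log \widetilde\threshold) \cdot O(1)$ primitive invocations keeps the cumulative failure probability inverse polynomial in $n$, so the $O(\ln \widetilde\threshold)$ total round count holds with high probability, and the $O(S)$-per-machine space budget is respected throughout.
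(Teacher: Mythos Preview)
Your overall decomposition is correct and essentially mirrors the paper's proof: each phase of \globalAlgorithm{} is emulated in $O(1)$ MPC rounds using the sort and parallel--search primitives of~\cite{goodrich2011sorting}, degrees and heavy--neighbor counts are obtained by a sort--and--index (or sort--aggregate) pattern, and the random coloring and the final ``keep one edge per blue vertex'' steps are handled exactly as you describe.

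There is, however, one step that does not go through as written. You say that ``every $v \in F$ receives $N(v)\cap H$ via a parallel search, locally draws $v_\star$ uniformly at random.'' Gathering the entire set $N(v)\cap H$ at the location of $v$ can violate the $O(S)$-per-machine budget: a single vertex may have up to $\widetilde\threshold$ heavy neighbors, and nothing in the lemma's hypotheses forces $\widetilde\threshold \le S$. (Indeed, even in the specific context of Line~\ref{line:simulate_global_alg}, the degree bound $2\threshold$ can be of order $(n/S)\polylog n$, which may exceed $S$.) The paper avoids this by never materializing $N(v)\cap H$ at one place: it builds the sorted sequence $Q_H$ with sentinels $(v,-\infty)$ and $(v,+\infty)$, so each $v\in F$ already knows $|N(v)\cap H|$ from the index difference; then $v$ samples a random offset $r_v \in [1,|N(v)\cap H|]$ locally and issues a single index lookup into $Q_H$ at position $\mathrm{index}(v,-\infty)+r_v$ to retrieve $v_\star$. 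This is $O(1)$ words per vertex and fits the primitives you already invoke. Replacing your gathering step by this index--offset trick closes the gap, and the rest of your argument then matches the paper's.
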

\begin{proof}
We first describe how to solve the following subproblem.
Given a set $X$ of marked vertices, for each vertex $v$ compute $|N(v) \cap X|$.
When all vertices are marked, this just computes the degree of every vertex.

The subproblem can be solved as follows. Create a set $A_X = \{(u, v) \mid u \in V, v \in X, \{u, v\} \in E\} \cup \{(v, -\infty), (v, \infty)\ |\ v \in V\}$, and sort its elements lexicographically.
Denote the sorted sequence by $Q_X$. Then, for each element of $A_X$ compute its index in $Q_A$.

Note that $|N(v) \cap X|$ is equal to the number of elements in $Q_X$ between $(v, -\infty)$ and $(v, \infty)$. Thus, having computed the indices of these two elements, we can compute $|N(v) \cap X|$.

\SetKw{foreach}{foreach}
Let us now describe how to implement \globalAlgorithm{}.
We can compute the degrees of all vertices, as described above.
Once we know the degrees, we can trivially mark the vertices in $H$.
The next step is to compute $F$ and for that we need to obtain $|N(v) \cap H|$,
which can be done as described above.

	After that, \globalAlgorithm{} computes a matching in $G[H \cup F]$ by calling \matchHeavy{} (see Algorithm~\ref{alg:randomized_matching}).
	In the first step, \matchHeavy{} assigns to every $v \in F$ a random neighbor $v_{\star}$ in $H$.
	This can again be easily done by using the sequence $Q_H$ (i.e. $Q_X$ build for $X = H$).
Note that for each $v \in F$ we know the number of neighbors of $v$ that belong to $H$.
Thus, each vertex $v$ can pick an integer $r_v \in [1, |N(v) \cap H|]$ uniformly at random.
Then, by adding $r_v$ and the index of $(v, -\infty)$ in $Q_H$, we obtain the index in $Q_H$, which corresponds to an edge between $v$ and its random neighbor in $H$.
The remaining lines of \matchHeavy{} are straightforward to implement.
The vertices can trivially pick their colors.
After that, the set $E_{\star}$ can be easily computed by transmitting data from vertices to their adjacent edges.
Implementing the following steps of \matchHeavy{} is straightforward.
Finally, picking the edges to be matched is analogous to the step, when for each $v \in F$ we picked a random neighbor in $H$.

Overall, each phase of $\globalAlgorithm$ (that is, iteration of the main loop) is executed in $O(1)$ rounds. Thus, by Lemma~\ref{lem:global_alg_expectation}, $\globalAlgorithm$ can be simulated in $O(\ln \widetilde\threshold)$ rounds as advertised.
\end{proof}

\subsection{Vertex and edge partitioning}
\label{sec:vertex-edge-partitioning}
We now show how to implement Line~\ref{line:vertex-partitioning} and compute the set of edges that are used in each call to \localPhase{} in Line~\ref{line:edge-partitioning} of $\parallelAlgorithm$.
Our goal is to annotate each edge with the machine number it is supposed to go to. To that end, once the vertices pick their machine numbers, we broadcast them to their adjacent edges. Every edge that receives two equal numbers $x$ is assigned to machine $x$.
\\
In the implementation we do not check whether a machine is assigned too many edges (Line~\ref{line:check-graph-size}), but rather show in Lemma~\ref{lem:no_overflow} that not too many edges are assigned with high probability.

\subsection{$\localPhase$}
\label{sec:MPC-localPhase}
We now discuss the implementation of \localPhase{}.
Observe that $\localPhase$ is executed locally. Therefore, the for loop at Line~\ref{line:edge-partitioning} of $\parAlg$ can also be executed locally on each machine. Thus, we only explain how to process the output of $\localPhase$.

Instead of returning the set of vertices and matched edges at Line~\ref{line:local-phase-return} of $\localPhase$, each vertex that should be returned is marked as $\discard$, and each matched edge is marked as $\matched$.
After that, we need to discard edges, whose at least one endpoint has been discarded. This can be done by broadcasting information from vertices to adjacent edges. Note that some of the discarded edges might be also marked as $\matched$.

\subsection{Putting all together}
Lines~\ref{line:vertex-partitioning}, \ref{line:check-graph-size} and~\ref{line:edge-partitioning} can be implemented as described in sections~\ref{sec:vertex-edge-partitioning} and~\ref{sec:MPC-localPhase}. Lines~\ref{line:V'-update} and~\ref{line:M-update} do not need an actual implementation, as by that point all the vertices that are not marked as $\discard$ constitute $V'$, and all the edges incident to $V \setminus V'$ will be marked as $\discard$. Similarly, all the matched edges will be marked as $\matched$ by the implementation of $\localPhase$. All the edges and vertices that are marked as $\discard$ will be ignored in further processing. After all the rounds are over, the matching consists of the edges marked as $\matched$.

Let $\threshold_\star$ be the value of $\threshold$ at Line~\ref{line:remove-high-degree-vertices}, and hence the value of $\threshold$ at the end of the last while loop iteration. Let $\threshold'$ be the value of $\threshold$ just before the last iteration, i.e. $\threshold_\star = \threshold' / 2^{\phases}$, for the corresponding ${\phases}$. Now consider the last call of $\localPhase$ at Line~\ref{line:edge-partitioning}. The last invocation has $\threshold' / (2^{\phases - 1})$ as a parameter. On the other hand, by Claim~\ref{claim:random_events} and Claim~\ref{claim:good_consequences} we know that after the last invocation of $\localPhase$ with high probability there is no vertex that has degree greater then $\tfrac{3}{4} \threshold' / (2^{\phases - 1}) < 2 \threshold_\star$. Therefore, with high probability there is no vertex that should be removed at Line~\ref{line:remove-high-degree-vertices}, and hence we do not implement that line either.

An implementation of Line~\ref{line:simulate_global_alg} is described in Section~\ref{sec:implementation-of-centralized-algorithm}. Finally, we can state the following result.
\begin{lemma}
There exists an implementation of $\parallelAlgorithm$ in the MPC model that with high probability executes $O\left((\log \log n)^2 + \max{\left(\log{\tfrac{n}{S}}, 0 \right)} \right)$ rounds.
\end{lemma}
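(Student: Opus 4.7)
The plan is to decompose the total round count into (a) the cost of executing one iteration of the main \texttt{while} loop of $\parAlg$, (b) the number of iterations of that loop, and (c) the cost of the final direct simulation of $\globalAlgorithm$ in Line~\ref{line:simulate_global_alg}. For (a), each primitive used inside one iteration (random partition in Line~\ref{line:vertex-partitioning}, broadcasting machine numbers to incident edges in order to route edges, the parallel local block in Lines~\ref{line:par_alg:start_par_block}--\ref{line:par_alg:end_par_block} which is purely local since every $\localPhase$ call is executed on a single machine after the edges of $G[V_i]$ are placed there, and the updates in Lines~\ref{line:V'-update}--\ref{line:next-threshold}) can be implemented in $O(1)$ MPC rounds using the sorting and broadcast primitives of~\cite{goodrich2011sorting}, exactly as described in Sections~\ref{sec:vertex-edge-partitioning} and~\ref{sec:MPC-localPhase}. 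For (c), Lemma~\ref{lemma:globalAlgorithm-implementation} gives an $O(\log \widetilde\threshold)$-round implementation of $\globalAlgorithm$, and since the loop-exit condition of Line~\ref{line:par_alg:parallel-while-condition} guarantees $\widetilde\threshold = O\bigl((n/S)\,\polylog n\bigr)$, this contributes $O(\log(n/S) + \log\log n)$ rounds.

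For (b), the main task is to solve the recurrence defining the threshold schedule. Let $\threshold_k$ denote the value of $\threshold$ at the start of iteration $k$ and set $x_k \eqdef \log(\threshold_k/\machines_k)$. From Line~\ref{line:machines-assignment} one has $\threshold_k/\machines_k = \Theta(\sqrt{\threshold_k S/n})$, so $x_k = \tfrac12 \log(\threshold_k S / n) + O(1)$. By Line~\ref{line:phases-assignment} the emulated-phase count satisfies $\phases_k = \Theta(x_k / \log\log n)$, and Line~\ref{line:next-threshold} gives $\threshold_{k+1} = \threshold_k/2^{\phases_k}$. Because $\machines_{k+1} = \Theta(\sqrt{n\threshold_{k+1}/S})$, the recursion on $x_k$ reads
\[
x_{k+1} \;=\; x_k \;-\; \tfrac12 \phases_k \;=\; x_k \Bigl(1 - \Theta\bigl(1/\log\log n\bigr)\Bigr).
\]
Thus $x_k$ shrinks geometrically with contraction factor $1 - c/\log\log n$.

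Starting from $x_0 = \Theta(\log S)$ (since $\threshold_0 = n$) and terminating once $\threshold_k < (n/S)(200\ln n)^{32}$, i.e.\ once $x_k = \Theta(\log\log n)$, the number $T$ of iterations satisfies $(1 - c/\log\log n)^T = \Theta(\log\log n / \log S)$, giving $T = \Theta(\log\log n \cdot \log(\log S / \log\log n)) = O((\log\log n)^2)$ (using $S = n^{\Omega(1)}$, hence $\log S = \Theta(\log n)$). Combining (a), (b), (c) yields the stated bound $O((\log\log n)^2 + \max(\log(n/S),0))$. The main obstacle is setting up and solving the recurrence cleanly: care is needed because $\phases_k$ depends logarithmically on $\threshold_k/\machines_k$ (not on $\threshold_k$ itself) and the base of the logarithm is $120\prHmult = \Theta(\log n)$, which is precisely what produces the $\log\log n$ factor in each contraction step and hence the square in $(\log\log n)^2$. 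A secondary, minor technicality is verifying that the ``high probability'' qualifier in the lemma statement is preserved: one must union-bound, over the $O((\log\log n)^2)$ iterations, both the event from Lemma~\ref{lem:no_overflow} that no $V_i$ overflows (so no machine receives more than $8S$ edges and the per-iteration implementation above really runs in $O(1)$ MPC rounds) and the event from Lemma~\ref{lem:good_events} that all intermediate distributions stay near-uniform and the maximum degree is $\le \frac32 \threshold$ at every stage, which then also makes Line~\ref{line:remove-high-degree-vertices} vacuous.
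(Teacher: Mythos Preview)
Your proposal is correct and follows essentially the same approach as the paper. The only cosmetic difference is the change of variable: the paper works directly with $\threshold_i$ and derives the recurrence $\threshold_i \le \threshold_{i-1}^{1-\gamma}(n/S)^{\gamma}$ for $\gamma = \Theta(1/\log\log n)$, then unrolls it to $\threshold_i \le n^{(1-\gamma)^i}(n/S)^{1-(1-\gamma)^i}$, whereas you pass to $x_k = \log(\threshold_k/\machines_k) = \tfrac12\log(\threshold_k S/n)+O(1)$ and obtain the equivalent geometric contraction $x_{k+1} = x_k(1-\Theta(1/\log\log n))$; both yield $O((\log\log n)^2)$ iterations, and your treatment of the per-iteration cost, the final $\globalAlgorithm$ simulation, and the high-probability union bound matches the paper's.
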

\begin{proof}
	In the proof we analyze the case $S \leq n$. Otherwise, for the case $S > n$, we think of each machine being split into $\lfloor S/n \rfloor$ "smaller" machines, each of the smaller machines having space $n$.

	We will analyze the number of iterations of the while loop $\parAlg$ performs. Let $\threshold_i$ and $\phases_i$ be the value of $\threshold$ and $\phases$ at the end of iteration $i$, respectively. Then, from Line~\ref{line:machines-assignment} and Line~\ref{line:phases-assignment} we have
	\[
		\phases_i = \left\lceil \frac{1}{16}\log_{120\prHmult}\left(\threshold_{i - 1} /\machines\right) \right\rceil \ge \frac{1}{16}\log_{120\prHmult}\left(\threshold_{i - 1} /\machines\right) \ge \frac{1}{16}\log_{120\prHmult}\sqrt{\frac{S \threshold_{i - 1}}{n}}.
	\]
	Define $\gamma := \tfrac{1}{32 \log_{2} 120\prHmult}$. By plugging in the above bound on $\phases_i$, from Line~\ref{line:next-threshold}, we derive
	\begin{equation}\label{eq:threshold-i-recurence}
		\threshold_i = \threshold_{i-1}\cdot 2^{\phases_i} \le \threshold_{i - 1} \cdot 2^{-\frac{1}{16}\log_{120\prHmult}\sqrt{\frac{S \threshold_{i - 1}}{n}}} = \threshold_{i - 1} \cdot 2^{-\frac{\log_{2}\frac{S \threshold_{i - 1}}{n}}{32 \log_{2} 120\prHmult}} = \threshold_{i - 1}^{1 - \gamma} \left(\frac{n}{S}\right)^{\gamma}
	\end{equation}

	To obtain the number of iterations the while loop of $\parAlg$ performs, we derive for which $i \ge 1$ the condition at Line~\ref{line:par_alg:parallel-while-condition} does not hold.
	
	Unraveling $\threshold_{i - 1}$ further from~\eqref{eq:threshold-i-recurence} gives
	\begin{equation}\label{eq:threshold-i-bound}
		\threshold_i \le \threshold_0^{(1 - \gamma)^i} \left(\frac{n}{S}\right)^{\gamma \sum_{j = 0}^{i - 1}(1 - \gamma)^j} \le n^{(1 - \gamma)^i} \left(\frac{n}{S}\right)^{\gamma \frac{1 - (1 - \gamma)^i}{1 - (1 - \gamma)}} = n^{(1 - \gamma)^i} \left(\frac{n}{S}\right)^{1 - (1 - \gamma)^i}
	\end{equation}
	Observe that $(c \log \log n)^{-1} \le \gamma \le (32 \log \log n)^{-1} < 1/2$, for an absolute constant $c$ and $n \ge 4$.

	For $S \le n$ and as $\gamma < 1/2$ we have
	\begin{equation}\label{eq:n-S-bound}
		\left(\frac{n}{S}\right)^{1 - (1 - \gamma)^i} \le \frac{n}{S}.
	\end{equation}
	On the other hand, for $\istar = \tfrac{\log \log n}{\gamma} \le c (\log \log n)^2$ we have
	\begin{equation}\label{eq:n-istar-bound}
		n^{(1 - \gamma)^\istar} < \log{n}.
	\end{equation}
	Now putting together \eqref{eq:threshold-i-bound}, \eqref{eq:n-S-bound}, and \eqref{eq:n-istar-bound} we conclude
	\[
		\threshold_\istar < \frac{n}{S} \ln{n},
	\]
	and hence the number of iteration the while loop of $\parAlg$ performs is $O\left((\log \log n)^2\right)$.
			
	\paragraph{Total round complexity.} Every iteration of the while loop can be executed in $O(1)$ MPC rounds with probability at least $1 - 1/n^3$. Since there are $O\left((\log \log n)^2\right)$ iterations of the while loop, all the iterations of the loop can be performed in $O\left((\log \log n)^2\right)$ many rounds with probability at least $1 - 1/n^2$.
	
	On the other hand, by Lemma~\ref{lemma:globalAlgorithm-implementation} and the condition at Line~\ref{line:par_alg:parallel-while-condition} of $\parAlg$, the computation of Line~\ref{line:simulate_global_alg} of $\parAlg$ can be performed in $O\left(\log{\left(\tfrac{n}{S} (\ln n)^{32}\right)}\right)$ rounds. Putting the both bounds together we conclude that the round complexity of $\parAlg$ is $O\left((\log \log n)^2 + \log{\tfrac{n}{S}} \right)$ for the case $S \le n$. For the case $S > n$ (recall that in this regime we assume that each machine is divided into machines of space $n$) the round complexity is $O\left((\log \log n)^2\right)$.
\end{proof}

\subsection*{Acknowledgments}
We thank Sepehr Assadi, Mohsen Ghaffari, Cameron Musco, Christopher Musco, Seth Pettie, and Govind Ramnarayan for helpful discussions and comments.
A.~Czumaj was supported in part by the Centre for Discrete Mathematics and its Applications (DIMAP),
Royal Society International Exchanges Scheme 2013/R1, 
IBM Faculty Award, and
the EPSRC award EP/N011163/1.
A.~Mądry was supported in part by an Alfred P.~Sloan Research Fellowship, Google Research Award, and the NSF grant CCF-1553428.
S.~Mitrovi{\' c} was supported in part by the Swiss NSF grant P1ELP2\_161820. 
P.~Sankowski was partially supported by grant NCN2014/13/B/ST6/00770 of Polish National Science Center and ERC StG grant TOTAL no.\ 677651.
\bibliographystyle{alpha}
\bibliography{bibliography}

%\appendix
%\input{1320-partitioned-analysis}
%\input{figures}

\end{document}